\newtheorem{corollary}{\bf Corollary}[section]
\newtheorem{lemma}{\bf Lemma}[section]
\newtheorem{proposition}{\bf Proposition}[section]
\newtheorem{definition}{\bf Definition}[section]
\newtheorem{remark}{\bf Remark}[section]
\renewcommand{\O}[0]{\mathcal{O}}
\newcommand{\R}[0]{\mathbb{R}}
\newcommand{\eps}[0]{\varepsilon}
\newcommand{\grad}[0]{\nabla}
\newcommand{\lapof}[0]{\nabla^2}
\newcommand{\divof}[0]{\nabla \cdot}
\newcommand{\abs}[1]{\left|#1\right|}
\newcommand{\br}[1]{\left(#1\right)}
\newcommand{\sbr}[1]{\left[#1\right]}
\newcommand{\pd}[2]{\frac{\partial #1}{\partial #2}}
\renewcommand{\O}[0]{\mathcal{O}}
\newcommand{\nhat}[0]{\widehat{n}}
\newcommand{\that}[0]{\widehat{t}}
\renewcommand{\hat}[0]{\widehat}
\newcommand{\Kbar}[0]{\overline{K}}
\newcommand{\Kabs}[0]{\abs{K}}
\newcommand{\Khat}[0]{\widehat{K}}
\newcommand{\Jhat}[0]{\widehat{J}}
\newcommand{\sgrad}[0]{\nabla_{\!  \bot}}
\newcommand{\grads}[0]{\nabla^\bot}
\newcommand{\sdiv}[0]{\nabla_\bot \cdot}
\newcommand{\divs}[0]{\nabla^\bot \cdot}
\newcommand{\slap}[0]{\Delta_\bot}
\newcommand{\ds}[0]{\partial_\sigma}
\newcommand{\cs}[0]{\operatorname{c}}
\newcommand{\sn}[0]{\operatorname{s}}
\newcommand{\A}[0]{\mathcal{A}}
\newcommand{\B}[0]{\mathcal{B}}
\newcommand{\C}[0]{\mathcal{C}}
\begin{document}

\title{Orthogonal signed-distance coordinates and vector calculus near evolving curves and surfaces}

\author{Eric W. Hester\thanks{Department of Mathematics, The University of California Los Angeles, Los Angeles, 90095, CA, USA}
\and
Geoffrey M. Vasil\thanks{School of Mathematics and Maxwell Institute for Mathematical Sciences, The University of Edinburgh, Edinburgh EH9 3JZ, United Kingdom}}

\maketitle

\begin{abstract}
We provide an elementary derivation of an orthogonal coordinate system for boundary layers around evolving smooth surfaces and curves based on the signed-distance function.
We go beyond previous works on the signed-distance function and collate useful vector calculus identities for these coordinates.
These results and provided code enable consistent accounting of geometric effects in the derivation of boundary layer asymptotics for a wide range of physical systems.
\end{abstract}

\section{Introduction and Background}

This paper develops a coordinate system for understanding the vicinities of smooth evolving surfaces and lines in three dimensions.
In particular, we leverage the \emph{signed-distance} function, as applied in level-set and related methods in \cite{OsherFrontsPropagatingCurvaturedependent1988, OsherLevelSetMethods2001, OsherLevelsetmethods2003, DeckelnickComputationgeometricpartial2005}.
We believe we are the first to provide a complete elementary derivation of full vector calculus of this orthogonal coordinate system, which (e.g.) is essential to describing the vector Laplacian or advective derivative commonly arising in fluid mechanics. 


Beginning with surfaces, Euler's investigations of principal curvatures \cite{EulerRecherchescourburesurfaces1767} motivated a tremendous number of problems in mathematics, including Gauss' classic Theorema Egregium \cite{GaussDisquisitionesgeneralescirca1828},
Christoffel and his eponymous symbols \cite{christoffelUeberTransformationHomogenen1869}, 
Ricci and Levi-Civita's tensor calculus \cite{RicciMethodescalculdifferentiel1900}, 
Darboux's trihedron \cite{DarbouxLeconstheoriegenerale2001}, 
and Cartan's method of moving frames \cite{Cartandeformationhypersurfacesdans1917} and differential forms \cite{Cartancertainesexpressionsdifferentielles1899}.
Carmo's textbook \cite{CarmoDifferentialgeometrycurves1976},
or Spivak's monumental five-volume work \cite{SpivakComprehensiveIntroductionDifferential1999} both provide good introductions to this vast field, including some of the histories.
After understanding the interface, researchers examined neighbourhoods of surfaces, with Weyl's seminal paper,  \cite{WeylVolumeTubes1939} calculating volumes of tubular neighbourhoods for statistical applications \cite{HotellingTubesSpheresnSpaces1939}.
The importance and utility of these neighbourhoods spawned much subsequent research in their application to the analysis of partial differential equations \cite{FedererCurvaturemeasures1959, Serrinproblemdirichletquasilinear1969, GrayTubes2004, GigaSurfaceevolutionequations2006, GilbargEllipticPartialDifferential2015}.

Unfortunately, these powerful tools are also quite complex.
The advanced state of modern differential geometry often inhibits physicists, engineers and newcomers from leveraging this understanding for tangible benefit.
Many practitioners are often more familiar with Gibbs' vector calculus \cite{WilsonVectorAnalysisTextbook1901} and straightforward orthogonal coordinate transforms \cite{BatchelorIntroductionFluidDynamics2000}.
Notably, Prandtl founded fluid boundary-layer theory using mathematics reasonably familiar to modern applied-maths undergraduates \cite{PrandtlVerhandlungenDrittenInternationalen1905}.
Several early papers built on this work to investigate three-dimensional boundary layers around general curved surfaces \cite{HowarthXXVBoundaryLayer1951, HayesThreedimensionalBoundaryLayer1951,
MooreThreeDimensionalBoundaryLayer1956, MagerThreeDimensionalLaminarBoundary1954,
StewartsonAsymptoticExpansionsTheory1957}, spawning yet further research 
\cite{SedneyAspectsThreeDimensionalBoundary1957, CookeBoundarylayersthree1962,EichelbrennerThreeDimensionalBoundaryLayers1973}.
However, managing the complexity of Navier-Stokes equations in curvilinear coordinates requires judicious approximations  \cite{DykePerturbationmethodsfluid1964, KevorkianPerturbationMethodsApplied1981, BenderAdvancedMathematicalMethods1999, HolmesIntroductionperturbationmethods2013}.
Moreover, the technical details can become byzantine, and mistakes can arise without a systematic bookkeeping method; with important terms dropped (e.g.~equation 5a of \cite{Greshopressureboundaryconditions1987}), or the assumption of coordinate systems which cannot exist (contradicting Dupin's theorem \cite{EichelbrennerThreeDimensionalBoundaryLayers1973}).
Only more recently has analysis of boundary layers in general geometries been made, beginning with arbitrary curves in the plane.

An early two-dimensional example of analysis of boundary layers in general geometries is \cite{EckhausBoundaryLayersLinear1972}. 
These methods were gradually extended to properties of the signed-distance function in three dimensions \cite{AmbrosioGeometricEvolutionProblems2000}.
Only very recently have applications to boundary layers for arbitrary objects begun \cite{CiarletIntroductionDifferentialGeometry2005, GieAsymptoticAnalysisStokes2010, GieBoundaryLayerAnalysis2012, GieRecentProgressesBoundary2015}.
However, even these most recent analyses lack three key features:
\begin{enumerate}
	\item A straightforward procedure to calculate arbitrary tensor calculus quantities (with some progress made in \cite{HesterImprovingAccuracyVolume2021}).
	\item Consideration of evolving geometries (as was done for melting and dissolving phase field models in \cite{HesterImprovedPhasefieldModels2020}).
	\item A description of neighbourhoods of curves (as would arise from the intersection of smooth surfaces or at contact lines for multiphase problems in three dimensions).
\end{enumerate}

On the last point, we note that the differential geometry of curves boasts a similarly esteemed history, beginning with Frenet \cite{FrenetCourbesDoubleCourbure1852} and Serret's \cite{SerretQuelquesFormulesRelatives1851} derivations of the Frenet-Serret formulas for describing geometric invariants of curves.
But (as we show in \cref{sec:frenet}), this convenient basis for the geometry of curves does not generate an orthogonal coordinate system (unlike the principle curvature basis for surfaces).
Instead, by offsetting the torsion of the frame with a counter-rotation, it is possible to derive \emph{relatively parallel adapted frames} \cite{BishopThereMoreOne1975}, which we build on to generate an orthogonal coordinate system in the curve neighbourhood.
This insight was also explored in \cite{DaiCompetitiveGeometricEvolution2015} for generalised Cahn-Hilliard equations and \cite{PandaDynamicsViscousFibers2006, PandaSystematicDerivationAsymptotic2008} for boundary-layer equations for slender fluid jets.
The latter work was unfairly criticised in \cite{ShikhmurzaevSpirallingLiquidJets2017}, which failed to recognise the role of rotation in ensuring orthogonality of the coordinates and instead derived more complicated analysis for non-orthogonal Frenet-Serret coordinates.
However these analyses while impressive did not provide a complete derivation suitable for arbitrary vector calculus quantities.

We aim to strike a balance between generality and simplicity.
We go beyond common treatments of the signed-distance function \cite{GigaSurfaceevolutionequations2006, GilbargEllipticPartialDifferential2015, MayostApplicationssigneddistance2014} to develop concise expressions of key differential operators (divergence, gradient, curl, scalar and vector Laplacian, and Cartesian partial time derivatives) without approximation.
At the same time, we do not invoke advanced differential geometry, exploiting only the well-known vector calculus.
Our goal is concreteness. Only with all relevant terms manifest can each contribution be evaluated (e.g.) its appropriate asymptotic order in a complex boundary-layer calculation. 
Moreover, we have previously made gains by computer-algebra automation of high-order perturbation expansions \cite{HesterImprovingAccuracyVolume2021, HesterImprovedPhasefieldModels2020}.
Inputting various formulae into a computer system requires separating all features related to the signed-distance function from all other differential-geometric aspects. 

We first discuss the basic properties of the signed-distance function in \cref{sec:sdf-sdf} before revisiting standard surface differential geometry in \cref{sec:sdf-surface}.
We then reproduce previous results on the differential geometry of signed-distance coordinates \cite[sec.~14.6]{GilbargEllipticPartialDifferential2015} in \cref{sec:sdf-normal} before collating a suite of new vector calculus identities in \cref{sec:sdf-vector-calculus,sec:sdf-dt}.
We then provide asymptotic expansions of these operators valid for thin boundary layers in \cref{sec:asymptotic-surface}.
In \cref{sec:tubes}, we apply the same coordinate system to understand the vicinities of smooth evolving curves in space.
We recapitulate the Frenet-Serret frame in \cref{sec:frenet} before showing how to rotate this frame to generate orthogonal coordinates \cref{sec:sdf-coords-tube} (building on \cite{BishopThereMoreOne1975}), 
deriving new vector calculus identities in \cref{sec:vector-calculus-tube,sec:evolving-tubes},
and providing asymptotic expansions in \cref{sec:asymptotics-tube}.

\section{Signed-distance coordinates for surfaces}
\subsection{The signed-distance function}
\label{sec:sdf-sdf}
We consider a smooth manifold, $S$, embedded within three-dimensional Euclidean space.
We can locally parameterise the surface points $p(s) \in S \subset \R^3$ with surface coordinates $s = (s_1,s_2) \in \R^2$.
We relate points in space to the surface $S$ using the \emph{signed-distance}.
\begin{definition}
	The signed-distance function, $\sigma$, of a surface $S$ is the minimum distance of a point $x \in \R^3$ to a point $p \in S$, with sign according to whether it is inside ($-$) or outside ($+$) the surface.
	\begin{align}
	\sigma(x) = \pm \min_{p\in S} \| x - p\|.
	\end{align}
\end{definition}

\begin{proposition}
	The change of coordinates formula from Cartesian to signed-distance coordinates is 
    \begin{align}\label{eq:sdf}
    x = p(s) + \sigma\, \nhat(p(s)),
    \end{align}	
    where $\nhat$ is the unit normal to $S$.
    This transformation is one-to-one sufficiently close to the surface
\end{proposition}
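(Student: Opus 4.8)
The plan is to prove two separate assertions: first, that formula~\eqref{eq:sdf} correctly parameterizes points in terms of surface coordinates and signed distance, and second, that this map is locally invertible (one-to-one) in a neighborhood of $S$. I would treat these as two distinct tasks, because the first is essentially a geometric characterization of the nearest-point projection, while the second is an application of the inverse function theorem.

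\textbf{Establishing the change-of-coordinates formula.} First I would fix a point $x$ sufficiently close to $S$ and let $p \in S$ be a nearest surface point, so that $\sigma(x) = \pm \|x - p\|$. The key geometric fact is that the vector $x - p$ must be orthogonal to the tangent plane $T_p S$: if it had any tangential component, one could move $p$ along the surface to decrease the distance, contradicting minimality. I would make this rigorous by considering the function $f(s) = \tfrac{1}{2}\|x - p(s)\|^2$ and noting that at an interior minimizer the derivatives $\partial_{s_i} f = -(x - p(s)) \cdot \partial_{s_i} p$ vanish, so $x - p$ is perpendicular to both tangent vectors $\partial_{s_1} p, \partial_{s_2} p$, hence parallel to $\nhat(p)$. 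Since $\|x - p\| = |\sigma|$ and the sign of the normal component matches the inside/outside convention of $\sigma$, we recover $x - p = \sigma\, \nhat(p)$, which rearranges to~\eqref{eq:sdf}.

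\textbf{Proving local invertibility.} For the second claim I would regard~\eqref{eq:sdf} as a smooth map $\Phi(s_1, s_2, \sigma) = p(s) + \sigma\, \nhat(p(s))$ from a neighborhood of $S \times \{0\}$ in coordinate space into $\R^3$, and apply the inverse function theorem. I would compute the Jacobian at $\sigma = 0$: the columns are $\partial_{s_i}\Phi = \partial_{s_i} p + \sigma\, \partial_{s_i}\nhat$ and $\partial_\sigma \Phi = \nhat$. Evaluated on the surface ($\sigma = 0$), the first two columns are the tangent vectors $\partial_{s_i} p$ and the third is the unit normal $\nhat$. Because $\{\partial_{s_1} p, \partial_{s_2} p, \nhat\}$ form a basis of $\R^3$ (the tangent vectors are linearly independent for a regular parameterization, and $\nhat$ is transverse to their span by construction), the Jacobian determinant is nonzero on $S$. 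Smoothness of $S$ then guarantees the determinant remains nonzero for $|\sigma|$ small, so $\Phi$ is a local diffeomorphism, establishing the one-to-one property sufficiently close to the surface.

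\textbf{Anticipated obstacle.} The routine part is the Jacobian computation and the invocation of the inverse function theorem; I expect the genuinely delicate point to be making precise what ``sufficiently close'' means globally, and the related subtlety that local invertibility of $\Phi$ near each point does not by itself rule out two \emph{distinct} surface points projecting to the same nearby $x$. A fully rigorous statement would require a uniform tubular-neighborhood width bounded below by the reach of $S$ (controlled by the principal curvatures, so that the normal lines do not cross), but since the proposition only claims invertibility \emph{sufficiently close} to the surface, I would rely on compactness of $S$ together with the nondegenerate Jacobian to assert existence of such a neighborhood without computing the optimal width explicitly.
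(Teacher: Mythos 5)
Your proposal is correct, and its first half is essentially the paper's own argument: both fix a point $x$, minimize $\|x-p(s)\|^2$ over the surface, read off the first-order condition $(x-p)\cdot\partial_{s_i}p=0$, and conclude that $x-p$ is parallel to $\nhat$ with magnitude $|\sigma|$, which is exactly \cref{eq:sdf}. Where you genuinely diverge is the one-to-one claim. The paper disposes of it by citation: it invokes the tubular neighbourhood theorem (do Carmo, chapter 2.7) to guarantee a neighbourhood of $S$ on which the nearest-point correspondence is smooth and bijective, and that is the entirety of its treatment. You instead prove the local statement from scratch: the Jacobian of $\Phi(s,\sigma)=p(s)+\sigma\,\nhat(p(s))$ at $\sigma=0$ has columns $\partial_{s_1}p$, $\partial_{s_2}p$, $\nhat$, which form a basis of $\R^3$, so the inverse function theorem gives a local diffeomorphism for small $|\sigma|$. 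This is, in effect, a sketch of the proof of the very theorem the paper cites, and your closing caveat correctly names the one remaining gap: local invertibility of $\Phi$ does not by itself prevent normal segments emanating from two distant surface points from intersecting, so a genuine (uniform) tubular neighbourhood needs the compactness or positive-reach argument you indicate rather than carry out. What your route buys is self-containedness, plus an explicit view of where the coordinate system later degenerates --- your Jacobian columns $\partial_{s_i}p+\sigma\,\partial_{s_i}\nhat$ are precisely the $\partial_{s_i}p\cdot(I-\sigma K)$ that reappear as the matrix $J$ whose singularity at $\sigma=\kappa_i^{-1}$ the paper discusses in its remark on boundary-layer coordinates. What the paper's route buys is brevity and a clean global statement without redoing a standard argument. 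Neither you nor the paper spells out global injectivity in full detail, but your version at least makes the local-versus-global distinction explicit, so I would count it as a sound alternative proof at the same (or slightly higher) level of rigour.
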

\begin{proof}
Every smooth surface admits a smooth tubular neighbourhood of some thickness (\cite{CarmoDifferentialgeometrycurves1976} chapter 2.7).
We have a smooth minimisation problem for the signed distance within a neighbourhood of $p(s)$.
Solving the equivalent minimisation problem for $\|x-p\|^2$ we find that at a minimum
    \begin{align}
    (x-p)\cdot \partial_s p = 0.\nonumber
    \end{align}
The Jacobian of the surface mapping $\partial_s p$ returns vectors in the tangent space.
The difference vector $(x-p)$ is therefore parallel to the unit surface normal $\nhat$ with magnitude $\sigma$.
	\end{proof}

\begin{figure}[h]
	\centering
	\includegraphics[width=.7\linewidth]{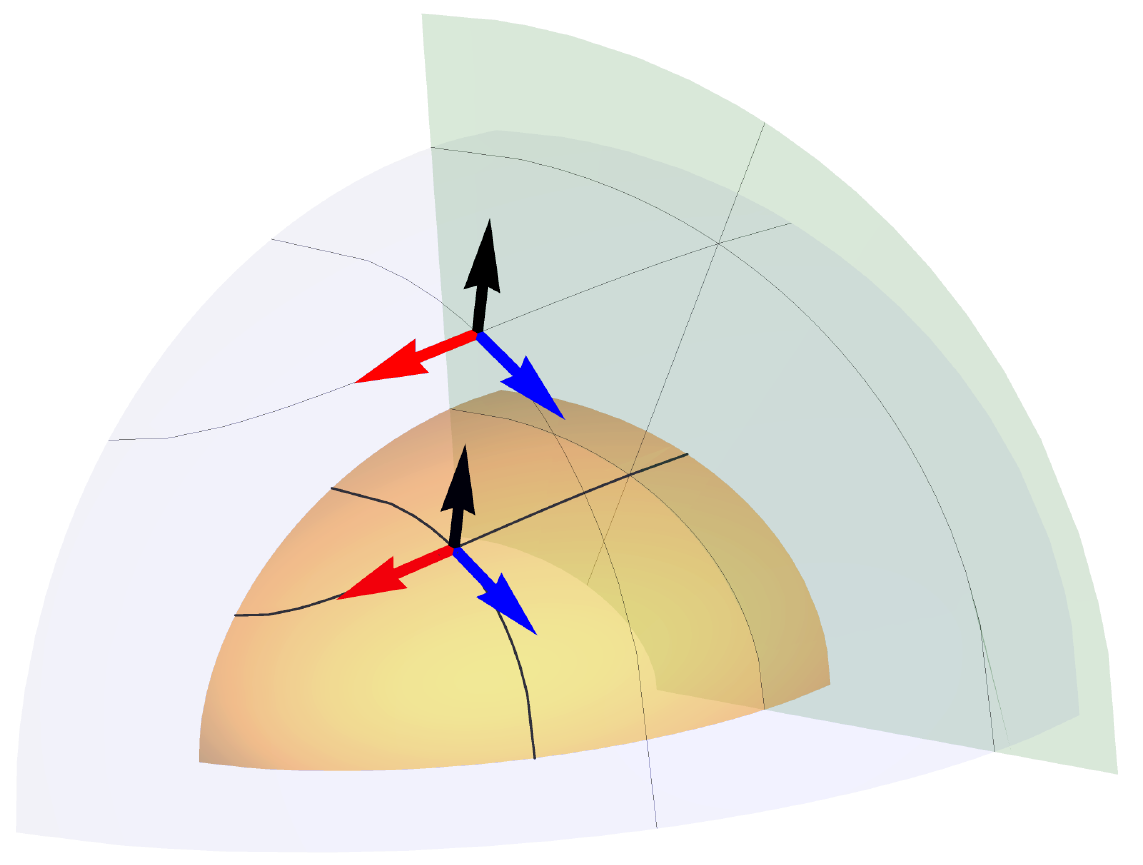} 
	\caption[signed-distance coordinate system]{The signed-distance function coordinate system. The orange surface is one octant of an ellipsoid with semimajor axes $1,\sqrt{2}$, and $2$, in orange.
	Orthogonal ellipsoidal coordinates, $s=(s_1,s_2)$, parameterise the surface with two coordinate lines in black. 
	At this surface point $p(s)$, the unit tangent vectors $\that_1,\that_2$ (red, blue) give an orthogonal basis for the tangent space.
	A point off of the boundary $x(\sigma,s)$ connects to the surface by moving a distance $\sigma$ in the normal direction $\nhat(s)$ from the closest point on the manifold $p(s)$. 
	The $\sigma = 1$ level set is shown in blue with corresponding basis vectors and surface coordinates. 
	This level set is no longer ellipsoidal, but the defined coordinate system remains orthogonal and possesses the same directions of principal curvature $\that_1,\that_2$ as the $\sigma=0$ surface.}
	\label{fig:sdf-sdf}
\end{figure}

This curvilinear coordinate system affects all vector calculus operations, which we develop in stages.
\begin{proposition}
	The partial derivatives with respect to the signed distance and surface coordinates are
    \begin{align}\label{eq:jacobian}
    \frac{\partial}{\partial \sigma} &= \nhat \cdot \nabla, &
    \frac{\partial}{\partial s_i} &= \frac{\partial p}{\partial s_i} \cdot (I + \sigma\,  \sgrad \nhat) \cdot \nabla.
    \end{align}	
\end{proposition}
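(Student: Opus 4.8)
The plan is to apply the chain rule directly to the coordinate transformation \eqref{eq:sdf}. Writing $x = x(\sigma, s)$, the forward chain rule expresses each new-coordinate derivative as a contraction of the corresponding column of the Jacobian with the Cartesian gradient,
\begin{align}
\pd{}{\sigma} = \pd{x}{\sigma}\cdot \nabla, \qquad \pd{}{s_i} = \pd{x}{s_i}\cdot \nabla. \nonumber
\end{align}
This is the convenient direction: unlike recovering $\nabla$ in terms of the curvilinear derivatives, it requires no inversion of the Jacobian, only its columns $\partial x/\partial \sigma$ and $\partial x/\partial s_i$, which I read off by differentiating \eqref{eq:sdf}.

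For the normal derivative I would differentiate $x = p(s) + \sigma\,\nhat(p(s))$ with respect to $\sigma$ at fixed $s$. Since both $p$ and $\nhat(p(s))$ are functions of $s$ alone, only the explicit factor of $\sigma$ survives, giving $\partial x/\partial \sigma = \nhat$ (the normal at the foot point $p(s)$, which coincides with the normal field extended along the normal ray). Substituting into the chain rule yields $\partial_\sigma = \nhat\cdot\nabla$ immediately.

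For the surface derivatives, differentiating \eqref{eq:sdf} with respect to $s_i$ gives $\partial x/\partial s_i = \partial p/\partial s_i + \sigma\,\partial \nhat/\partial s_i$. The key step is to recognise that $\nhat$ is a field defined on $S$ and evaluated at $p(s)$, so varying $s_i$ transports $p$ tangentially and the induced change in $\nhat$ is its directional derivative along $\partial p/\partial s_i$ within the surface; that is, $\partial \nhat/\partial s_i = \partial p/\partial s_i \cdot \sgrad\nhat$, where $\sgrad\nhat$ is the surface gradient of the normal (the shape operator). Factoring the common tangent vector $\partial p/\partial s_i$ on the left then gives $\partial x/\partial s_i = \partial p/\partial s_i \cdot (I + \sigma\,\sgrad\nhat)$, and the chain rule produces the stated formula.

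The main obstacle is not the algebra but bookkeeping the tensor contraction correctly. One must fix the convention for the gradient of the vector field $\nhat$ (so that the left-contraction $a \cdot \sgrad\nhat$ returns the directional derivative of $\nhat$ along $a$) and verify that $\partial p/\partial s_i$ factors on the correct side, so that $I + \sigma\,\sgrad\nhat$ acts as written rather than transposed. Establishing the identity $\partial \nhat/\partial s_i = \partial p/\partial s_i \cdot \sgrad\nhat$ cleanly — as an intrinsic surface derivative rather than an arbitrary ambient gradient — relies on $\nhat$ being a field on $S$, which is precisely what makes $I + \sigma\,\sgrad\nhat$ the Weingarten-type surface Jacobian familiar from the classical tubular-neighbourhood expansion.
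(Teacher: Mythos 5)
Your proposal is correct and follows essentially the same route as the paper: a direct application of the chain rule to the map $x = p(s) + \sigma\,\nhat(p(s))$, with the key identification $\partial \nhat/\partial s_i = \partial p/\partial s_i \cdot \sgrad\nhat$ converting the ambient derivative of the normal into a contraction with the surface gradient. Your version simply spells out the bookkeeping (which side the tangent vector contracts on) more explicitly than the paper does.
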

\begin{proof}
By a straightforward application of the chain rule, $\frac{\partial}{\partial \sigma} = \frac{\partial x}{\partial \sigma} \cdot \grad$.
For the tangential derivatives, $\frac{\partial \nhat}{\partial s_i} = \frac{\partial p}{\partial s_i} \cdot \frac{\partial \nhat}{\partial p} = \frac{\partial p}{\partial s_i} \cdot \sgrad \nhat $, where $\sgrad$ is the surface gradient, defined in the next section.
\end{proof}

The normal derivative is straightforward, but the tangential derivatives contain scaling factors.
To understand them, we first revisit calculus {on} the manifold in \cref{sec:sdf-surface}, then probe the {normal} on the manifold in \cref{sec:sdf-normal}, and finally determine the scaling factor {off} the manifold in \cref{sec:sdf-vector-calculus}.
With these identities in hand (also see \cite[section 14.6]{GilbargEllipticPartialDifferential2015}), we then derive a suite of new vector calculus identities in \cref{sec:sdf-vector-calculus,sec:sdf-dt}.

\subsection{Surface vector calculus} 
\label{sec:sdf-surface}
In preparation for analysing the surface normal in \cref{sec:sdf-normal}, we quickly revisit some elementary differential geometry of surfaces in three dimensions.
As before, around each point, we can use coordinates $s = (s_1,s_2)$, which map to points $p(s)$ in Cartesian space.
These induce a tangent vector basis, 
\begin{align}
    {t}_i &= \frac{\partial {p}}{\partial s_i}, &
    t_i \cdot t_j &\equiv g_{ij},
\end{align}
where the metric terms $g_{ij}$ are inherited from the Euclidean dot product in space and are called the \emph{first fundamental form}.
The dot product also induces a dual covector basis
\begin{align}   
    t^i &= \nabla s_i, &
    t_i \cdot \nabla s_j &= \delta_{ij}.
\end{align}
Without loss of generality, we can specify that these coordinates are also orthogonal, and the tangent vector basis and cotangent vector basis are parallel, which allows us to define an orthonormal local vector basis
    \begin{align}
    t_i &= \frac{\partial p}{\partial s_i}, &
    \that_i &= \frac{t_i}{|t_i|}, &
    \nabla s_i &= \frac{\that_i}{|t_i|}.
    \end{align}

Introducing a surface orthogonal coordinate system expedites many of the intermediate stages of calculations, but we express all eventual results in terms of general surface covariant derivatives.  
Orthogonal coordinate systems can experience singularities around umbilical points---points of equal curvature \cite{BerryUmbilicPointsGaussian1977}.
However, these points are isolated or are otherwise subsets of the sphere or plane.
In all cases, it is possible to define an orthogonal coordinate patch around the point.
And coordinate-invariant geometric operators (gradient etc.) are well defined at all points.

For orthogonal surface coordinates, the surface gradient is 
\begin{align}
\sgrad = \nabla s_1  \frac{\partial}{\partial s_1} + \nabla s_2  \frac{\partial}{\partial s_2} = \frac{\that_1}{|t_1|} \frac{\partial}{\partial s_1} + \frac{\that_2}{|t_2|} \frac{\partial}{\partial s_2} = \that_1 \nabla_1 + \that_2 \nabla_2.
    \end{align}
The first equality follows immediately from the chain rule, the second equality follows from the orthogonality of the tangent vectors, and the final equality defines useful rescaled derivative operators, normalised to the arc length of curves through the surface.

The area measure of the surface is  
\begin{align}
	dA = |t_1||t_2| \, ds_1 \, ds_2.
\end{align}
In general, the area measure is the determinant of the surface Jacobian, which is equivalent to the norm of the cross product of the tangent vectors $dA = |t_1 \times t_2|\, ds_1\, ds_2$.
For orthogonal coordinates we have $|t_1 \times t_2| = |t_1||t_2|$.

The surface divergence of a vector $u_\bot = u_1 \that_1 + u_2 \that_2$ is 
\begin{align}
\sdiv u_\bot = \frac{1}{|t_1||t_2|}\br{\pd{}{s_1}(|t_2| u_1) + \pd{}{s_2}(|t_1| u_2)}.
\end{align}
The surface divergence is the adjoint of the surface gradient.
For a smooth function $f$ that vanishes on the boundary of a smooth set $U$, 
\begin{align*}
\int_U u_\bot \cdot \sgrad f \, dA
	&= \int_U \left[ \frac{u_1}{|t_1|} \frac{\partial f}{\partial s_1} + \frac{u_2}{|t_2|} \frac{\partial f}{\partial s_2} \right] |t_1||t_2| \, ds_1 \, ds_2,\\
	&=-\int_U f \frac{1}{|t_1||t_2|}\br{\pd{}{s_1}(|t_2| u_1) + \pd{}{s_2}(|t_1| u_2)} |t_1||t_2| \, ds_1 \, ds_2,\\
	&=- \int_U f \, \sgrad \cdot u_\bot \, dA.
	\end{align*}

\subsection{Surface normal} 
\label{sec:sdf-normal}
We now examine standard properties of the normal to the manifold at $\sigma=0$, defined by $\nhat = \that_1 \times \that_2$, also known as the \emph{Gauss map}.
The key reason for the utility of the signed distance is the following fact.
\begin{proposition}	
	The gradient of the signed-distance function is the unit normal of the closest point on the surface
	\begin{align}\label{eq:gradsigma}
    \nabla \sigma = \nhat.
    \end{align}
Moreover, this result lifts off the surface into some neighbourhood of $\R^{3}$.
\end{proposition}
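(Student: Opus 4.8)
My approach is to differentiate the \emph{squared} distance rather than $\sigma$ directly, which avoids the square root and stays differentiable across $\sigma = 0$. Let $p^*(x) \in S$ denote the closest surface point to $x$. By the smooth tubular-neighbourhood theorem already invoked in the first proposition, $p^*$ is well-defined and smooth for $x$ in some neighbourhood of $S$, and there $\sigma^2(x) = \| x - p^*(x) \|^2$ irrespective of the sign convention. Differentiating this identity with the chain rule yields two contributions, one from the explicit $x$ and one from the implicit dependence of the minimiser on $x$,
\begin{align}
\grad\br{\sigma^2} = 2\br{x - p^*} - 2\br{x - p^*}\cdot \pd{p^*}{x}, \nonumber
\end{align}
where the final term is the vector whose $k$-th component is $\br{x-p^*}\cdot \partial_{x_k} p^*$.

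The crucial step is that this last term vanishes. The first-order optimality condition established in the first proposition states that $\br{x - p^*}\cdot \partial_s p = 0$; that is, $x - p^*$ is orthogonal to the tangent space. Since $p^*(x)$ is constrained to lie on $S$, each partial derivative $\partial_{x_k} p^*$ is itself a tangent vector, so its contraction with the normal direction $x - p^*$ is zero. Hence $\grad\br{\sigma^2} = 2\br{x - p^*}$.

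It remains to divide out the scalar factor. Using the change-of-coordinates formula $x - p^* = \sigma\,\nhat$ together with $\grad\br{\sigma^2} = 2\sigma\,\grad\sigma$, I obtain $2\sigma\,\grad\sigma = 2\sigma\,\nhat$, so $\grad\sigma = \nhat$ at every point with $\sigma \ne 0$; continuity of $\grad\sigma$, which follows from smoothness of $\sigma$ throughout the tubular neighbourhood, then extends the identity to the surface itself. Because no step of the argument assumed $\sigma = 0$, the identity holds in the entire neighbourhood, which is precisely the claim that the result lifts off the surface.

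I expect the envelope step---the vanishing of the minimiser-variation term---to be the main obstacle, since it relies on differentiability of the closest-point map $p^*(x)$ (supplied by the tubular-neighbourhood theorem) and on correctly pairing the optimality condition against the tangent-valued Jacobian $\partial p^*/\partial x$. A secondary care is the sign convention: one must check that $x - p^* = \sigma\,\nhat$ holds with the \emph{signed} $\sigma$ on both sides so that the factor cancels cleanly.
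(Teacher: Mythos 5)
Your proof is correct, but it takes a genuinely different route from the paper's. The paper differentiates the identity $\sigma = (x-p)\cdot\nhat$ directly, which requires two auxiliary facts: that $(\grad p)\cdot\nhat = 0$ because the Jacobian of the closest-point map is tangent to $S$, and that $(\grad\nhat)\cdot\nhat = 0$ because $\nhat$ has unit length. You instead differentiate the squared distance $\sigma^2 = \|x-p^*(x)\|^2$ and kill the minimiser-variation term by the envelope argument (optimality pairs the normal vector $x-p^*$ against the tangent-valued Jacobian $\partial p^*/\partial x$). What your route buys is that you never need to differentiate the unit normal at all --- the identity $(\grad\nhat)\cdot\nhat=0$ simply does not appear --- and the squared distance is manifestly smooth without reference to the sign convention; this is the classical argument one finds in treatments of the distance function. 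What it costs is the final division by $2\sigma$, which forces you to exclude the surface itself and recover $\sigma = 0$ by a separate continuity step, whereas the paper's computation is uniform in $\sigma$ and holds on the surface without any limiting argument. Both proofs lean identically on the smooth tubular neighbourhood (for smoothness of $p^*$ and of $\sigma$) and on the first-order optimality condition from the paper's first proposition, so neither is logically stronger; yours is slightly more elementary in its ingredients, the paper's slightly cleaner in its conclusion.
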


\begin{proof}
	From \cref{eq:sdf}, we can express the signed distance as $\sigma = (x - p)\cdot\nhat$.
	We can then take the gradient of this relation to find
	\begin{align*}
    \nabla \sigma 
    	&= (\nabla x) \cdot \nhat - (\nabla p) \cdot \nhat + (\nabla \nhat) \cdot (x-p),\\
    	&= I \cdot \nhat - (\nabla p) \cdot \nhat + \sigma (\nabla \nhat) \cdot \nhat,\\
    	&= \nhat.
    \end{align*}
    We emphasise that the contraction does not occur on the gradient and express this by enclosing it in parentheses and keeping it on the left of the dot product.
	The second term of the second equation is zero, as the gradient of the mapping must lie within the tangent space of the surface.
	That the final term of the second equation is zero follows from taking the gradient of the (unit) magnitude of the normal vector $\nabla (\nhat \cdot \nhat) = 2 (\nabla \nhat) \cdot \nhat = 0$.
\end{proof}    

This implies that the normal is its own dual vector and is the key reason for the utility of the signed-distance function coordinate system.
From this follows several important results.
\begin{corollary}
	The gradient of the normal is the Hessian of the signed-distance function, which is a necessarily symmetric 2-tensor,
	\begin{align}
    \nabla \nhat &= \nabla \nabla \sigma.
    \end{align}
\end{corollary}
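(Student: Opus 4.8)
The plan is to read the claimed identity as nothing more than the derivative of the preceding proposition. The previous result, \cref{eq:gradsigma}, already establishes $\nabla \sigma = \nhat$ not merely on $S$ but throughout the tubular neighbourhood in which the coordinate change \cref{eq:sdf} is one-to-one. Since both sides are then genuine (smooth) vector fields of the Cartesian variable $x$, I would simply apply $\nabla$ once more to each side to obtain $\nabla \nabla \sigma = \nabla \nhat$. The left-hand side is by definition the Hessian of the scalar $\sigma$, so the stated equality $\nabla \nhat = \nabla \nabla \sigma$ follows with essentially no computation.

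For the symmetry assertion I would invoke the equality of mixed partial derivatives. Written in components, the Hessian has entries $\partial_i \partial_j \sigma$, and symmetry is precisely the statement $\partial_i \partial_j \sigma = \partial_j \partial_i \sigma$. This is Clairaut/Schwarz's theorem, valid whenever $\sigma$ is twice continuously differentiable. Hence the only ingredient needed beyond the one-line chain of equalities is the regularity of $\sigma$ itself.

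The hard part, or rather the only part requiring any care, will be establishing that $\sigma$ is $C^2$ on the neighbourhood. This does not follow from the defining minimisation alone, since the distance function is only Lipschitz in general; but it does hold locally. Within the smooth tubular neighbourhood guaranteed in the proof of \cref{eq:sdf}, the closest-point projection $p(x)$ is smooth, and therefore so is $\sigma = (x-p)\cdot\nhat$. Once this smoothness is in hand, both the identity and its symmetry are immediate, and I expect the whole corollary to be dispatched in one or two lines by differentiating \cref{eq:gradsigma} and appealing to the commutativity of second derivatives.
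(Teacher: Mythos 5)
Your proposal is correct and follows essentially the same route as the paper: differentiate the identity $\nabla \sigma = \nhat$ once more, and obtain symmetry from the equality of mixed partials (which the paper phrases as symmetry being ``manifest in Cartesian coordinates''). Your added care about the $C^2$ regularity of $\sigma$, justified via the smooth tubular neighbourhood already invoked in the proof of \cref{eq:sdf}, is a sound and slightly more explicit version of what the paper leaves implicit.
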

\begin{proof}
	The identity follows by taking the gradient of \cref{eq:gradsigma}.
	The symmetry of the Hessian is manifest in Cartesian coordinates.
\end{proof}

\begin{corollary}
	The normal vector is a zero eigenvector of the normal gradient.
	Hence, the normal gradient of the normal vector is zero.
	\begin{align}
	\partial_\sigma \nhat = 0.
	\end{align}	
 This is, the normal trivially lifts off the surface into $\R^{3}$. 
\end{corollary}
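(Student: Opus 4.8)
The plan is to chain together the two facts already in hand about the normal: that it is a unit vector, and that $\nabla\nhat$ equals the symmetric Hessian $\nabla\nabla\sigma$. First I would recall that the proof of \cref{eq:gradsigma} already delivered the identity $(\nabla\nhat)\cdot\nhat = 0$, obtained by differentiating the unit-length constraint $\nhat\cdot\nhat = 1$. Since $\nabla(\nhat\cdot\nhat) = 2(\nabla\nhat)\cdot\nhat$ and the left-hand side vanishes, the right contraction of the normal gradient against the normal is zero. This is precisely the statement that $\nhat$ lies in the kernel of $\nabla\nhat$, that is, it is a zero eigenvector---which is the first claim of the corollary.

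The second step is to convert this right-slot contraction into the left-slot contraction that actually appears in $\partial_\sigma\nhat$. Here the essential ingredient is the preceding corollary, $\nabla\nhat = \nabla\nabla\sigma$, which guarantees that $\nabla\nhat$ is a symmetric $2$-tensor. Symmetry lets me interchange the two slots, so that $\nhat\cdot(\nabla\nhat) = (\nabla\nhat)\cdot\nhat = 0$. Equivalently, in index notation, $\nhat_j\,\partial_j\nhat_i = \nhat_j\,\partial_i\nhat_j = \tfrac12\,\partial_i(\nhat_j\nhat_j) = 0$, where the middle equality uses $\partial_j\nhat_i = \partial_i\nhat_j$ and the last uses $\abs{\nhat}^2 \equiv 1$.

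Finally I would invoke \cref{eq:jacobian}, which identifies the normal derivative with $\partial_\sigma = \nhat\cdot\nabla$. Applying this operator to the normal vector gives $\partial_\sigma\nhat = (\nhat\cdot\nabla)\nhat = \nhat\cdot(\nabla\nhat) = 0$, which is the desired conclusion and shows that the normal lifts trivially off the surface.

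I do not anticipate a genuine obstacle, since the result is essentially immediate once the two preceding corollaries are available. The one point demanding care is the bookkeeping of which tensor slot is being contracted: the unit-length constraint naturally produces the right contraction $(\nabla\nhat)\cdot\nhat$, whereas the directional derivative $\partial_\sigma\nhat$ is the left contraction $\nhat\cdot(\nabla\nhat)$, and it is only the symmetry of the Hessian that forces these to agree. Without that symmetry the vanishing of $\partial_\sigma\nhat$ would not follow from unit length alone, so the cleanest exposition keeps the two contractions visibly distinct and cites the Hessian symmetry as the bridge.
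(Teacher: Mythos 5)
Your proposal is correct and follows exactly the paper's own argument: both use the unit-length identity $(\nabla\nhat)\cdot\nhat=0$ to exhibit $\nhat$ as a zero eigenvector, then invoke the symmetry of $\nabla\nhat=\nabla\nabla\sigma$ to swap contraction slots and conclude $\partial_\sigma\nhat=(\nhat\cdot\nabla)\nhat=0$. Your explicit attention to the left-slot versus right-slot contraction is a slightly more careful spelling-out of the same step the paper performs implicitly.
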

\begin{proof}
	The previously noted identity $(\nabla \nhat)\cdot \nhat$ implies that $\nhat$ is a zero eigenvector of $\nabla \nhat$.
	The normal gradient's symmetry implies the normal vector's normal derivative is also zero $(\nabla \nhat) \cdot \nhat = (\nhat \cdot \nabla) \nhat = 0$.
\end{proof}

\begin{corollary}
	The remaining eigenvectors of the gradient of the unit normal lie within the tangent space of the surface and are orthogonal.	
\end{corollary}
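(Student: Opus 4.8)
The plan is to lean on the symmetry of $\nabla \nhat$ proved in the first corollary together with the spectral theorem for real symmetric operators. Since $\nabla \nhat = \nabla \nabla \sigma$ is a real symmetric $3\times 3$ tensor, it admits a full orthonormal eigenbasis with real eigenvalues, and we already know $\nhat$ is one such eigenvector, with eigenvalue zero. The essential observation is that the tangent space of the surface at the closest point $p(s)$ coincides with the orthogonal complement of $\nhat$, because $\nhat = \that_1 \times \that_2$ is orthogonal to the tangent basis. Proving that the remaining eigenvectors lie in the tangent space is therefore the same as proving they are orthogonal to $\nhat$, while their mutual orthogonality will follow separately from the spectral theorem.

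The key step I would carry out is the standard linear-algebra fact that a symmetric operator preserves the orthogonal complement of any of its eigenvectors. Explicitly, for any $w$ with $w \cdot \nhat = 0$, the symmetry of $\nabla \nhat$ gives $(\nabla \nhat \cdot w)\cdot \nhat = w \cdot (\nabla \nhat \cdot \nhat) = 0$, using the zero-eigenvector property of $\nhat$ from the previous corollary. Hence $\nabla \nhat$ maps the tangent plane into itself, its restriction to that plane is again symmetric, and the spectral theorem applied to this two-dimensional restriction furnishes two mutually orthogonal eigenvectors lying entirely within the tangent space. These are exactly the directions of principal curvature, with eigenvalues the principal curvatures.

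The one point requiring care is the degenerate umbilical case, where the two tangential eigenvalues coincide and the eigenvectors are no longer individually determined. This is not a genuine obstacle: the spectral theorem still guarantees an orthonormal eigenbasis of the tangent plane, so any mutually orthogonal pair of tangent vectors may be chosen as principal directions. This is consistent with the earlier remark that such umbilics are isolated and that the coordinate-invariant geometric operators remain well defined throughout.
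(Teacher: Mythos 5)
Your proposal is correct and follows essentially the same route as the paper's own proof: symmetry of $\nabla \nhat$ plus the spectral theorem, combined with the fact that the remaining eigenvectors can be taken orthogonal to the zero eigenvector $\nhat$ and hence lie in the tangent plane. The extra details you supply---the explicit check that $\nabla \nhat$ preserves the orthogonal complement of $\nhat$, and the treatment of the degenerate umbilical case---are welcome refinements of the same argument rather than a different approach.
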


\begin{proof}
	The existence of orthogonal eigenvectors and eigenvalues follows from the symmetry of the normal gradient and the spectral theorem for symmetric tensors.
	The remaining eigenvectors must be orthogonal to the $\nhat$, the zero eigenvector, and so lie within the tangent space.
\end{proof}

\begin{definition}
	The surface gradient of the normal vector is the shape tensor $-K$.
	The eigenvectors of the shape tensor are the principal directions of curvature, and their eigenvalues are the (negative) principal curvatures.
    \begin{align}
    \sgrad \nhat = - K = - \kappa_1 \, \that_1 \otimes \that_1 - \kappa_2\,  \that_2 \otimes \that_2.
    \end{align}
\end{definition}
	
The orthonormal basis vectors $(\nhat, \that_1, \that_2)$ induced by this choice of coordinates is an example of a Darboux frame.
Without loss of generality (except at isolated umbilic points), we can locally let the surface coordinates lie parallel to these eigenvectors, meaning the unit tangent vectors are the principal curvature directions.

\subsection{Vector calculus in the boundary region}
\label{sec:sdf-vector-calculus}
We can now calculate the gradient throughout all space.
\begin{proposition}
	The gradient in signed-distance coordinates is
    \begin{align}
    \nabla &= \nhat\, \ds + J^{-1} \cdot \nabla_\bot, \quad \text{where} \quad     J = I - \sigma\, K.
    \end{align}	
\end{proposition}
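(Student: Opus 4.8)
The plan is to resolve the Cartesian gradient in the orthonormal Darboux frame $(\nhat,\that_1,\that_2)$ and then convert each directional derivative into a coordinate derivative using the chain-rule identities of \cref{eq:jacobian}. Because the frame is an orthonormal basis of $\R^3$ throughout the tubular neighbourhood, I may write the resolution of the identity $\nabla = \nhat\,(\nhat\cdot\nabla) + \that_1\,(\that_1\cdot\nabla) + \that_2\,(\that_2\cdot\nabla)$. The normal component is immediate: \cref{eq:jacobian} gives $\nhat\cdot\nabla = \ds$, so the first term is exactly $\nhat\,\ds$. It then remains to express the two tangential components $\that_i\cdot\nabla$ in terms of the rescaled surface derivatives $\nabla_i = |t_i|^{-1}\partial_{s_i}$ introduced earlier.

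For the tangential part I would first simplify the second identity in \cref{eq:jacobian}. Substituting the shape-tensor relation $\sgrad\nhat = -K$ turns the bracket $I + \sigma\,\sgrad\nhat$ into $J = I - \sigma K$, so that $\partial_{s_i} = t_i\cdot J\cdot\nabla$. Choosing (away from the isolated umbilic points) surface coordinates aligned with the principal directions makes $K=\kappa_1\,\that_1\otimes\that_1 + \kappa_2\,\that_2\otimes\that_2$ diagonal in the $\that_i$ basis; since $t_i=|t_i|\,\that_i$ is then an eigenvector, $t_i\cdot J = (1-\sigma\kappa_i)\,t_i$. Hence $\partial_{s_i} = (1-\sigma\kappa_i)\,|t_i|\,(\that_i\cdot\nabla)$, which I can invert (for $\sigma$ small enough that $1-\sigma\kappa_i\neq 0$) to obtain $\that_i\cdot\nabla = (1-\sigma\kappa_i)^{-1}\nabla_i$.

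Reassembling the three components gives $\nabla = \nhat\,\ds + \sum_i (1-\sigma\kappa_i)^{-1}\,\that_i\,\nabla_i$, and the final step is to recognise this sum as $J^{-1}\cdot\nabla_\bot$: in the principal basis $J^{-1} = \nhat\otimes\nhat + \sum_i(1-\sigma\kappa_i)^{-1}\,\that_i\otimes\that_i$, while $\nabla_\bot = \sgrad = \that_1\nabla_1 + \that_2\nabla_2$, so contracting $J^{-1}$ against the leading basis vector of $\nabla_\bot$ reproduces the sum term by term. I expect the main difficulty to be bookkeeping rather than conceptual: one must (i) confirm $J$ is invertible, i.e.\ stay strictly inside the tubular neighbourhood where $\sigma\kappa_i\neq 1$ so that no focal points are crossed, (ii) keep clear that the contraction in $J^{-1}\cdot\nabla_\bot$ acts on the basis vectors $\that_i$ and not on the derivative operators $\nabla_i$, and (iii) note that although the computation is carried out in principal coordinates, the resulting identity involves only the geometric objects $\nabla$, $\ds$, $\nabla_\bot$ and the symmetric tensor $J$, so it is independent of the coordinate choice and extends by continuity through the umbilic points.
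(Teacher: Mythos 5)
Your proposal is correct and is essentially the paper's own proof: both arguments invert the Jacobian relation \cref{eq:jacobian} after substituting $\sgrad \nhat = -K$ and aligning the surface coordinates with the principal directions, so that the tangential factors $1-\sigma\kappa_i$ appear and invert to give $J^{-1}$. The only difference is presentational---you resolve $\nabla$ in the orthonormal Darboux frame and solve for the directional derivatives $\that_i\cdot\nabla$, whereas the paper computes the dual basis $\nabla s_i = \that_i/\bigl((1-\sigma\kappa_i)|t_i|\bigr)$ and applies the chain rule, which is the same computation written in terms of covectors rather than frame components.
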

\begin{proof}
	This is a simple consequence of inverting \cref{eq:jacobian}.
	If our surface coordinates are tangent to the principal curvature directions,
	\begin{align*}
    \frac{\partial x}{\partial s_i} &= \frac{\partial p}{\partial s_i} \cdot (I - \sigma K) \cdot \nabla x = (1-\sigma \kappa_i)|t_i| \that_i.
    \end{align*}
    Hence the principal tangent basis remains orthogonal off the surface (each scaled by $1-\sigma \kappa_i$).
    The dual basis is, therefore, orthogonal and scaled by the inverse factor
	\begin{align}    
	\nabla s_i &= \frac{1}{1-\sigma \kappa_i} \frac{\that_i}{|t_i|}.\nonumber
    \end{align}

    We then apply the chain rule using the dual basis, writing the curvature factors using $J^{-1}$.
\end{proof}

	\begin{corollary}
	The Hessian of the signed distance is given everywhere by
	\begin{align}
    \nabla \nabla \sigma &= \nabla \nhat = - J^{-1} K.
    \end{align}
	\end{corollary}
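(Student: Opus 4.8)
The plan is to read off the result by feeding $\nhat$ directly into the gradient-in-coordinates formula established in the previous proposition. The first equality, $\nabla \nabla \sigma = \nabla \nhat$, requires no new work: it is exactly the content of the earlier corollary, obtained by taking the gradient of $\nabla \sigma = \nhat$. So the entire task reduces to evaluating $\nabla \nhat$ using $\nabla = \nhat\,\ds + J^{-1}\cdot\nabla_\bot$ with $J = I - \sigma K$.

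First I would apply this operator to the normal, splitting it into its normal and tangential pieces, $\nabla \nhat = \nhat\,(\ds \nhat) + J^{-1}\cdot(\nabla_\bot \nhat)$. The normal piece vanishes immediately by the corollary $\ds \nhat = 0$ (the normal lifts trivially off the surface), leaving only $J^{-1}\cdot\nabla_\bot \nhat$. The tangential derivative is then identified through the defining relation of the shape tensor, $\sgrad \nhat = -K$, so that substitution yields $\nabla \nhat = J^{-1}\cdot(-K) = -J^{-1} K$.

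The one point that deserves care is that $\sgrad \nhat = -K$ must hold throughout the neighbourhood and not merely on $\sigma = 0$. This is precisely what $\ds \nhat = 0$ guarantees: since $\nhat$ is then a function of the surface coordinates alone, its surface gradient is independent of $\sigma$ and continues to equal $-K$ off the surface. I would flag this explicitly, because it is the hinge that promotes the surface definition of $K$ into a statement valid everywhere in the tube.

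The only genuine subtlety, and the step I would be most careful to spell out, is the meaning of the contraction $J^{-1}\cdot\nabla_\bot$ acting on the vector $\nhat$: the dot contracts the rescaling tensor against the derivative (left) index, not against $\nhat$ itself. Working in the principal basis makes this transparent, since both $J^{-1}=\sum_i(1-\sigma\kappa_i)^{-1}\that_i\otimes\that_i$ and $K=\sum_i\kappa_i\that_i\otimes\that_i$ are diagonal; one checks $J^{-1}\cdot\nabla_\bot \nhat = \sum_i(1-\sigma\kappa_i)^{-1}\that_i\otimes(\nabla_i \nhat)$ and then uses $\nabla_i \nhat = -\kappa_i\that_i$ to recover $-J^{-1}K$. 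Because $J$ and $K$ are simultaneously diagonal they commute, so no ordering ambiguity arises in the final expression.
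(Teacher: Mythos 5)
Your proof is correct and takes essentially the same route as the paper: apply the coordinate gradient $\nabla = \nhat\,\ds + J^{-1}\cdot\sgrad$ to $\nhat$, kill the normal term with $\ds\nhat = 0$, and substitute $\sgrad\nhat = -K$. The two points you flag for care (that $\sgrad\nhat$ is $\sigma$-independent because $\nhat$ depends only on the surface coordinates, and that $J^{-1}$ contracts on the derivative index) are precisely the justification the paper itself records later in its proposition on basis-vector gradients, $\grad\nhat = J^{-1}\cdot\sgrad\nhat$, so nothing is missing.
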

	
	\begin{corollary}
	Principal directions of curvature do not vary in $\sigma$.
	\end{corollary}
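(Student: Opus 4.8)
The plan is to show that the principal directions off the surface---i.e.\ the eigenvectors of the off-surface shape operator $-\nabla\nhat$---coincide with the surface principal directions $\that_1,\that_2$, which by construction depend only on the surface point $p(s)$ and are therefore independent of $\sigma$. First I would invoke the preceding corollary, $\nabla\nhat = -J^{-1}K$ with $J = I - \sigma K$, stressing that the shape tensor $K$ appearing there is the \emph{surface} shape tensor $\sgrad\nhat = -K$, a function of $s$ alone. The off-surface shape operator $\tilde K := -\nabla\nhat = J^{-1}K$ is symmetric (it equals the Hessian $\nabla\nabla\sigma$ by the earlier corollary), so the spectral theorem applies and "principal direction" is well defined.

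The central observation is then purely algebraic. Since $J = I - \sigma K$ is a degree-one polynomial in $K$, it commutes with $K$, and hence so does $J^{-1}$. Commuting symmetric tensors are simultaneously diagonalisable, so $J^{-1}K$ is diagonalised by the same orthonormal eigenbasis as $K$, namely $\{\that_1,\that_2\}$. Concretely, in the principal basis $J^{-1}K = \tfrac{\kappa_1}{1-\sigma\kappa_1}\,\that_1\otimes\that_1 + \tfrac{\kappa_2}{1-\sigma\kappa_2}\,\that_2\otimes\that_2$, so the principal curvatures rescale as $\kappa_i \mapsto \kappa_i/(1-\sigma\kappa_i)$ while the principal directions are unchanged. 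Equivalently, differentiating in $\sigma$ with $\ds J = -K$ and $\ds J^{-1} = J^{-1}K J^{-1}$ (and $\ds K = 0$, since $K$ is a surface quantity) yields the Riccati relation $\ds \tilde K = \tilde K^2$; because $\tilde K$ commutes with $\tilde K^2$, the $\sigma$-flow alters only eigenvalues and fixes eigendirections, giving $\ds \that_i = 0$.

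The main obstacle is the degenerate case of umbilic points, where $\kappa_1 = \kappa_2$ and the principal directions are not uniquely determined. There both eigenvalues rescale identically, so the equality persists off the surface, the entire tangent plane remains a single eigenspace, and any admissible choice of principal directions at $\sigma=0$ continues to serve for all $\sigma$. The statement should thus be read as: the $\sigma=0$ principal directions remain principal for every $\sigma$ at which $J$ is invertible (i.e.\ $\sigma\kappa_i \neq 1$, within the tubular neighbourhood). The only subtlety to flag carefully is the definition of the off-surface principal directions via the symmetric operator $-\nabla\nhat$, which the earlier corollaries guarantee is symmetric with $\nhat$ as its zero eigenvector, so that the simultaneous-diagonalisation argument is legitimate.
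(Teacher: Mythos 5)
Your proof is correct and follows essentially the same route as the paper: there the corollary is left as an immediate consequence of the preceding identity $\nabla \nhat = -J^{-1}K$, whose eigenvectors are precisely the $\sigma$-independent surface directions $\that_1,\that_2$ with eigenvalues $-\kappa_i/(1-\sigma\kappa_i)$ --- exactly the simultaneous-diagonalisation argument you spell out. Your Riccati reformulation and the umbilic-point caveat go beyond what the paper records, but the core mechanism is identical.
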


The gradient allows us to calculate all remaining vector calculus quantities.
The basis vectors $(\nhat, \that_1, \that_2)$ are independent of $\sigma$, giving us a basis for vector fields $u$ away from the boundary,
    \begin{align}
    u = u_\sigma \, \nhat + u_\bot, \quad \text{where} \quad u_\bot = u_1\, \that_1 + u_2 \, \that_2.
    \end{align}

	\begin{remark}
	We note that the determinant of $J$ is 
    \begin{align}
    |J| &= 1 - 2\sigma \Kbar + \sigma^2 \Kabs, \qquad \text{where } \qquad \Kbar = \kappa_1 + \kappa_2  \quad \text{ and } \quad \Kabs = \kappa_1 \kappa_2.
    \end{align}
	The mean and Gaussian curvature result from the respective trace and determinant, $\Kbar = \text{tr}(K)/2$, $\Kabs = \det(K)$.	
	\end{remark}

	\begin{proposition}
	The full volume measure $dV$ is 
    \begin{align}
    dV = |J| \, d\sigma \, dA = |J| |t_1| |t_2| \, d\sigma \, ds_1 \, ds_2.
    \end{align}
	\end{proposition}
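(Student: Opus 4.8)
The plan is to obtain $dV$ as the Jacobian determinant of the coordinate map $x = p(s) + \sigma\,\nhat$ of \cref{eq:sdf}, changing variables from Cartesian $(x_1,x_2,x_3)$ to the signed-distance coordinates $(\sigma,s_1,s_2)$. Concretely, I would write $dV = |\det M|\,d\sigma\,ds_1\,ds_2$, where $M$ is the $3\times 3$ matrix whose columns are the tangent vectors $\partial x/\partial\sigma$ and $\partial x/\partial s_i$ to the three coordinate curves. The entire computation then reduces to evaluating $\det M$.

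First I would identify these columns from results already in hand. Differentiating \cref{eq:sdf} with respect to $\sigma$ and invoking $\partial_\sigma\nhat = 0$ gives $\partial x/\partial\sigma = \nhat$. For the tangential columns, the proof of the gradient proposition already supplies $\partial x/\partial s_i = (1-\sigma\kappa_i)\,|t_i|\,\that_i$, valid once the surface coordinates are aligned with the principal directions; this step leans on the established fact that the principal directions of curvature do not vary in $\sigma$, so each off-surface tangent vector stays parallel to $\that_i$ and is merely rescaled by $1-\sigma\kappa_i$.

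The key observation is orthogonality. Because $(\nhat,\that_1,\that_2)$ is an orthonormal frame, the three columns of $M$ are mutually orthogonal, and the determinant of a matrix with orthogonal columns is (up to sign) the product of the column magnitudes. This yields $|\det M| = 1\cdot(1-\sigma\kappa_1)|t_1|\cdot(1-\sigma\kappa_2)|t_2|$. Recognising $(1-\sigma\kappa_1)(1-\sigma\kappa_2) = \det(I-\sigma K) = |J|$ from the preceding remark, and $|t_1||t_2|\,ds_1\,ds_2 = dA$ from \cref{sec:sdf-surface}, delivers $dV = |J|\,|t_1||t_2|\,d\sigma\,ds_1\,ds_2 = |J|\,d\sigma\,dA$, as claimed.

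The main obstacle is justifying the clean diagonalisation of $M$ rather than any single calculation, and this is precisely where the signed-distance structure pays off: without it the off-surface coordinate tangents need not remain aligned with the principal directions, the columns of $M$ would not be orthogonal, and the determinant would not factor through $|J|$. I would also remark on the sign, noting that throughout the tubular neighbourhood where $1-\sigma\kappa_i>0$ the determinant is positive, so $|J|>0$ there and the absolute value in the change of variables is consistent with the stated formula.
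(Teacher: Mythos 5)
Your proposal is correct and takes essentially the same route as the paper: the paper evaluates the Jacobian as the scalar triple product $\partial_\sigma x \cdot (\partial_{s_1} x \times \partial_{s_2} x)$, substitutes the identical column identifications $\partial_\sigma x = \nhat$ and $\partial_{s_i} x = (1-\sigma\kappa_i)|t_i|\,\that_i$, and uses $\nhat\cdot(\that_1\times\that_2)=1$, which is just your ``orthogonal columns give the product of magnitudes'' argument in triple-product form. Your added remarks on the sign and on invertibility of $J$ in the tubular neighbourhood are consistent with, though not spelled out in, the paper's proof.
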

    \begin{proof}
    This follows simply from the triple product of the new tangent vectors,
    \begin{align*}
	dV  &= (\partial_\sigma x \cdot (\partial_{s_1} x \times \partial_{s_2} x)) \,d\sigma \, d s_1 \, d s_2, \\
		&= |t_1||t_2|(1-\sigma \kappa_1)(1-\sigma\kappa_2) \nhat \cdot (\that_1\times\that_2) \, d\sigma \, d s_1 \, d s_2,\\
		&= |J| d\sigma \, dA.
    \end{align*}
    \end{proof}
    
    \begin{proposition}
The volume divergence is 
    \begin{align}
    \divof{u} &= \frac{\ds (|J| u_\sigma)}{|J|} + \frac{\nabla_\bot \cdot (\Jhat\, u_\bot)}{|J|},
    \end{align}
where we have defined for convenience the adjugate matrices 
	\begin{align}
    \Jhat &= |J|J^{-1} = I - \sigma \Khat & \Khat &= |K| K^{-1} = \kappa_2 \, \that_1 \otimes \that_1 + \kappa_1 \, \that_2 \otimes  \that_2,
    \end{align}
which (in two dimensions) swap the principal curvatures of the shape operator.
    \end{proposition}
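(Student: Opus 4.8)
The plan is to mirror the weak (adjoint) derivation already used for the surface divergence in \cref{sec:sdf-surface}, rather than computing $\divof u$ by brute-force differentiation of the basis vectors. The divergence is characterised as the negative adjoint of the gradient against the volume measure: for any smooth $f$ vanishing on the boundary of a region $U$,
\begin{align*}
\int_U u \cdot \grad f \, dV = -\int_U f\, \divof u \, dV.
\end{align*}
So I would compute the left-hand side explicitly in signed-distance coordinates, integrate by parts, and then read off $\divof u$ by matching the resulting integrand against the arbitrary test function $f$.

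First I would substitute the gradient formula $\grad = \nhat\,\ds + J^{-1}\cdot\sgrad$ together with the decomposition $u = u_\sigma\,\nhat + u_\bot$. Because $\nhat$ is a unit vector orthogonal to the tangent space, and because $J = I - \sigma K$ (hence $J^{-1}$) fixes $\nhat$ and maps the tangent space to itself, the two cross terms vanish; using the symmetry of $J^{-1}$ to move it onto $u_\bot$ then gives the clean splitting
\begin{align*}
u \cdot \grad f = u_\sigma\, \ds f + (J^{-1} u_\bot)\cdot \sgrad f.
\end{align*}
Multiplying through by the volume measure $dV = |J|\,d\sigma\,dA$ separates the normal and tangential contributions into integrals I can treat independently.

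For the normal piece, $\int (|J|\, u_\sigma)\,\ds f\, d\sigma\, dA$, I integrate by parts in $\sigma$ alone — legitimate because $dA = |t_1||t_2|\,ds_1\,ds_2$ carries no $\sigma$ dependence — producing $-\int f\,\ds(|J|\, u_\sigma)\,d\sigma\,dA$, which I rewrite as $-\int_U f\,\ds(|J|\, u_\sigma)/|J|\,dV$. For the tangential piece I absorb the measure factor as $|J|\,(J^{-1}u_\bot) = \Jhat\,u_\bot$, then apply the surface-divergence adjoint identity of \cref{sec:sdf-surface} at each fixed $\sigma$-slice; this yields $-\int f\, \sdiv(\Jhat\, u_\bot)\, d\sigma\, dA = -\int_U f\, \sdiv(\Jhat\, u_\bot)/|J|\, dV$. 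Summing the two contributions and matching against arbitrary $f$ delivers the claimed formula.

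The main thing to watch is the bookkeeping of the two distinct integrations by parts: the normal direction is ordinary one-dimensional integration by parts with the $\sigma$-independent surface measure held fixed, whereas the tangential direction reuses the previously established surface adjoint relation slice-by-slice. The latter is valid precisely because $\Jhat\, u_\bot$ remains a tangential field (as $\Jhat = I - \sigma\Khat$ preserves the tangent space) and the metric entering both $dA$ and $\sgrad$ is that of the $\sigma=0$ surface throughout. Getting the placement of the $|J|$ versus $\Jhat$ factors right — recognising that the adjugate $\Jhat = |J|\,J^{-1}$ is exactly what converts the tangential flux into a bona fide surface divergence under the full volume measure — is where the real content of the result sits.
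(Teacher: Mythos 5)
Your proposal is correct and is precisely the argument the paper intends: its proof is the one-line remark that the result follows ``by the integration by parts method for the surface divergence,'' and your derivation carries out exactly that adjoint computation, with the key steps (the vanishing cross terms via $J^{-1}\nhat=\nhat$, the $\sigma$-independence of $dA$, and the identification $|J|J^{-1}u_\bot=\Jhat u_\bot$ feeding into the slice-by-slice surface adjoint identity) all handled correctly.
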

    \begin{proof}
    The proof is analogous to the integration by parts method for the surface divergence.
    \end{proof}

	\begin{corollary}
	The scalar Laplacian is the divergence of a scalar gradient
    \begin{align}
    \lapof f &= \frac{\ds(|J|\ds f)}{|J|} + \frac{\sdiv(\Jhat J^{-1} \sgrad f)}{|J|}.
    \end{align}
	\end{corollary}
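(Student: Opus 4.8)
The plan is to prove the identity by direct composition. Since the scalar Laplacian is by definition the divergence of a gradient, $\lapof f = \divof(\grad f)$, I would simply feed the gradient formula into the volume divergence formula, both of which are already established above. No new geometric input is required; the corollary is a bookkeeping consequence of the two preceding propositions.

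First I would apply the gradient proposition to write $\grad f = \nhat\, \ds f + J^{-1} \cdot \sgrad f$. To invoke the divergence formula I must present this vector in the canonical split $u = u_\sigma\, \nhat + u_\bot$ with $u_\bot$ lying in the tangent space. Reading off the components, the normal coefficient is $u_\sigma = \ds f$, and the candidate tangential part is $u_\bot = J^{-1} \cdot \sgrad f$. The one point that genuinely needs checking is that $u_\bot$ really is tangential, i.e.\ that it carries no $\nhat$ component. This follows because $J = I - \sigma K$ and, by the definition of the shape tensor, $K = \kappa_1\, \that_1 \otimes \that_1 + \kappa_2\, \that_2 \otimes \that_2$ is built entirely from tangential outer products and annihilates $\nhat$. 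Hence $J$ maps the tangent space into itself (acting as $1 - \sigma\kappa_i$ on each principal direction), and so does its inverse $J^{-1}$. Since $\sgrad f$ is tangential by construction, $J^{-1} \cdot \sgrad f$ remains in $\operatorname{span}(\that_1, \that_2)$ as required.

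With the split justified, substituting $u_\sigma = \ds f$ and $u_\bot = J^{-1}\sgrad f$ into the divergence formula gives $\lapof f = \frac{\ds(|J|\ds f)}{|J|} + \frac{\sdiv(\Jhat\, J^{-1} \sgrad f)}{|J|}$, where $\Jhat = |J| J^{-1}$ is the adjugate matrix that the divergence formula prepends to the tangential argument. This reproduces the claimed expression verbatim. The main (and only) obstacle is thus the tangentiality check above; once that is in hand the result is immediate, and indeed this is why the author can dispose of the corollary without a separate proof.
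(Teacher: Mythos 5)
Your proposal is correct and takes essentially the same route as the paper: the corollary is obtained by feeding the gradient formula $\grad f = \nhat\,\ds f + J^{-1}\cdot\sgrad f$ into the volume divergence formula with $u_\sigma = \ds f$ and $u_\bot = J^{-1}\cdot\sgrad f$, which is precisely what the statement ``the scalar Laplacian is the divergence of a scalar gradient'' encodes. Your explicit check that $J^{-1}$ preserves the tangent space (since $K$ annihilates $\nhat$ and acts within $\operatorname{span}(\that_1,\that_2)$) is a detail the paper leaves implicit, and it is correct.
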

	
	\begin{corollary}
	The curl is   
    \begin{align}
    \nabla \times u = -\nhat \frac{\sdiv(\Jhat u^\bot)}{|J|} + \Jhat^{-1} (\ds (\Jhat u^\bot) - \grads u_\sigma),
    \end{align}
	where for convenience, we define rotated quantities as
    \begin{align}
    \nabla^\bot &= \nhat \times \nabla_\bot, & u^\bot &= \nhat \times u_\bot,
    \end{align}
	which satisfies the useful identities
	\begin{align}
    \nabla^\bot \cdot \nabla_\bot &= \nabla_\bot \cdot \nabla^\bot = 0, & u^\bot \cdot u_\bot &= u_\bot \cdot u^\bot = 0,\\
    \nhat \times (J u_\bot) &= \Jhat u^\bot, & \nhat \times u^\bot &= -u_\bot.
    \end{align}
	\end{corollary}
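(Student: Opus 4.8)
The plan is to split $u = u_\sigma\,\nhat + u_\bot$ and compute $\nabla\times(u_\sigma\nhat)$ and $\nabla\times u_\bot$ separately, using the gradient formula $\nabla = \nhat\,\ds + J^{-1}\cdot\nabla_\bot$ together with the already-established facts that $\nhat$, the principal directions $\that_i$, and the curvatures $\kappa_i$ are all independent of $\sigma$. Before starting I would dispatch the listed identities, which are purely algebraic: $u^\bot\cdot u_\bot = 0$ and $\nhat\times u^\bot = -u_\bot$ follow immediately from $u^\bot = \nhat\times u_\bot$ and the vector triple product, while $\nhat\times(Ju_\bot) = \Jhat u^\bot$ is checked in the principal basis, where $\nhat\times$ swaps $\that_1\leftrightarrow\that_2$ up to sign and thereby exchanges the eigenvalues $1-\sigma\kappa_i$ of $J$ into those of $\Jhat$. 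The operator identities $\divs\sgrad = \sdiv\grads = 0$ are the surface analogues of ``the curl of a gradient vanishes'' and reduce to equality of mixed partials.

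For the $u_\sigma\nhat$ contribution I would use $\nhat = \nabla\sigma$, so that the product rule and $\nabla\times\nabla\sigma = 0$ give $\nabla\times(u_\sigma\nhat) = \nabla u_\sigma\times\nhat$. Substituting the gradient formula annihilates the normal part of $\nabla u_\sigma$ (it is parallel to $\nhat$) and leaves $(J^{-1}\sgrad u_\sigma)\times\nhat = -\nhat\times(J^{-1}\sgrad u_\sigma)$. The $J^{-1}$-analogue of the rotation identity, $\nhat\times(J^{-1}w) = \Jhat^{-1}w^\bot$ for tangential $w$ (proved exactly as above in the principal basis), then converts this into $-\Jhat^{-1}\grads u_\sigma$, which is purely tangential and supplies precisely the $-\Jhat^{-1}\grads u_\sigma$ piece of the claim.

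Next I would extract the normal component of $\nabla\times u_\bot$. Writing $\nhat\cdot(\nabla\times u_\bot) = \nabla\cdot(u_\bot\times\nhat) = -\nabla\cdot u^\bot$ (again via $\nabla\times\nhat = 0$) reduces it to the divergence of the tangential field $u^\bot$, to which the already-proven volume divergence formula applies with vanishing normal component, yielding $-\sdiv(\Jhat u^\bot)/|J|$. With the previous step this accounts for the entire normal part of $\nabla\times u$.

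The hard part will be the tangential part of $\nabla\times u_\bot$, where the $\sigma$- and $s$-variation of the moving basis must be handled carefully. Rather than differentiate $\that_i$ directly, I would invoke the full product identity $\nabla(\nhat\cdot u_\bot) = \nhat\times(\nabla\times u_\bot) + u_\bot\times(\nabla\times\nhat) + (\nhat\cdot\nabla)u_\bot + (u_\bot\cdot\nabla)\nhat$. Because $\nhat\cdot u_\bot = 0$ and $\nabla\times\nhat = 0$, this collapses to $\nhat\times(\nabla\times u_\bot) = -\ds u_\bot - u_\bot\cdot\nabla\nhat$, so the awkward basis variation is repackaged into $u_\bot\cdot\nabla\nhat = -u_\bot\cdot J^{-1}K$ through the shape-tensor identity $\nabla\nhat = -J^{-1}K$. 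Projecting back via $(\nabla\times u_\bot)_\bot = -\nhat\times[\nhat\times(\nabla\times u_\bot)]$, and using $\sigma$-independence of $\nhat$ to pull $\ds$ through the rotation ($\nhat\times\ds u_\bot = \ds u^\bot$), the curvature terms reassemble into $\Jhat^{-1}\ds(\Jhat u^\bot)$; matching the $\kappa_i/(1-\sigma\kappa_i)$ factors against the $\sigma$-derivatives of the entries of $\Jhat$ is the only genuinely fiddly step. Summing the three contributions then gives the stated formula. As an independent check I would confirm each component against the standard orthogonal-coordinate determinant formula for the curl with scale factors $h_\sigma = 1$ and $h_i = (1-\sigma\kappa_i)|t_i|$.
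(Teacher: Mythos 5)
Your proposal is correct, and it takes a genuinely different route from the paper. The paper disposes of this corollary in one line: it invokes the characterisation that the curl is the unique first-order operator (up to sign) satisfying $\nabla\cdot\nabla\times = \nabla\times\nabla = 0$, so its proof amounts to \emph{verifying} that the displayed formula annihilates gradients and is itself annihilated by the divergence proposition already proved --- the listed rotation identities ($\nhat\times(Ju_\bot)=\Jhat u^\bot$, $\grads\cdot\sgrad=\sdiv\grads=0$, etc.) being exactly the lemmas that make those checks close. You instead \emph{derive} the formula constructively: the split $u = u_\sigma\nhat + u_\bot$, the use of $\nhat=\nabla\sigma$ and $\nabla\times\nabla\sigma=0$ to produce $-\Jhat^{-1}\grads u_\sigma$, the reduction $\nhat\cdot(\nabla\times u_\bot)=\nabla\cdot(u_\bot\times\nhat)=-\nabla\cdot u^\bot$ feeding into the paper's own divergence formula for the normal component, and the repackaging of the tangential part through $\nabla(\nhat\cdot u_\bot)$ and $\nabla\nhat=-J^{-1}K$ are all sound; your ``fiddly'' final step does close, since $\Jhat^{-1}\ds(\Jhat u^\bot)=\ds u^\bot-\Jhat^{-1}\Khat u^\bot$ while your projection yields $\nhat\times\ds u_\bot-\nhat\times(J^{-1}Ku_\bot)=\ds u^\bot-\Jhat^{-1}\Khat u^\bot$ via the rotated identity $\nhat\times(J^{-1}w)=\Jhat^{-1}(\nhat\times w)$ for tangential $w$. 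The trade-off: the paper's argument is far shorter and reuses proven operators, but it leans on a uniqueness claim it does not prove; yours is longer but self-contained and elementary, producing the formula rather than certifying it, and your closing cross-check against the orthogonal-coordinate determinant formula with scale factors $h_\sigma=1$, $h_i=(1-\sigma\kappa_i)|t_i|$ is itself a complete third proof --- arguably the quickest rigorous one available.
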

	\begin{proof}
	This follows from the fact that the curl is the unique operator (up to sign) satisfying $\nabla \cdot \nabla \times = \nabla \times \nabla = 0$.
	\end{proof}

    \begin{corollary}
	The vector Laplacian is  
    \begin{align}
    \lapof u &= \nhat \sbr{\ds \br{\frac{\ds(|J|u_\sigma) + \sdiv(\Jhat u_\bot)}{|J|}} - \frac{\sdiv(\Jhat J^{-1}(\ds(J u_\bot) - \sgrad u_\sigma )}{|J|}}\\
    &\quad+ \Jhat^{-1} \ds\br{\Jhat J^{-1} (\ds(J u_\bot) - \sgrad u_\sigma)} + J^{-1} \sgrad\br{\frac{\ds(|J|u_\sigma)}{|J|}} + \slap u_\bot ,\nonumber
    \end{align}
	where we define the vector surface Laplace operator,
	\begin{align}
    \slap u_\bot &= J^{-1} \sgrad \br{\frac{\sdiv(\Jhat u_\bot)}{|J|}} + \Jhat^{-1} \grads\br{\frac{\divs(J u_\bot)}{|J|}}.
    \end{align}
    \end{corollary}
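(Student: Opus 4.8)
The plan is to sidestep differentiating the moving frame vectors directly and instead invoke the coordinate-free Euclidean identity
\begin{align}
\lapof u = \grad\br{\divof u} - \grad \times \br{\grad \times u},\nonumber
\end{align}
which holds because the vector Laplacian acts componentwise in Cartesian coordinates. Each operator on the right has already been derived in these signed-distance coordinates---the volume divergence, the gradient $\grad = \nhat\,\ds + J^{-1}\cdot\sgrad$, and the curl---so the whole derivation reduces to one application of the gradient to a scalar and two nested applications of the curl formula, followed by consolidation.

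First I would expand $\grad\br{\divof u}$. Writing the divergence compactly as $\divof u = \br{\ds(|J|u_\sigma) + \sdiv(\Jhat u_\bot)}/|J|$, the normal projection $\nhat\,\ds$ reproduces $\ds$ of the divergence, which is exactly the first term inside the normal bracket of the claim. The tangential projection $J^{-1}\sgrad$ of the divergence I would split by source: the $u_\sigma$ contribution yields $J^{-1}\sgrad\br{\ds(|J|u_\sigma)/|J|}$ outright, while the $u_\bot$ contribution supplies the first half of the surface vector Laplacian $\slap u_\bot$.

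Next I would compute $\grad\times\br{\grad\times u}$ by applying the curl formula twice. The first application gives $\grad\times u = -\nhat\,\sdiv(\Jhat u^\bot)/|J| + \Jhat^{-1}\br{\ds(\Jhat u^\bot) - \grads u_\sigma}$; feeding this field back into the curl formula and using the rotation identities $\nhat\times(J u_\bot) = \Jhat u^\bot$, $\nhat\times u^\bot = -u_\bot$, and $\divs\sgrad = 0$ is what collapses the doubly-rotated terms back to unrotated form. This step should deliver the normal contribution $-\sdiv\br{\Jhat J^{-1}\br{\ds(J u_\bot)-\sgrad u_\sigma}}/|J|$ and the nested normal-derivative structure $\Jhat^{-1}\ds\br{\Jhat J^{-1}\br{\ds(J u_\bot)-\sgrad u_\sigma}}$, together with the second, rotated half of $\slap u_\bot$.

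Finally I would add the two pieces and sort them into normal and tangential parts. The purely in-surface remainder---everything involving only $u_\bot$ and only surface derivatives---should assemble into $\slap u_\bot = J^{-1}\sgrad\br{\sdiv(\Jhat u_\bot)/|J|} + \Jhat^{-1}\grads\br{\divs(J u_\bot)/|J|}$ via the planar identity that curl-of-curl of a tangential field equals a rotated gradient of a rotated divergence. I expect the principal obstacle to be precisely this bookkeeping: propagating the noncommuting factors $J,\,J^{-1},\,\Jhat,\,\Jhat^{-1}$ through two layers of nested $\ds$ and surface derivatives, and confirming that all cross terms generated by the two curl applications cancel so that only the stated structure survives. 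A symbolic-algebra verification, of the kind the authors emphasise, is the natural way to guard this step.
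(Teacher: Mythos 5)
Your proposal is correct and follows essentially the same route as the paper: the authors' proof is precisely the identity $\lapof u = \grad\br{\divof u} - \grad\times\br{\grad\times u}$, applied using the previously derived divergence, gradient, and curl formulas in signed-distance coordinates. The detailed bookkeeping you outline (splitting normal and tangential projections, collapsing the doubly-rotated terms via $\nhat\times(Ju_\bot)=\Jhat u^\bot$ and $\nhat\times u^\bot=-u_\bot$, and assembling $\slap u_\bot$) is exactly the consolidation the paper leaves implicit, and which its accompanying computer-algebra code is meant to verify.
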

    \begin{proof}
    This is a straightforward application of the identity $\nabla^2 u = -\nabla \times \nabla \times u + \nabla (\nabla \cdot u)$.
    \end{proof}
    
    \begin{corollary}
	The divergence-free vector Laplacian of incompressible hydrodynamics is
    \begin{align}
    -\nabla \times \nabla \times u &= 
    \frac{\nhat}{|J|} \sbr{ - \sdiv(\Jhat J^{-1}(\ds(J u_\bot) - \sgrad u_\sigma )}\\
    &\quad + \Jhat^{-1} \ds\br{\Jhat J^{-1} (\ds(J u_\bot) - \sgrad u_\sigma)} + \Jhat^{-1} \grads\br{\frac{\divs(J u_\bot)}{|J|}}. \nonumber
    \end{align}
    \end{corollary}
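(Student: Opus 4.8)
The plan is to obtain the divergence-free vector Laplacian directly from the identity $\lapof u = -\nabla \times \nabla \times u + \nabla(\divof u)$ already invoked in the preceding corollary, which rearranges to $-\nabla \times \nabla \times u = \lapof u - \nabla(\divof u)$. Since both the full vector Laplacian and the volume divergence are established above, the entire computation reduces to forming the gradient of the scalar divergence and subtracting it term by term from the vector Laplacian.

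First I would compute $\nabla(\divof u)$ by applying the gradient operator $\nabla = \nhat\,\ds + J^{-1}\cdot\sgrad$ to the divergence $\divof u = \frac{\ds(|J|u_\sigma) + \sdiv(\Jhat u_\bot)}{|J|}$. This produces a normal contribution $\nhat\,\ds(\divof u)$ and a tangential contribution $J^{-1}\sgrad(\divof u)$, the latter of which I would split along the two summands of the divergence into a $u_\sigma$ piece $J^{-1}\sgrad\br{\ds(|J|u_\sigma)/|J|}$ and a $u_\bot$ piece $J^{-1}\sgrad\br{\sdiv(\Jhat u_\bot)/|J|}$.

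Next I would subtract these from the corresponding components of the vector Laplacian. The key observation is that $\divof u$ appears as the first summand inside the normal component of $\lapof u$, so the normal contribution $\nhat\,\ds(\divof u)$ cancels cleanly, leaving only the $-\sdiv(\Jhat J^{-1}(\ds(J u_\bot) - \sgrad u_\sigma))/|J|$ term scaled by $\nhat/|J|$. On the tangential side, the $u_\sigma$ piece of $J^{-1}\sgrad(\divof u)$ cancels the isolated $J^{-1}\sgrad\br{\ds(|J|u_\sigma)/|J|}$ term in $\lapof u$, while the $u_\bot$ piece cancels exactly the first half of the surface vector Laplacian $\slap u_\bot = J^{-1} \sgrad \br{\sdiv(\Jhat u_\bot)/|J|} + \Jhat^{-1} \grads\br{\divs(J u_\bot)/|J|}$. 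What survives is the normal-derivative transport term $\Jhat^{-1}\ds\br{\Jhat J^{-1}(\ds(J u_\bot) - \sgrad u_\sigma)}$ together with the remaining half $\Jhat^{-1}\grads\br{\divs(J u_\bot)/|J|}$, reproducing the claimed expression.

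The main obstacle is organisational rather than analytical: one must recognise the scalar divergence embedded inside the normal component of the vector Laplacian, and correctly pair the two halves of $\slap u_\bot$ against the split tangential gradient so that precisely one of each survives. No new identities are needed beyond those collated above, so once the bookkeeping is arranged the cancellations are immediate.
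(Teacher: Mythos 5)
Your proposal is correct and is essentially the paper's own (implicit) route: the paper derives this corollary and the preceding vector Laplacian from the same identity $\nabla^2 u = -\nabla\times\nabla\times u + \nabla(\nabla\cdot u)$, and your rearrangement $-\nabla\times\nabla\times u = \lapof u - \nabla(\divof u)$ with the three cancellations you identify --- the normal piece $\nhat\,\ds(\divof u)$, the tangential piece $J^{-1}\sgrad\br{\ds(|J|u_\sigma)/|J|}$, and the first half of $\slap u_\bot$ --- is exactly the bookkeeping that reconciles the two displayed formulas. That you read the identity as a subtraction from the established $\lapof u$, rather than as the addition by which the paper assembled $\lapof u$, is immaterial since both $\lapof u$ and $\divof u$ are stated earlier as standalone results.
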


	\begin{proposition}
	The gradients of the basis vectors are 
    \begin{align}
    \grad \nhat &= J^{-1} \cdot \sgrad \nhat, & \grad \that_i &= J^{-1} \cdot \sgrad \that_i.
    \end{align}
	The surface gradients are 
    \begin{align}
    \sgrad \nhat &= - \kappa_1\, \that_1 \otimes \that_1 -\kappa_2 \, \that_2 \otimes \that_2, & 
    \sgrad \that_i &= \kappa_i \, \that_i \otimes \nhat + \mathcal{R}^{j k}_i \, \that_j \otimes \that_k,
    \end{align}
	using Einstein summation convection for repeated raised and lowered indices. 
 The Ricci rotation coefficients $\mathcal{R}^{jk}_i$ are antisymmetric with two independent degrees of freedom
    \begin{align}
    \label{eq:sdf-basis-gradient}
    (\nabla_j \that_i) \cdot \that_k \equiv \mathcal{R}^{jk}_i &= - \mathcal{R}^{ji}_k, &
    \mathcal{R}^{12}_1 &= \omega_1, &
    \mathcal{R}^{21}_2 &= -\omega_2.
    \end{align}
	\end{proposition}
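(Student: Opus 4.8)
The plan is to reduce every full gradient to a surface gradient, then expand each surface gradient in the orthonormal Darboux basis $(\nhat, \that_1, \that_2)$, reading off the normal component from the shape tensor and defining the tangential components as the rotation coefficients.

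First I would establish the two identities $\grad \nhat = J^{-1} \cdot \sgrad \nhat$ and $\grad \that_i = J^{-1} \cdot \sgrad \that_i$ by applying the gradient decomposition $\grad = \nhat\, \ds + J^{-1} \cdot \sgrad$ to each basis vector. The normal-derivative terms vanish: $\ds \nhat = 0$ is the earlier corollary that the normal lifts trivially off the surface, and $\ds \that_i = 0$ is the corollary that principal directions of curvature do not vary in $\sigma$. The point to handle carefully is that the contraction with $J^{-1}$ acts on the gradient slot alone (the first index of the resulting $2$-tensor), so it factors cleanly to the left of the surface gradient.

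Next, writing $\sgrad = \that_j \nabla_j$ in the principal-curvature frame, I would expand $\sgrad \that_i = \that_j \otimes (\nabla_j \that_i)$ by decomposing $\nabla_j \that_i$ into its normal and tangential parts. For the normal part I differentiate the orthogonality relation $\that_i \cdot \nhat = 0$ to get $(\nabla_j \that_i) \cdot \nhat = -\that_i \cdot (\nabla_j \nhat)$; substituting the diagonal shape tensor $\sgrad \nhat = -\kappa_1 \that_1 \otimes \that_1 - \kappa_2 \that_2 \otimes \that_2$, so that $\nabla_j \nhat = -\kappa_j \that_j$, collapses the sum over $j$ to the single diagonal term $\kappa_i \that_i \otimes \nhat$. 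The tangential part $(\nabla_j \that_i) \cdot \that_k$ I simply name $\mathcal{R}^{jk}_i$, which reassembles into $\mathcal{R}^{jk}_i \that_j \otimes \that_k$, completing the stated formula for $\sgrad \that_i$; the expression for $\sgrad \nhat$ is merely the definition of the shape tensor.

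Finally, the antisymmetry $\mathcal{R}^{jk}_i = -\mathcal{R}^{ji}_k$ follows from differentiating the orthonormality relation $\that_i \cdot \that_k = \delta_{ik}$ along $\that_j$, giving $(\nabla_j \that_i) \cdot \that_k + \that_i \cdot (\nabla_j \that_k) = 0$. A two-dimensional count then finishes the claim: for each fixed gradient direction $j$ the coefficients form an antisymmetric array with a single free entry, leaving exactly two independent scalars, which I identify with $\omega_1 = \mathcal{R}^{12}_1$ and $-\omega_2 = \mathcal{R}^{21}_2$. I expect the only genuinely delicate step to be the bookkeeping in the normal-component calculation, namely tracking that the diagonal form of the shape tensor forces the factor $\delta_{ij}$ and hence eliminates every off-diagonal normal contribution; the remainder is routine orthonormal-frame algebra.
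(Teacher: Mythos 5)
Your proposal is correct and follows essentially the same route as the paper: you use the $\sigma$-independence of the basis ($\ds \nhat = \ds \that_i = 0$) to reduce the full gradient to $J^{-1}\cdot\sgrad$, and you obtain the normal component $\kappa_i\,\that_i\otimes\nhat$ and the antisymmetry $\mathcal{R}^{jk}_i = -\mathcal{R}^{ji}_k$ by differentiating the orthonormality relations, exactly as the paper does. Your write-up merely spells out in index form the normal-component bookkeeping that the paper compresses into the implication $\nhat\cdot\that_i = 0 \implies \grad\nhat\cdot\that_i = -\grad\that_i\cdot\nhat$.
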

	\begin{proof}
    As the vector basis is independent of $\sigma$ (i.e.~$\ds \nhat = \ds \that_i = 0$), the vector gradients are rescaled surface gradients.
	The antisymmetry of the Ricci rotation coefficients follows from the fact that the basis is orthonormal
    \begin{align*}
	\nhat \cdot \that_i &= 0 \implies \grad \nhat \cdot \that_i = - \grad \that_i \cdot \nhat,&
	\that_i \cdot \that_j &= 0 \implies \grad \that_i \cdot \that_j = - \grad \that_j \cdot \that_i.
    \end{align*}
	\end{proof}
	
    \begin{corollary}
    \label{cor:surface-vector-gradient}
    The velocity vector gradient is 
    \begin{align}
    \grad u &= \nhat \otimes (\nhat \,\ds u_\sigma +  \ds u_\bot  )+ J^{-1}\left((\sgrad u_\sigma+ K u_\bot)\otimes \nhat + (\sgrad u_\bot\cdot\Pi - u_\sigma \, K )\right).
    \end{align}
    where the surface projection tensor $\Pi = I -\nhat \otimes \nhat$.
    \end{corollary}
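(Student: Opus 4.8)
The plan is to apply the gradient operator $\nabla = \nhat\,\ds + J^{-1}\cdot\nabla_\bot$ from the earlier proposition directly to the decomposition $u = u_\sigma\,\nhat + u_\bot$, splitting the computation into its normal-derivative and surface-gradient contributions. Because the entire orthonormal basis $(\nhat,\that_1,\that_2)$ is independent of $\sigma$, the normal piece is immediate: $\ds u = (\ds u_\sigma)\,\nhat + \ds u_\bot$, so that $\nhat\otimes\ds u$ reproduces the first term $\nhat\otimes(\nhat\,\ds u_\sigma + \ds u_\bot)$ verbatim. The work is therefore concentrated in evaluating the surface-gradient tensor $\sgrad u$ and then contracting it from the left with $J^{-1}$.

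First I would expand $\sgrad u = \sgrad(u_\sigma\,\nhat) + \sgrad u_\bot$ using the Leibniz rule. The scalar-times-vector term gives $(\sgrad u_\sigma)\otimes\nhat + u_\sigma\,\sgrad\nhat = (\sgrad u_\sigma)\otimes\nhat - u_\sigma\,K$, invoking $\sgrad\nhat = -K$ from the shape-tensor definition. For the tangential part I would substitute the basis-vector surface gradients $\sgrad\that_i = \kappa_i\,\that_i\otimes\nhat + \mathcal{R}^{jk}_i\,\that_j\otimes\that_k$ from the preceding proposition. The curvature pieces collect into $\sum_i u_i\kappa_i\,\that_i\otimes\nhat = (K u_\bot)\otimes\nhat$, while the remaining contributions---the surface gradients $\sgrad u_i$ of the components together with the Ricci rotation terms---are all tangential in their second slot.

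The key consolidation is to recognise that every term whose second slot is normal amounts to $(\sgrad u_\sigma + K u_\bot)\otimes\nhat$, whereas the purely tangential-tangential remainder is precisely $\sgrad u_\bot\cdot\Pi$: contracting with $\Pi = I - \nhat\otimes\nhat$ on the right annihilates the $(K u_\bot)\otimes\nhat$ normal component of $\sgrad u_\bot$ while leaving its tangential part untouched. This yields $\sgrad u = (\sgrad u_\sigma + K u_\bot)\otimes\nhat + \sgrad u_\bot\cdot\Pi - u_\sigma\,K$, and applying $J^{-1}\cdot$ from the left gives the bracketed surface term of the claim.

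I expect the main obstacle to be purely a matter of careful index bookkeeping rather than any genuine difficulty: one must track which tensor slot the differentiation occupies and which carries the vector component, since $J^{-1}$ contracts only the former while $\Pi$ acts on the latter. The one non-obvious step is seeing that the second-slot-normal contributions arising from $\sgrad\that_i$ (the $\kappa_i\,\that_i\otimes\nhat$ pieces) are exactly what promote the bare $\sgrad u_\sigma$ into the combination $\sgrad u_\sigma + K u_\bot$; keeping the Ricci rotation coefficients implicit inside the compact notation $\sgrad u_\bot\cdot\Pi$ is what renders the final expression coordinate-independent.
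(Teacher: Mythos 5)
Your proposal is correct and follows essentially the route the paper intends: this corollary is stated without an explicit proof precisely because it is the direct combination of the gradient decomposition $\grad = \nhat\,\ds + J^{-1}\cdot\sgrad$ with the preceding proposition on the basis-vector gradients ($\ds\nhat = \ds\that_i = 0$, $\sgrad\nhat = -K$, $\sgrad\that_i = \kappa_i\,\that_i\otimes\nhat + \mathcal{R}^{jk}_i\,\that_j\otimes\that_k$), which is exactly the computation you carry out. Your bookkeeping of the second-slot-normal terms collecting into $(\sgrad u_\sigma + K u_\bot)\otimes\nhat$ and the tangential remainder into $\sgrad u_\bot\cdot\Pi - u_\sigma K$ is accurate.
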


    \begin{corollary}
	The convective derivative is
    \begin{align}
	u \cdot \grad u = \nhat \,\big(&u_\sigma \ds u_\sigma+ u_\bot \cdot J^{-1} (\sgrad u_\sigma+ K u_\bot)\big) + u_\sigma \ds u_\bot + u_\bot \cdot J^{-1} (\sgrad u_\bot - K u_\sigma).
    \end{align}
	\end{corollary}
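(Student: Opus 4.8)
The plan is to avoid any fresh computation and simply contract the velocity $u = u_\sigma\,\nhat + u_\bot$ into the first (differentiation) slot of the vector gradient supplied by \cref{cor:surface-vector-gradient}. Since $u\cdot\grad u$ means $(u\cdot\grad)u$, and the two structural blocks of $\grad u$ carry their differentiation index on the left---one as $\nhat\otimes(\,\cdot\,)$ and the other as $J^{-1}\cdot(\,\cdot\,)$---the whole calculation reduces to tracking how $\nhat$ and $u_\bot$ pair against those two blocks under the orthonormal basis $(\nhat,\that_1,\that_2)$.

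For the normal-gradient block $\nhat\otimes(\nhat\,\ds u_\sigma + \ds u_\bot)$, orthonormality gives $u\cdot\nhat = u_\sigma$ and $u_\bot\cdot\nhat = 0$, so this block contributes exactly $u_\sigma(\nhat\,\ds u_\sigma + \ds u_\bot)$, supplying the $u_\sigma\,\ds u_\sigma$ and $u_\sigma\,\ds u_\bot$ terms of the claim. For the tangential block $J^{-1}\cdot T$ with $T = (\sgrad u_\sigma + Ku_\bot)\otimes\nhat + (\sgrad u_\bot\cdot\Pi - u_\sigma K)$, contracting on the left yields $(u\cdot J^{-1})\cdot T$. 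Here I would use that $J = I - \sigma K$ fixes $\nhat$ and is symmetric, so $u\cdot J^{-1} = u_\sigma\,\nhat + u_\bot\cdot J^{-1}$; because the first index of every term of $T$ lies in the tangent space, the $u_\sigma\,\nhat$ part is annihilated and only $u_\bot\cdot J^{-1}$ survives. The $\otimes\nhat$ term of $T$ then produces the normal contribution $\nhat\,\big(u_\bot\cdot J^{-1}(\sgrad u_\sigma + Ku_\bot)\big)$, and the tangential $2$-tensor $\sgrad u_\bot\cdot\Pi - u_\sigma K$ produces the tangential contribution $u_\bot\cdot J^{-1}(\sgrad u_\bot - u_\sigma K)$, writing $\sgrad u_\bot$ for its tangential part $\sgrad u_\bot\cdot\Pi$ and using $u_\sigma K = K u_\sigma$.

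The step I expect to require the most care is the bookkeeping of the projection $\Pi$ and the normal/tangential split hidden in $\sgrad u_\bot$. Because differentiating the tangent frame produces normal components (recall $\sgrad\that_i = \kappa_i\,\that_i\otimes\nhat + \cdots$), the full surface gradient satisfies $\sgrad u_\bot = (Ku_\bot)\otimes\nhat + \sgrad u_\bot\cdot\Pi$; one must check that the normal piece $(Ku_\bot)\otimes\nhat$ is precisely the $Ku_\bot$ term already residing in the normal block of $T$, so that it is not double counted, while $\sgrad u_\bot\cdot\Pi$ is the genuinely tangential derivative appearing in the final line. Once this split is made explicit and the symmetry of $J^{-1}$ is used to commute $u_\bot\cdot J^{-1}$ freely, collecting the coefficients of $\nhat$ and of the tangent space reproduces the stated identity.
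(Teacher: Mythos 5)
Your proposal is correct and takes exactly the route the paper intends: the statement is placed as an immediate corollary of \cref{cor:surface-vector-gradient}, with the (implicit) proof being precisely your contraction of $u = u_\sigma\,\nhat + u_\bot$ into the differentiation slot of the vector gradient, using $\nhat\cdot J^{-1} = \nhat$ and the tangentiality of the first index of the second block. Your resolution of the projection subtlety---reading $\sgrad u_\bot$ in the final term as its tangential part $\sgrad u_\bot\cdot\Pi$, with the normal piece $(K u_\bot)\otimes\nhat$ accounted for once in the normal component---is the correct interpretation of the paper's slightly loose notation in the stated formula.
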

    
    \begin{corollary}
    \label{thm:hessian}
	The Hessian is 
    \begin{align*}
    \nabla \nabla &= \nhat \otimes \ds(\nhat \ds + J^{-1} \cdot \sgrad) + J^{-1} \cdot \sgrad (\nhat \ds + J^{-1} \cdot \sgrad),\\
    &= \nhat \otimes \nhat\, \ds^2 
     + \nhat \otimes \that_1\br{\frac{\kappa_1 \nabla_1}{(1-\sigma\kappa_1)^2} + \frac{\ds \nabla_1}{1-\sigma\kappa_1}} 
     + \nhat \otimes \that_2\br{\frac{\kappa_2 \nabla_2}{(1-\sigma\kappa_2)^2} + \frac{\ds \nabla_2}{1-\sigma\kappa_2}},\\
    &+ \that_1 \otimes \that_1 \br{-\frac{\kappa_1\ds}{1-\sigma\kappa_1} + \frac{\nabla_1\nabla_1}{(1-\sigma\kappa_1)^2} - \frac{\omega_1 \nabla_2}{|J|} + \frac{\sigma\nabla_1\kappa_1\nabla_1}{(1-\sigma\kappa_1)^3}},\\
    &+ \that_1 \otimes \that_2\br{\frac{\nabla_1\nabla_2}{|J|} + \frac{\omega_1\nabla_1}{(1-\sigma\kappa_1)^2} + \frac{\sigma\nabla_1\kappa_2\nabla_2}{|J|(1-\sigma\kappa_2)}},\\
    &+ \that_1 \otimes \nhat\br{\frac{\nabla_1\ds}{1-\sigma\kappa_1}+\frac{\kappa_1\nabla_1}{(1-\sigma\kappa_1)^2}},\\
    &+ \that_2 \otimes \that_2 \br{-\frac{\kappa_2\ds}{1-\sigma\kappa_2} + \frac{\nabla_2\nabla_2}{(1-\sigma\kappa_2)^2} + \frac{\omega_2 \nabla_1}{|J|} + \frac{\sigma\nabla_2\kappa_2\nabla_2}{(1-\sigma\kappa_2)^3}},\\
    &+ \that_2 \otimes \that_1\br{\frac{\nabla_2\nabla_1}{|J|} - \frac{\omega_2\nabla_2}{(1-\sigma\kappa_2)^2} + \frac{\sigma\nabla_2\kappa_1\nabla_1}{|J|(1-\sigma\kappa_1)}},\\
    &+ \that_2 \otimes \nhat\br{\frac{\nabla_2\ds}{1-\sigma\kappa_2}+\frac{\kappa_2\nabla_2}{(1-\sigma\kappa_2)^2}}.
    \end{align*}
    \end{corollary}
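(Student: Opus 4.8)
The plan is to compute the Hessian as the iterated gradient $\nabla\nabla = \nabla\otimes\nabla$, in which the \emph{leftmost} copy of the gradient differentiates every coefficient and basis vector appearing in the rightmost copy. The first displayed line is then immediate: substituting the already-established gradient formula $\nabla = \nhat\,\ds + J^{-1}\cdot\sgrad$ into itself, and letting the outer differential operators act through to the right, gives
\begin{align*}
\nabla\nabla = \nhat\otimes\ds\br{\nhat\,\ds + J^{-1}\cdot\sgrad} + J^{-1}\cdot\sgrad\br{\nhat\,\ds + J^{-1}\cdot\sgrad}.
\end{align*}
Everything after this is the bookkeeping of expanding the right-hand side in the orthonormal frame.

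To carry out the expansion I would write the gradient explicitly as $\nabla = \nhat\,\ds + \frac{\that_1}{1-\sigma\kappa_1}\nabla_1 + \frac{\that_2}{1-\sigma\kappa_2}\nabla_2$ and apply each leftmost operator by the product rule, placing the leading basis vector ($\nhat$ or $\that_i$) in the first tensor slot. The ingredients are precisely those proven earlier: the frame is independent of $\sigma$, so $\ds$ passes through $\nhat$ and $\that_i$ and only sees the scalar factors, yielding $\ds\tfrac{1}{1-\sigma\kappa_i} = \tfrac{\kappa_i}{(1-\sigma\kappa_i)^2}$; the surface derivatives act on the scalar factors as $\nabla_j\tfrac{1}{1-\sigma\kappa_i} = \tfrac{\sigma\nabla_j\kappa_i}{(1-\sigma\kappa_i)^2}$, and on the basis vectors (from the preceding proposition) as $\nabla_1\nhat = -\kappa_1\that_1$, $\nabla_1\that_1 = \kappa_1\nhat + \omega_1\that_2$, $\nabla_1\that_2 = -\omega_1\that_1$, together with the $1\leftrightarrow 2$ counterparts carrying $\omega_2$. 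Because the scale factors $|t_i|$ are $\sigma$-independent, $\ds$ and $\nabla_i$ commute, which is exactly why the $\nhat\otimes\that_i$ and $\that_i\otimes\nhat$ blocks carry $\ds\nabla_i$ and $\nabla_i\ds$ in their respective slots.

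The routine but lengthy part is collecting terms by tensor basis $\that_a\otimes\that_b$. The $\nhat\otimes\,\cdot\,$ row comes entirely from the leftmost $\nhat\,\ds$ and produces the $\ds^2$ and $\nhat\otimes\that_i$ entries at once. For each $\that_i$ row one differentiates all three pieces of the right gradient with $\nabla_i$, multiplies through by the outer factor $\tfrac{1}{1-\sigma\kappa_i}$, and simplifies using $(1-\sigma\kappa_1)(1-\sigma\kappa_2) = |J|$. The diagonal $\that_i\otimes\that_i$ entries then acquire the normal-advection term $-\kappa_i\ds/(1-\sigma\kappa_i)$, the double derivative $\nabla_i\nabla_i/(1-\sigma\kappa_i)^2$, a Ricci term $\mp\omega_i\nabla_j/|J|$, and the $\sigma\nabla_i\kappa_i\nabla_i/(1-\sigma\kappa_i)^3$ contribution, while the off-diagonal $\that_i\otimes\that_j$ entries collect $\nabla_i\nabla_j/|J|$, the remaining $\omega_i$ piece, and a mixed $\sigma\nabla_i\kappa_j\nabla_j$ term.

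The main obstacle is organisational rather than conceptual: keeping the two families of ``extra'' terms straight, namely the $\sigma\nabla_j\kappa_i$ contributions arising because the principal curvatures vary \emph{along} the surface (these are responsible for the cubed denominators), and the antisymmetric $\omega_i$ contributions from the Ricci rotation of the frame. I would verify one diagonal entry (say $\that_1\otimes\that_1$) and one off-diagonal entry ($\that_1\otimes\that_2$) in full, then invoke the $1\leftrightarrow 2$ exchange symmetry of the construction to read off the remaining rows, taking care over the one genuine subtlety: the expected sign flip $\omega_1\to-\omega_2$ forced by the antisymmetry $\mathcal{R}^{jk}_i = -\mathcal{R}^{ji}_k$ and the convention $\mathcal{R}^{21}_2 = -\omega_2$, which is what turns the $-\omega_1\nabla_2/|J|$ of the first diagonal entry into $+\omega_2\nabla_1/|J|$ in the second.
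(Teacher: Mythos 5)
Your proposal is correct and follows essentially the same route as the paper, whose entire proof is the one-line observation that the identity ``follows from composing the gradient with itself''; you simply carry out that composition explicitly, using the frame derivatives $\nabla_j\nhat$, $\nabla_j\that_i$ from the preceding proposition and the scalar factor derivatives, and your spot-checks (the $\that_1\otimes\that_1$ and $\that_1\otimes\that_2$ entries, plus the $\omega_1\to-\omega_2$ sign flip under the $1\leftrightarrow 2$ exchange) reproduce the stated entries exactly. The only minor quibble is interpretive: the blocks carry $\ds\nabla_i$ versus $\nabla_i\ds$ simply because of the order in which the operators act, and the commutation $[\ds,\nabla_i]=0$ (which you justify independently via the $\sigma$-independence of $|t_i|$, whereas the paper later extracts it from symmetry of this very Hessian) is what makes those two entries agree, not what produces them.
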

    \begin{proof}
    By definition, this identity follows from composing the gradient with itself.
    \end{proof}

	\begin{corollary}
	The commutators of the derivatives are
    \begin{align}
    [\ds,\nabla_1] &= 0,\\
    [\ds,\nabla_2] &= 0,\\
    [\nabla_1,\nabla_2] &= -\frac{|J|}{(1-\sigma\kappa_1)^2}{\omega_1\nabla_1}-\frac{|J|}{(1-\sigma\kappa_2)^2}{\omega_2\nabla_2} + \sigma\br{\frac{\nabla_2\kappa_1}{1-\sigma\kappa_1}\nabla_1 - \frac{\nabla_1\kappa_2}{1-\sigma\kappa_2}\nabla_2}.
     \label{eq:sdf-commutator}
    \end{align}
	\end{corollary}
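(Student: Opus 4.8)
The plan is to read off all three commutators from the symmetry of the Hessian established in \cref{thm:hessian}. Because the second gradient of a scalar is manifestly symmetric in Cartesian coordinates, the operator-valued tensor $\nabla\nabla$ of \cref{thm:hessian} must be symmetric: for each ordered pair of basis directions, the scalar differential operator multiplying $\that_i \otimes \that_j$ must equal the one multiplying $\that_j \otimes \that_i$, and likewise for the mixed normal--tangential slots. Each such equality is an operator identity on scalar fields whose second-order pieces differ only in the ordering of the two first-order factors, so subtracting the two sides isolates exactly one commutator. This needs no fresh computation: every coefficient required is already displayed in \cref{thm:hessian}.

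For the normal--tangential commutators I would equate the coefficients of $\nhat \otimes \that_1$ and $\that_1 \otimes \nhat$. The curvature contributions $\kappa_1 \nabla_1/(1-\sigma\kappa_1)^2$ occur identically on both sides and cancel, leaving $(\ds \nabla_1 - \nabla_1 \ds)/(1-\sigma\kappa_1)=0$, hence $[\ds,\nabla_1]=0$, and the same for $\nabla_2$. The same conclusion follows even more elementarily from the coordinate definitions $\ds=\partial_\sigma$ and $\nabla_i=|t_i|^{-1}\partial_{s_i}$: the scale factor $|t_i|$ is a property of the $\sigma=0$ surface and is independent of $\sigma$, so $\ds$ passes freely through $\nabla_i$ and the mixed coordinate partials commute. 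I would record this direct argument as the primary proof and keep the Hessian route as a consistency check.

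For the tangential commutator I would equate the coefficients of $\that_1 \otimes \that_2$ and $\that_2 \otimes \that_1$ in \cref{thm:hessian}. The second-order terms contribute $(\nabla_1 \nabla_2 - \nabla_2 \nabla_1)/|J| = [\nabla_1,\nabla_2]/|J|$, while the first-order terms supply the $\omega_i$ pieces (defined in \cref{eq:sdf-basis-gradient}) together with the $\sigma\,\nabla_i\kappa_j$ pieces; solving for the commutator and clearing the factor $|J|$ reproduces \cref{eq:sdf-commutator} term by term.

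I expect the main subtlety to be interpretive rather than computational. The stated right-hand side appears to depend on $\sigma$, yet $[\nabla_1,\nabla_2]$ is a commutator of the $\sigma$-independent surface operators and must itself be $\sigma$-independent. The resolution --- and the point I would flag as the real content --- is that the apparent $\sigma$-dependence cancels identically by virtue of the Codazzi--Mainardi relations $\nabla_2\kappa_1 = \omega_1(\kappa_1-\kappa_2)$ and $\nabla_1\kappa_2 = \omega_2(\kappa_1-\kappa_2)$. Substituting these collapses each bracketed coefficient to $-\omega_i$ and yields the compact equivalent $[\nabla_1,\nabla_2] = -\omega_1\nabla_1 - \omega_2\nabla_2$. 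I would thus either invoke the Hessian symmetry to obtain the displayed $\sigma$-dependent expression directly, or compute $[\nabla_1,\nabla_2]$ from the definitions to obtain the compact form and then re-expand via Codazzi; checking that the two routes agree is the one step warranting genuine care.
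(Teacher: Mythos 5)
Your proposal is correct and takes essentially the same route as the paper, whose proof is precisely to enforce symmetry of the Hessian in \cref{thm:hessian} and match the $\nhat\otimes\that_i$ versus $\that_i\otimes\nhat$ and $\that_1\otimes\that_2$ versus $\that_2\otimes\that_1$ coefficients, exactly as you do (and your observation that the $\sigma$-dependence collapses to $[\nabla_1,\nabla_2]=-\omega_1\nabla_1-\omega_2\nabla_2$ is a correct and worthwhile check). One caution: keep the Codazzi--Mainardi substitution as a remark rather than as the proof, since in the paper's logical development those equations are \emph{derived from} this corollary in the result that follows, so taking them as input would make the argument circular within the paper's structure.
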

	\begin{proof}
	These identities follow from enforcing symmetry of the Hessian in corollary \ref{thm:hessian}.
	\end{proof}
	
	\begin{corollary}
	The surface divergence in terms of the rotation coefficients is
    \begin{align}
    \sgrad \cdot u_\bot = \operatorname{tr}(\sgrad u_\bot) = \nabla_1 u_1 + \nabla_2 u_2 + \omega_2 \,  u_1 - \omega_1 \, u_2.
    \end{align}
    \end{corollary}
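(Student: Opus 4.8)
The plan is to recognise the surface divergence as the trace of the surface-gradient tensor, $\sdiv u_\bot = \operatorname{tr}(\sgrad u_\bot)$, and then expand that gradient by the Leibniz rule using the already-established formula $\sgrad \that_i = \kappa_i \, \that_i \otimes \nhat + \mathcal{R}^{jk}_i \, \that_j \otimes \that_k$. Writing $\sgrad = \that_1 \nabla_1 + \that_2 \nabla_2$ and $u_\bot = u_1 \that_1 + u_2 \that_2$, I would first apply the product rule to obtain
\begin{align*}
\sgrad u_\bot = (\sgrad u_1)\otimes \that_1 + u_1\, \sgrad \that_1 + (\sgrad u_2)\otimes \that_2 + u_2 \, \sgrad \that_2.
\end{align*}
The scalar-gradient pieces $(\sgrad u_i)\otimes \that_i$ will supply the plain derivative terms, while the $u_i \, \sgrad \that_i$ pieces carry the geometric corrections.

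Next I would take the trace against the orthonormal basis $(\nhat, \that_1, \that_2)$. Because every term of $\sgrad u_\bot$ has a tangential vector in its first (gradient) slot, the normal contribution to the trace vanishes, so only the $\that_1,\that_2$ diagonal entries survive and $\operatorname{tr}(\sgrad u_\bot) = \sum_j \that_j \cdot \nabla_j u_\bot$. Applied to the scalar-gradient terms this yields $\nabla_1 u_1 + \nabla_2 u_2$ immediately. The curvature part $\kappa_i \, \that_i \otimes \nhat$ of $\sgrad \that_i$ drops out of the trace entirely, since its second slot is normal and $\that_j \cdot \nhat = 0$; this is the reassuring observation that the bending of the surface into the normal direction does not enter its intrinsic divergence.

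The remaining work is the index bookkeeping for the tangential part $\mathcal{R}^{jk}_i \, \that_j \otimes \that_k$ of $\sgrad \that_i$. Tracing selects the diagonal coefficients $\mathcal{R}^{jj}_i$, so I would evaluate $\sum_j \mathcal{R}^{jj}_1$ and $\sum_j \mathcal{R}^{jj}_2$ using the antisymmetry $\mathcal{R}^{jk}_i = -\mathcal{R}^{ji}_k$ from \cref{eq:sdf-basis-gradient}. This forces the genuinely diagonal entries $\mathcal{R}^{11}_1$ and $\mathcal{R}^{22}_2$ to vanish, and identifies $\mathcal{R}^{22}_1 = -\mathcal{R}^{21}_2 = \omega_2$ and $\mathcal{R}^{11}_2 = -\mathcal{R}^{12}_1 = -\omega_1$. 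Multiplying these by $u_1$ and $u_2$ respectively produces the correction $\omega_2 \, u_1 - \omega_1 \, u_2$ and completes the identity. The main obstacle here is purely notational: keeping the two raised indices (the gradient direction and the projected component) distinct from the lowered index (which basis vector is being differentiated) while applying the antisymmetry to the correct index pair. Once that convention is pinned down the computation is mechanical, and no further input beyond the gradient formulas and the orthonormality of the frame is needed.
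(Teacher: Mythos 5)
Your computation of $\operatorname{tr}(\sgrad u_\bot)$ is correct: the Leibniz expansion, the observation that the curvature terms $\kappa_i\,\that_i\otimes\nhat$ die in the trace, and the antisymmetry bookkeeping giving $\mathcal{R}^{11}_1=\mathcal{R}^{22}_2=0$, $\mathcal{R}^{22}_1=-\mathcal{R}^{21}_2=\omega_2$ and $\mathcal{R}^{11}_2=-\mathcal{R}^{12}_1=-\omega_1$ are all consistent with the paper's conventions in \cref{eq:sdf-basis-gradient}, and they yield the second equality $\operatorname{tr}(\sgrad u_\bot)=\nabla_1 u_1+\nabla_2 u_2+\omega_2 u_1-\omega_1 u_2$ in a clean, self-contained way. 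This is in effect a fleshed-out version of the half of the paper's proof that is compressed into the phrase ``the trace of the vector gradient.''

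However, there is a genuine gap relative to the statement as it sits in this paper: you never prove the first equality, $\sgrad\cdot u_\bot=\operatorname{tr}(\sgrad u_\bot)$, you only ``recognise'' it. Here the surface divergence is not \emph{defined} as a trace; it is defined in \cref{sec:sdf-surface} by the orthogonal-coordinate formula $\sdiv u_\bot = \frac{1}{|t_1||t_2|}\br{\pd{}{s_1}(|t_2|u_1)+\pd{}{s_2}(|t_1|u_2)}$, introduced as the adjoint of the surface gradient. Bridging that definition to your trace expression is precisely where the paper's proof does its work: it establishes $\nabla_1\log|t_2|=\omega_2$ and $\nabla_2\log|t_1|=-\omega_1$ by writing $\omega_2=(\nabla_2\that_1)\cdot\that_2$, using the equality of mixed partials $\partial_{s_2}t_1=\partial_{s_2}\partial_{s_1}p=\partial_{s_1}t_2$, and invoking orthonormality; the product rule applied to the coordinate formula then gives $\nabla_1 u_1+\nabla_2 u_2+u_1\nabla_1\log|t_2|+u_2\nabla_2\log|t_1|$, which matches your result. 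Without this lemma (or some independent justification that the coordinate-formula divergence equals the trace of the gradient), the chain from the paper's defined $\sdiv u_\bot$ to the rotation-coefficient expression is incomplete; supplying the mixed-partials computation would close your proof.
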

    \begin{proof}
	This comes from the trace of the vector gradient, and the fact $\nabla_1\log|t_2| = \omega_2$ and $\nabla_2\log|t_1| = - \omega_1$, which follows from
    \begin{align*}
    \omega_2 &\equiv (\nabla_2 \that_1) \cdot \that_2,\\
    &= \frac{1}{|t_2|^2} \pd{}{s_2}\br{\frac{t_1}{|t_1|}}\cdot t_2,\\
    &= \frac{1}{|t_2|^2|t_1|}\pd{t_1}{s_2}\cdot t_2 + 0,\\
    &= \frac{1}{|t_2|^2|t_1|}\pd{t_2}{s_1}\cdot t_2, \\
    &= \frac{1}{|t_2|}\nabla_1(|t_2|\that_2)\cdot \that_2,\\
    &= \nabla_1 \log|t_2| + 0.
    \end{align*}
	The procedure is similar for $\omega_1$.
    \end{proof}
    
    \begin{corollary}
    The derivatives of the curvatures and rotation coefficients obey,
	\begin{align}
	\nabla_2 \kappa_1 - \omega_1(\kappa_1 -\kappa_2)&= 0,\\
	\nabla_1 \kappa_2 - \omega_2(\kappa_1 -\kappa_2)&= 0,\\
	\nabla_1 \omega_2 - \nabla_2 \omega_1 +\omega_1^2 + \omega_2^2 &= - \kappa_1 \kappa_2.	
	\end{align}
    \end{corollary}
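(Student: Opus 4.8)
These three relations are the Codazzi--Mainardi equations (the first two) together with the Gauss equation, i.e.\ the Theorema Egregium (the third); they are the integrability conditions that the connection coefficients $\kappa_1,\kappa_2,\omega_1,\omega_2$ must satisfy for a genuine surface frame. The plan is to extract them by applying the scalar commutator identity \cref{eq:sdf-commutator}, evaluated on the surface at $\sigma=0$, to the frame vectors $\nhat,\that_1,\that_2$ themselves, treated as $\R^3$-valued functions on $S$.

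First I would assemble the full Darboux connection from the preceding proposition and the antisymmetry of the Ricci rotation coefficients. Reading off $\sgrad\nhat=-K$ gives $\nabla_1\nhat=-\kappa_1\that_1$ and $\nabla_2\nhat=-\kappa_2\that_2$, while the tangential derivatives, after fixing signs from $\mathcal{R}^{12}_1=\omega_1$, $\mathcal{R}^{21}_2=-\omega_2$ and the antisymmetry $\mathcal{R}^{jk}_i=-\mathcal{R}^{ji}_k$, are
\begin{align*}
\nabla_1\that_1 &= \kappa_1\nhat + \omega_1\that_2, & \nabla_2\that_1 &= \omega_2\that_2,\\
\nabla_1\that_2 &= -\omega_1\that_1, & \nabla_2\that_2 &= \kappa_2\nhat - \omega_2\that_1.
\end{align*}
This is bookkeeping from results already in hand, not genuine computation.

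Next I would note that at $\sigma=0$ the commutator \cref{eq:sdf-commutator} collapses to $[\nabla_1,\nabla_2]=-\omega_1\nabla_1-\omega_2\nabla_2$, and that this identity, concerning second derivatives of scalar functions, acts componentwise on the Cartesian components of any vector field. It therefore applies verbatim to $\nhat,\that_1,\that_2$. I would then compute $\nabla_1\nabla_2 v-\nabla_2\nabla_1 v$ directly for $v\in\{\nhat,\that_1\}$ by iterating the connection above, equate it to $-\omega_1\nabla_1 v-\omega_2\nabla_2 v$, and match coefficients against the orthonormal basis. Applying this to $\nhat$ and matching the $\that_1$ and $\that_2$ components produces the two Codazzi equations directly; applying it to $\that_1$ reproduces the first Codazzi equation from the normal ($\nhat$) component and delivers the Gauss equation from the $\that_2$ component. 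The field $\that_2$ yields no new information.

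The expansion is routine once the connection is fixed, so the only real obstacle is sign and index bookkeeping: correctly resolving the signs of $\omega_1,\omega_2$ in $\nabla_i\that_j$ from the stated antisymmetry convention, and keeping the normal component of $\nabla_i\that_j$ (which feeds the Codazzi equations) cleanly separated from the tangential components (which feed the Gauss equation). I would also record that \cref{eq:sdf-commutator} was itself obtained purely from the Cartesian symmetry of the Hessian in \cref{thm:hessian}, independently of Gauss--Codazzi, so that using it here to constrain $\kappa_i,\omega_i$ is not circular.
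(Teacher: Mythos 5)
Your proposal is correct and follows essentially the same route as the paper: the paper's proof likewise computes the commutator of each basis vector at $\sigma=0$ from the connection coefficients in \cref{eq:sdf-basis-gradient} and compares against the scalar commutator identity \cref{eq:sdf-commutator}. Your sign bookkeeping for $\nabla_j\that_i$, the identification of which relations arise from $\nhat$, $\that_1$, and $\that_2$, and the remark on non-circularity (the commutator coming from Cartesian Hessian symmetry) are all accurate.
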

    \begin{proof}
    First, directly calculate the commutator of each basis vector at the surface ($\sigma=0$) using \cref{eq:sdf-basis-gradient}.
    Then, compare with the commutators given in \cref{eq:sdf-commutator}.
    \end{proof}
    \begin{remark}
   	The first two identities are known as the Codazzi-Mainardi equations.
	The final identity is Gauss' Theorema Egregium---the Gaussian curvature can be defined using intrinsic geometric quantities.
    \end{remark}

	This completes the required vector calculus for body-centred coordinates near a boundary.

\subsection{Evolving surfaces}
\label{sec:sdf-dt}
The covered identities provide straightforward tools for describing the spatial variation of quantities near boundaries.
However, we will also consider problems in which the surface may change over time.
This requires us to augment the time derivative operators.
The full transformation between Cartesian coordinates $(x,t)$ and signed-distance coordinates $(s, \sigma, \tau)$ is 
    \begin{align}
    x &= p(s,\tau) + \sigma \, \nhat(p(s,\tau)), &
    t &= \tau.
    \end{align}
    \begin{remark}
    We emphasise that despite the equality of the time coordinates $t$ and $\tau$, the partial derivatives $\partial_t$ and $\partial_\tau$ will \emph{not} be equal in general.
    \end{remark}
    
    \begin{remark}
    We do not assume that the surface coordinates remain parallel to lines of principal curvature under evolution of the surface.
    Therefore the tangent vectors $t_i = \partial_{s_i}p$ will not in general be orthogonal.
    As before, the purpose is to derive coordinate-invariant expressions for physical quantities.
    \end{remark}
    
    We first consider the Cartesian time derivatives of the signed-distance coordinates.
    
    \begin{definition}
    \label{def:surface-velocity}
    The signed-distance partial time derivative of the surface coordinate $p(s,\tau)$ (i.e.~holding $s$ constant) is defined to be,
    \begin{align}
    \partial_\tau p(s,\tau) \ \equiv  \ v_\sigma\,  \nhat(s,\tau) + v_\bot(s,\tau) = v_\sigma \nhat + v_i t_i,
    \end{align}
    where $v_\sigma$ and $v_\bot$ are the local normal and tangential velocity of the surface.
    \end{definition}

    \begin{remark}
    In Stefan boundary problems \cite{LeBarsInterfacialConditionsPure2006,VasilDynamicBifurcationsPattern2011,HesterImprovedPhasefieldModels2020} the motion of the interface is often described by coordinates that move normal to the interface.
    However, it can be convenient to generalise to coordinates that also move tangential to the boundary, such as for material coordinates of a translating elastic membrane in fluid flow.
    \end{remark}

    \begin{lemma}
    \label{lem:surface-velocity-surface-gradient}
    The surface gradient of the tangent velocity is given by
    \begin{align}
    \sgrad v_\bot &= (\sgrad v_\bot) \cdot \Pi + (K \cdot v_\bot)\otimes \nhat,
    \end{align}
    \end{lemma}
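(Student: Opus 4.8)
The plan is to recognise the stated identity as nothing more than the orthogonal decomposition of the two-tensor $\sgrad v_\bot$ according to whether its \emph{right} index lies in the tangent space or along the normal. Since $\Pi + \nhat\otimes\nhat = I$, any two-tensor $T$ whose right leg lives in $\R^3$ splits as $T = T\cdot\Pi + (T\cdot\nhat)\otimes\nhat$. Applied to $T = \sgrad v_\bot$, the first piece is precisely the projected term $(\sgrad v_\bot)\cdot\Pi$ appearing in the statement, so the entire content of the lemma reduces to establishing that the normal coefficient obeys $(\sgrad v_\bot)\cdot\nhat = K\cdot v_\bot$.

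To prove that, I would expand both the surface-gradient operator and the tangential velocity in the (not necessarily orthogonal) coordinate basis, writing $\sgrad = \nabla s_j\,\partial_{s_j}$ and $v_\bot = v_i\, t_i$ with $t_i = \partial_{s_i} p$, and then dot the right leg into $\nhat$. The term carrying $\partial_{s_j} v_i$ picks up the factor $t_i\cdot\nhat = 0$ and drops out at once, leaving only $\sum_{i,j}\nabla s_j\, v_i\,(\partial_{s_j} t_i)\cdot\nhat$. The crucial ingredient is the normal component of the second derivative of the parameterisation: differentiating the orthogonality relation $t_i\cdot\nhat = 0$ and invoking $\partial_{s_j}\nhat = t_j\cdot\sgrad\nhat = -\,t_j\cdot K$ (from \cref{eq:jacobian} at $\sigma = 0$ together with the definition $\sgrad\nhat = -K$) gives $(\partial_{s_j} t_i)\cdot\nhat = -\,t_i\cdot\partial_{s_j}\nhat = t_i\cdot K\cdot t_j$, where symmetry of the shape tensor keeps the contraction order harmless.

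It then remains to contract against the dual basis. Here I would use the completeness relation $\sum_j \nabla s_j\otimes t_j = \Pi$, which follows from $\nabla s_j\cdot t_i = \delta_{ij}$ and acts as the identity on tangent vectors; this is what lets me avoid assuming orthogonal coordinates, consistent with the remark preceding the lemma that the evolving tangent vectors need not be orthogonal. Since $K\cdot t_i$ is tangential, $\sum_j \nabla s_j\,(t_j\cdot K\cdot t_i)$ collapses to $K\cdot t_i$, and summing against $v_i$ produces $K\cdot\!\left(\sum_i v_i t_i\right) = K\cdot v_\bot$. Assembling the pieces yields $(\sgrad v_\bot)\cdot\nhat = K\cdot v_\bot$, which substituted back into the decomposition gives the claim.

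The only genuine subtlety, and the step I expect to demand the most care, is the index bookkeeping: keeping straight which leg of the two-tensor is differentiated versus contracted, confirming the surviving scalar is $t_i\cdot K\cdot t_j$ (with symmetry of $K$ rescuing the ordering), and applying the dual-basis completeness on the correct side. None of the computation is deep; the risk is purely one of consistent conventions.
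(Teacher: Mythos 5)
Your proof is correct, but it takes a genuinely different route from the paper's. The paper obtains the lemma in one line by specialising \cref{cor:surface-vector-gradient}: set $u = v_\bot$ (so $u_\sigma = 0$), evaluate at $\sigma = 0$ (so $J^{-1} = I$), which gives $\grad v_\bot = \nhat\otimes\ds v_\bot + (K v_\bot)\otimes\nhat + \sgrad v_\bot\cdot\Pi$, and then project tangentially on the left (differentiation) index, killing the $\nhat\otimes\ds v_\bot$ term and leaving exactly the claim. You instead prove the identity from scratch: you split the \emph{right} leg of $\sgrad v_\bot$ with $I = \Pi + \nhat\otimes\nhat$, reduce the lemma to $(\sgrad v_\bot)\cdot\nhat = K\cdot v_\bot$, and verify that by expanding in the coordinate basis, using $(\partial_{s_j}t_i)\cdot\nhat = -\,t_i\cdot\partial_{s_j}\nhat = t_i\cdot K\cdot t_j$ (the second fundamental form, via \cref{eq:jacobian} and $\sgrad\nhat = -K$) together with the completeness relation $\sum_j \nabla s_j\otimes t_j = \Pi$. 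Each step checks out: $K\cdot t_i$ is indeed tangential, so the dual-basis contraction collapses as you say, and the symmetry of $K$ licenses the index reordering. What the paper's route buys is economy—it reuses machinery already established—though that machinery was derived in the orthonormal principal-curvature frame and is applied here via the coordinate invariance of the final expression. What your route buys is self-containedness and generality: it works directly in arbitrary, possibly non-orthogonal surface coordinates, which is precisely the setting flagged in the remark preceding the lemma (under evolution the tangent vectors $t_i$ need not remain orthogonal), and it makes the geometric content explicit—the normal part of the surface gradient of a tangential field is the second fundamental form contracted with that field.
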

    \begin{proof}
    Take the surface projection of \cref{cor:surface-vector-gradient} on the surface velocity $v_\bot$ at $\sigma=0$.
    \end{proof}
        
    \begin{lemma}
    \label{lem:sdf-dtaut}
    The signed-distance time derivative of the tangent vectors is 
    \begin{align}
    \partial_\tau t_i &= t_i \cdot ((\sgrad v_\sigma + K v_\bot )\otimes \nhat  - v_\sigma K + \sgrad v_\bot \cdot \Pi).
    \end{align}
    \end{lemma}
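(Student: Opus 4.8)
The plan is to exploit that $t_i = \partial_{s_i} p$ is itself a coordinate derivative, so that the time derivative can be moved past it. First I would write
\[
\partial_\tau t_i = \partial_\tau \partial_{s_i} p = \partial_{s_i} \partial_\tau p,
\]
which is justified because $s$ and $\tau$ are independent coordinates and $p(s,\tau)$ is smooth, so the mixed partials commute by Clairaut's theorem. Substituting the surface velocity of \cref{def:surface-velocity}, $\partial_\tau p = v_\sigma \nhat + v_\bot$, then reduces the task to computing the coordinate derivative $\partial_{s_i}(v_\sigma \nhat + v_\bot)$ of a vector field living on the surface.

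Next I would convert this coordinate derivative into the coordinate-invariant surface gradient via the chain rule $\partial_{s_i} w = t_i \cdot \sgrad w$, valid for any surface field $w$, and---crucially---valid even when the $t_i$ are non-orthogonal, since contracting $\sgrad w$ against $t_i$ recovers the $i$-th coordinate derivative through the metric. It then remains to expand $\sgrad(v_\sigma \nhat + v_\bot)$ term by term. For the normal part, the product rule together with the shape-tensor identity $\sgrad \nhat = -K$ gives $\sgrad(v_\sigma \nhat) = (\sgrad v_\sigma)\otimes \nhat - v_\sigma K$. For the tangential part I would invoke \cref{lem:surface-velocity-surface-gradient}, $\sgrad v_\bot = (\sgrad v_\bot)\cdot \Pi + (K v_\bot)\otimes \nhat$. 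Adding these and grouping the two $\nhat$-valued contributions yields
\[
\sgrad(v_\sigma \nhat + v_\bot) = (\sgrad v_\sigma + K v_\bot)\otimes \nhat - v_\sigma K + \sgrad v_\bot \cdot \Pi,
\]
and contracting on the left with $t_i$ produces the claimed formula.

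The computation is short, and the one place where care is needed---and where a naive calculation would silently drop a term---is that the surface gradient of a tangential field is not tangential: differentiating $v_\bot$ along the surface tilts it out of the tangent plane by an amount set by the second fundamental form, producing the normal piece $(K v_\bot)\otimes \nhat$. This is precisely the content of \cref{lem:surface-velocity-surface-gradient}, so the main obstacle is really bookkeeping rather than analysis. In particular, I would take care that the three ingredients---the chain-rule identity $\partial_{s_i} = t_i \cdot \sgrad$, the shape-tensor identity $\sgrad \nhat = -K$, and \cref{lem:surface-velocity-surface-gradient}---are genuine coordinate-invariant tensor equations, so that they remain valid in the evolving, non-orthogonal frame of this section, where the $t_i$ need no longer align with the principal curvature directions.
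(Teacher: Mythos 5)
Your proposal is correct and follows essentially the same route as the paper's own proof: commute the mixed partials $\partial_\tau \partial_{s_i} p = \partial_{s_i}\partial_\tau p$, substitute $\partial_\tau p = v_\sigma \nhat + v_\bot$, convert via $\partial_{s_i} = t_i \cdot \sgrad$, and expand using $\sgrad\nhat = -K$ together with \cref{lem:surface-velocity-surface-gradient}. Your added remarks on why these identities survive in the non-orthogonal evolving frame are a sound (and welcome) elaboration of what the paper leaves implicit, but they do not constitute a different argument.
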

    \begin{proof}
    This is a straightforward procedure that relies on the equality of mixed partials in moving signed-distance coordinates,
    \begin{align*}
    \partial_\tau t_i 
    &= \partial_{\tau} \left(\partial_{s_i} p\right),\\
    &= \partial_{s_i}\partial_\tau p,\\
        &= \partial_{s_i}(v_\sigma \nhat + v_\bot),\\
        &= t_i \cdot \sgrad(v_\sigma \nhat + v_\bot),\\
        &= t_i \cdot ((\sgrad v_\sigma) \otimes \nhat +v_\sigma \sgrad \nhat + \sgrad v_\bot),\\
        &= t_i \cdot ((\sgrad v_\sigma)\otimes \nhat + v_\sigma \sgrad \nhat + \sgrad v_\bot \cdot \Pi + (Kv_\bot)\otimes \nhat).
    \end{align*}
    \end{proof}
    
    \begin{lemma}
    \label{lem:sdf-dtaun}
    The partial time derivative of the normal vector is given by
    \begin{align}
    \partial_\tau \nhat = -(\sgrad v_\sigma + K v_\bot).
    \end{align}
    \end{lemma}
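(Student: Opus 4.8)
The plan is to exploit the two defining algebraic constraints on the unit normal---that it has fixed length and that it is orthogonal to the tangent space---and to feed in the already-established expression for $\partial_\tau t_i$ from \cref{lem:sdf-dtaut}. No new geometry is needed; everything reduces to differentiating constraints and one contraction.

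First I would observe that differentiating $\nhat \cdot \nhat = 1$ in $\tau$ gives $\nhat \cdot \partial_\tau \nhat = 0$, so $\partial_\tau \nhat$ carries no normal component and lies entirely in the tangent space. This is the structural fact that makes the result clean: to pin down $\partial_\tau \nhat$ it suffices to know its contraction against a spanning set of tangent vectors, even though (by the remark preceding \cref{lem:sdf-dtaut}) the $t_i$ need not be orthogonal under evolution.

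Second, I would differentiate the orthogonality relation $\nhat \cdot t_i = 0$ with respect to $\tau$, yielding $\partial_\tau \nhat \cdot t_i = -\,\nhat \cdot \partial_\tau t_i$ for each $i$. The right-hand side is then evaluated by substituting the expression for $\partial_\tau t_i$ from \cref{lem:sdf-dtaut} and projecting onto $\nhat$. The three terms collapse: the $(\sgrad v_\sigma + K v_\bot)\otimes\nhat$ piece survives with coefficient $\nhat\cdot\nhat = 1$, while the $-v_\sigma K$ and $\sgrad v_\bot \cdot \Pi$ pieces annihilate against $\nhat$ because $K\cdot\nhat = 0$ (the shape tensor lives on the tangent space) and $\Pi\cdot\nhat = 0$ (the surface projector kills the normal). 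This leaves $\nhat \cdot \partial_\tau t_i = t_i\cdot(\sgrad v_\sigma + K v_\bot)$, hence $\partial_\tau \nhat \cdot t_i = -\,t_i\cdot(\sgrad v_\sigma + K v_\bot)$.

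Finally, since $\sgrad v_\sigma + K v_\bot$ is itself a tangent-space vector, both $\partial_\tau \nhat$ and $-(\sgrad v_\sigma + K v_\bot)$ are tangential and agree under contraction with every $t_i$; as the $t_i$ span the tangent plane, the two vectors coincide, which is the claim. The only point demanding care---and the main potential obstacle---is the index and contraction bookkeeping in the middle step: one must keep straight which slot of each rank-two tensor $t_i$ contracts into versus which slot $\nhat$ contracts into, and verify the vanishing of the $K$ and $\Pi$ contributions. Once the convention that $\partial_\tau t_i = t_i\cdot(\cdots)$ contracts $t_i$ on the left is respected, the cancellations are immediate and no genuine computation remains.
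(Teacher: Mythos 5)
Your proposal is correct and follows essentially the same route as the paper's own proof: differentiate $\nhat\cdot\nhat=1$ and $\nhat\cdot t_i=0$ in $\tau$, substitute the expression for $\partial_\tau t_i$ from \cref{lem:sdf-dtaut}, and use the fact that $K\cdot\nhat=0$ and $\Pi\cdot\nhat=0$ so only the $(\sgrad v_\sigma + K v_\bot)\otimes\nhat$ term survives the contraction. The only difference is that you spell out the contraction bookkeeping and the final spanning argument, which the paper leaves implicit.
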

    \begin{proof}
    This is a consequence of orthogonality and \cref{lem:sdf-dtaut},
    \begin{align*}
    \partial_\tau (\nhat\cdot\nhat) &= 0 \implies \nhat \cdot \partial_\tau \nhat = 0,\\
    \partial_\tau (\nhat\cdot t_i) &= 0 \implies t_i \cdot \partial_\tau \nhat = -\nhat \cdot \partial_\tau t_i = -t_i \cdot (\sgrad v_\sigma + K v_\bot).
    \end{align*}
        \end{proof}
        
    \begin{lemma}
    \label{lem:sdf-dts}
    The Cartesian partial time derivatives of the signed-distance coordinates are
    \begin{align}
    \label{eq:surface-dtcoords}
    \partial_t \sigma &= - v_\sigma , &
    \partial_t s_i t_i &= - v_\bot + \sigma J^{-1}\cdot \sgrad v_\sigma,
    \end{align}
    where we have used the Einstein summation convention for the surface coordinate derivatives.
    \end{lemma}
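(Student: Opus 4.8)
The plan is to exploit that the defining relation $x = p(s,\tau) + \sigma\,\nhat(p(s,\tau))$ holds as an identity, so differentiating it with respect to the Cartesian time $t$ at a fixed spatial point $x$ must return zero. The essential point is that $\partial_t$ holds $x$ constant while $\partial_\tau$ holds the coordinate labels $(s,\sigma)$ constant; since $\tau=t$, both the explicit $\tau$-dependence and the induced motion of the labels $(s(x,t),\sigma(x,t))$ contribute. Applying the chain rule to $0 = \partial_t x$ therefore gives
\begin{align}
0 = \partial_\tau x\big|_{s,\sigma} + \frac{\partial x}{\partial s_i}\,\partial_t s_i + \frac{\partial x}{\partial \sigma}\,\partial_t \sigma,\nonumber
\end{align}
where the first term is precisely the $\partial_\tau$ derivative at fixed labels.

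Next I would substitute the building blocks already assembled. The fixed-label time derivative is $\partial_\tau x|_{s,\sigma} = \partial_\tau p + \sigma\,\partial_\tau \nhat$, into which I insert \cref{def:surface-velocity}, $\partial_\tau p = v_\sigma\,\nhat + v_\bot$, and \cref{lem:sdf-dtaun}, $\partial_\tau \nhat = -(\sgrad v_\sigma + K v_\bot)$. For the spatial Jacobian I would use $\partial_{s_i} x = t_i \cdot J$ (which follows from differentiating $p+\sigma\nhat$ and using $\sgrad\nhat = -K$, and needs no orthogonality of the $t_i$) together with $\partial_\sigma x = \nhat$. Collecting terms, the identity becomes
\begin{align}
0 = v_\sigma\,\nhat + v_\bot - \sigma(\sgrad v_\sigma + K v_\bot) + \br{\partial_t s_i\, t_i}\cdot J + \nhat\,\partial_t \sigma.\nonumber
\end{align}

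The final step is to resolve this vector equation into its normal and tangential parts. Every term except $v_\sigma\,\nhat$ and $\nhat\,\partial_t\sigma$ lies in the tangent space (the surface gradient of a scalar, the image of $K$, and $t_i\cdot J$ are all tangential, since $J$ and $K$ fix the tangent plane), so the normal component immediately yields $\partial_t \sigma = -v_\sigma$. For the tangential component I would recognise the combination $v_\bot - \sigma K v_\bot = (I-\sigma K)v_\bot = J v_\bot$, so that the equation reads $0 = J v_\bot - \sigma\,\sgrad v_\sigma + \br{\partial_t s_i\, t_i}\cdot J$; using the symmetry of $J$ to rewrite $\br{\partial_t s_i\, t_i}\cdot J = J\cdot\br{\partial_t s_i\, t_i}$ and then left-multiplying by $J^{-1}$ gives $\partial_t s_i\, t_i = -v_\bot + \sigma J^{-1}\cdot\sgrad v_\sigma$, as claimed.

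I expect the only real obstacle to be conceptual rather than computational: keeping the two notions of time derivative strictly separate and correctly attributing the full chain-rule contribution of the moving labels $(s,\sigma)$ to $\partial_t$. Once the chain rule is set up and the earlier lemmas are substituted, the normal/tangential split and the factoring of $J$ are routine.
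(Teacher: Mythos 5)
Your proposal is correct and follows essentially the same route as the paper: apply the chain rule for $\partial_t$ to the identity $x = p + \sigma\,\nhat$, substitute \cref{def:surface-velocity} and \cref{lem:sdf-dtaun}, and split the resulting vector equation into normal and tangential parts. The only cosmetic difference is that you write the Jacobian contraction as $\partial_{s_i}x = t_i \cdot J$ and invert $J$ explicitly using its symmetry, whereas the paper carries the factor $(\Pi - \sigma K)$ through and inverts it at the end---the algebra is identical in substance.
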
	
    \begin{proof}
    The chain rule gives the Cartesian partial time derivative as
    \begin{align*}
    \partial_t &= \partial_\tau + \partial_t\sigma \partial_\sigma + \partial_t s_i \cdot \pd{}{s_i} =  \partial_\tau + \partial_t\sigma \partial_\sigma + \partial_t s_i t_i \cdot \sgrad. 
    \end{align*}
    We apply this to the Cartesian coordinates, which must necessarily give zero.
    Using \cref{lem:sdf-dtaut,lem:sdf-dtaun}, we find
    \begin{align*}
    \partial_t x &= \partial_t (p + \sigma \nhat),\\
        &= (\partial_\tau + \partial_t \sigma \partial_\sigma + \partial_t s_i t_i \cdot \sgrad)(p + \sigma \nhat),\\
        &= \partial_\tau p + \partial_t s_i t_i \cdot \sgrad p + \sigma \partial_\tau \nhat + \partial_t \sigma \nhat + \sigma \partial_t s_i t_i \cdot \sgrad \nhat,\\
        &= (v_\sigma + \partial_t \sigma) \nhat + (v_\bot + \sigma \partial_\tau \nhat + \partial_t s_i t_i \cdot (\Pi - \sigma K)),\\
        &= (v_\sigma + \partial_t \sigma) \nhat + (v_\bot - \sigma(\sgrad v_\sigma +K v_\bot)   + \partial_t s_i t_i \cdot (\Pi - \sigma K)),\\
        &= (v_\sigma + \partial_t \sigma) \nhat + ((v_\bot +\partial_t s_i t_i )\cdot (\Pi - \sigma K) - \sigma \nabla_\bot v_\sigma).
    \end{align*}
    Requiring both the normal and tangential components to be zero gives \cref{eq:surface-dtcoords}.
    \end{proof}
	
    \begin{corollary}
    The Cartesian partial time derivative in moving signed-distance coordinates is
    \begin{align}
    \partial_t &= \partial_\tau  - v_\sigma \partial_\sigma -v_\bot \cdot \sgrad + \sigma \sgrad v_\sigma \cdot J^{-1} \cdot \sgrad.
    \end{align}
    \end{corollary}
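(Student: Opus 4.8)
The plan is to read the result straight off the chain-rule expansion already assembled in the proof of \cref{lem:sdf-dts}. There the Cartesian time derivative was written in the moving signed-distance coordinates $(s,\sigma,\tau)$ as
\begin{align*}
\partial_t = \partial_\tau + \partial_t\sigma\, \ds + \partial_t s_i\, t_i \cdot \sgrad,
\end{align*}
so the corollary is obtained by substituting the explicit coordinate time derivatives computed in that lemma and collecting terms. No new geometric input is required beyond what \cref{lem:sdf-dts} already supplies.

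First I would insert the two identities $\partial_t\sigma = -v_\sigma$ and $\partial_t s_i\, t_i = -v_\bot + \sigma J^{-1}\cdot \sgrad v_\sigma$ from \cref{lem:sdf-dts}. The normal piece $\partial_t\sigma\,\ds$ immediately gives $-v_\sigma\,\ds$, and the first half of the tangential piece gives $-v_\bot\cdot\sgrad$. This already reproduces three of the four terms of the claimed expression, leaving only the contribution $\sigma\,(J^{-1}\cdot \sgrad v_\sigma)\cdot \sgrad$ to be recast into the stated form.

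The single genuine manipulation is to rewrite this last term as $\sigma\,\sgrad v_\sigma \cdot J^{-1}\cdot \sgrad$, which follows from the symmetry of $J^{-1}$. Since $J = I - \sigma K$ with $K = \kappa_1\,\that_1\otimes\that_1 + \kappa_2\,\that_2\otimes\that_2$ manifestly symmetric, both $J$ and $J^{-1}$ are symmetric, so $(J^{-1}\cdot a)\cdot b = a\cdot J^{-1}\cdot b$ for any tangent vectors $a,b$; taking $a=\sgrad v_\sigma$ and $b=\sgrad$ completes the identity. I do not anticipate any serious obstacle here: the only points demanding care are this transpose step and the bookkeeping of which factor the operator $\sgrad$ acts on, since in $\sgrad v_\sigma \cdot J^{-1}\cdot \sgrad$ the covector $\sgrad v_\sigma$ is a fixed coefficient while the trailing $\sgrad$ is the live differential operator acting on everything to its right.
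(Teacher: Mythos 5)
Your proposal is correct and follows exactly the route the paper intends: the corollary is an immediate substitution of the identities $\partial_t\sigma = -v_\sigma$ and $\partial_t s_i\, t_i = -v_\bot + \sigma J^{-1}\cdot\sgrad v_\sigma$ from \cref{lem:sdf-dts} into the chain-rule expansion $\partial_t = \partial_\tau + \partial_t\sigma\,\ds + \partial_t s_i\, t_i\cdot\sgrad$ already established in that lemma's proof. Your extra care in justifying the transpose step via the symmetry of $J^{-1}$ (inherited from the symmetric shape tensor $K$) is a valid and worthwhile detail that the paper leaves implicit.
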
	
    
    \begin{corollary}
    The Cartesian time derivative of the vector velocity is,
    \begin{align}
    \partial_t u &= \br{\partial_\tau u_\sigma - v\partial_\sigma u_\sigma + (\sigma \sgrad v \cdot J^{-1} - v_\bot )\cdot (\sgrad u_\sigma + K u_\bot)}\nhat\nonumber\\
    &+\br{\partial_\tau u_\bot - v\partial_\sigma u_\bot + (\sigma \sgrad v \cdot J^{-1} - v_\bot ) \cdot (\sgrad u_\bot - K u_\sigma)},
    \end{align}
    where
    \begin{align}
    \partial_\tau u_\bot = (\partial_\tau u_i) t_i + (u_\bot \cdot (\sgrad v_\sigma + K\cdot v_\bot)) \nhat + u_\bot \cdot (\sgrad u_\bot \cdot \Pi - v_\sigma K).\nonumber
    \end{align}
    \end{corollary}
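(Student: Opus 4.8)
The plan is to apply the operator for the Cartesian partial time derivative in moving signed-distance coordinates, derived in the preceding corollary, namely $\partial_t = \partial_\tau - v_\sigma\ds - v_\bot\cdot\sgrad + \sigma\,\sgrad v_\sigma\cdot J^{-1}\cdot\sgrad$, directly to the decomposed velocity $u = u_\sigma\,\nhat + u_\bot$. First I would combine the two tangential transport terms into a single surface-directional derivative along the vector $w \equiv \sigma\,\sgrad v_\sigma\cdot J^{-1} - v_\bot$, so that $\partial_t u = \partial_\tau u - v_\sigma\,\ds u + w\cdot\sgrad u$. The remaining work is to evaluate each of these three pieces on the vector field $u$ and to collect the results into normal and tangential parts.

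The three pieces are handled by the basis-derivative identities established earlier. For the material term $\partial_\tau u$, Leibniz gives $\partial_\tau(u_\sigma\nhat) = (\partial_\tau u_\sigma)\nhat + u_\sigma\,\partial_\tau\nhat$, and \cref{lem:sdf-dtaun} supplies $\partial_\tau\nhat = -(\sgrad v_\sigma + K v_\bot)$; the tangential piece $\partial_\tau u_\bot$ is obtained by writing $u_\bot = u_i t_i$ and differentiating with \cref{lem:sdf-dtaut}, which yields the stated auxiliary formula for $\partial_\tau u_\bot$, its $\nhat$-component arising from the $(\sgrad v_\sigma + K v_\bot)\otimes\nhat$ part of $\partial_\tau t_i$. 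For the normal-transport term $-v_\sigma\,\ds u$, I would use that the frame $(\nhat,\that_1,\that_2)$ is independent of $\sigma$, so the derivative acts only on the components, $\ds u = (\ds u_\sigma)\nhat + \ds u_\bot$. For the surface-transport term $w\cdot\sgrad u$, I would read off $\sgrad u = (\sgrad u_\sigma + K u_\bot)\otimes\nhat + (\sgrad u_\bot\cdot\Pi - u_\sigma K)$ from \cref{cor:surface-vector-gradient} (it is precisely the factor multiplying $J^{-1}$ there) and contract on the left with $w$.

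Finally I would gather terms by their normal and tangential character. The explicit $\nhat$-coefficient collects $\partial_\tau u_\sigma$, $-v_\sigma\,\ds u_\sigma$, and $w\cdot(\sgrad u_\sigma + K u_\bot)$, reproducing the first bracket, while the remaining contributions assemble into the second bracket with $\partial_\tau u_\bot$ left packaged as computed above. The main obstacle is organisational rather than conceptual: the shape tensor $K$ couples normal and tangential directions whenever a tangent basis vector is differentiated, in both $\partial_\tau$ and $\sgrad$, so each of the three pieces emits normal components through $K$, and the crux is to track these consistently so that the projection $\Pi$ and the $K$-terms line up against $\nhat$ as claimed. No step invokes any identity beyond \cref{lem:sdf-dtaun,lem:sdf-dtaut,cor:surface-vector-gradient} together with the $\sigma$-independence of the frame, so once the decomposition of $\partial_t$ is in hand the proof reduces to a careful accounting exercise.
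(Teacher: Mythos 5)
Your strategy is the one the paper itself uses---its entire proof is ``consider the time derivative of the velocity components and signed-distance basis vectors''---and your three intermediate evaluations, of $\partial_\tau u$, of $-v_\sigma \ds u$, and of $w\cdot\sgrad u$ with $w = \sigma\,\sgrad v_\sigma\cdot J^{-1} - v_\bot$, are exactly the right ingredients, resting on \cref{lem:sdf-dtaun,lem:sdf-dtaut,cor:surface-vector-gradient} as you say.

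The gap is in the final step, which you assert rather than perform. Your own Leibniz expansion produces the tangential term $u_\sigma\,\partial_\tau\nhat = -u_\sigma\br{\sgrad v_\sigma + K v_\bot}$, and this term appears in neither bracket of the statement; nor does it cancel within your expansion. The only plausible partner, $w\cdot\br{-u_\sigma K}$, contains $+u_\sigma K v_\bot$ (since $K$ is symmetric, $v_\bot \cdot K = K v_\bot$), and combining the pair merely rewrites it as $-u_\sigma\,\sgrad v_\sigma\cdot J^{-1}$ (using $I + \sigma J^{-1}K = J^{-1}$), which is still not what the second bracket contains. So a faithful execution of your plan yields the printed formula \emph{plus} $u_\sigma\partial_\tau\nhat$, and the claim that ``the remaining contributions assemble into the second bracket'' fails. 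A sanity check makes the leftover term concrete: take $u = \nhat$ (so $u_\sigma \equiv 1$, $u_\bot \equiv 0$) near a flat surface ($K = 0$) moving with non-uniform normal speed $v_\sigma$ and $v_\bot = 0$; the printed corollary predicts $\partial_t u = 0$, yet the level sets of $\sigma$ tilt as the surface deforms, so $\partial_t \nhat = \partial_\tau\nhat = -\sgrad v_\sigma \neq 0$, which is precisely the missing term. In other words, the corollary as printed omits $u_\sigma\partial_\tau\nhat$ (it is exact only for stationary geometry or purely tangential $u$)---apparently one of the bookkeeping slips the paper's introduction warns about---and a correct write-up of your argument must either display that term or flag the discrepancy, not silently absorb it. The same habit appears in a smaller way in the auxiliary formula: \cref{lem:sdf-dtaut} yields the last term as $u_\bot\cdot\br{\sgrad v_\bot\cdot\Pi - v_\sigma K}$, with the \emph{surface velocity} $v_\bot$, so the printed $\sgrad u_\bot\cdot\Pi$ is a typo that your derivation would correct rather than ``yield''.
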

    \begin{proof}
    This follows from considering the time derivative of the velocity components and signed-distance basis vectors.
    \end{proof}

\subsection{Application: Surface Boundary Layer Coordinates}
\label{sec:asymptotic-surface}
A chief application of the signed-distance coordinate system is to furnish a useful coordinate system for boundary layers, including possible time evolution. The coordinates are \textit{per se}, not a numerical/computational device. The overall idea is to use signed-distance coordinates as a tool for \textit{mathematical analysis} that can subsequently distil more tractable (and hence computationally feasible) model equations. We took this approach to good ends in studying (e.g.) complex ice-melting--buoyancy-driven convection studies \cite{HesterImprovedPhasefieldModels2020}, extending earlier work in diffuse-domain \cite{LiSolvingPDEsComplex2009} and phase-field methods \cite{PlappPhaseFieldModels2012}.

\begin{remark}
Signed-distance coordinates can form singularities when there are multiple closest points on the surface (e.g. \cref{fig:sdf-singularities}).
Problems happen when $J$ is non-invertible,  
\begin{align}
\sigma_{*} \ = \ \frac{\Kbar \pm \sqrt{\Kbar^{2} - |K|}}{|K|}  \ = \  \kappa_{1}^{-1}, \, \kappa_{2}^{-1}.
\end{align}
But around every smooth surface, we are guaranteed the existence of a tubular neighbourhood on which the signed-distance function remains smooth (blue region).
This is, precisely, the use case for infinitesimally thin boundary layers used in multiple scales matched asymptotics. 
\end{remark}	

\begin{remark}
Ordinary spherical/cylindrical polar coordinates are trivial special cases of signed-distance coordinates, where singularities only occur at $x = y = z = 0$ (spherical) or $x=y=0$ (cylindrical). Nevertheless, both coordinate systems accommodate globally analytic functions with special restrictions at the singularities \cite{VasilTensorCalculusPolar2016,VasilTensorCalculusSpherical2019}. 
Studying the finer algebraic and combinatorial properties of such functions near these coordinate provides much information regarding the geometry and symmetry of spheres and solid balls. Moreover, such considerations provide stable, accurate and efficient numerical algorithms, notwithstanding singularities. An analysis of functions in signed-distance coordinates would produce interesting insights into the geometry of more general surfaces.   
\end{remark}

\begin{figure}[hbt]
	\centering
    \includegraphics[width=.6\linewidth]{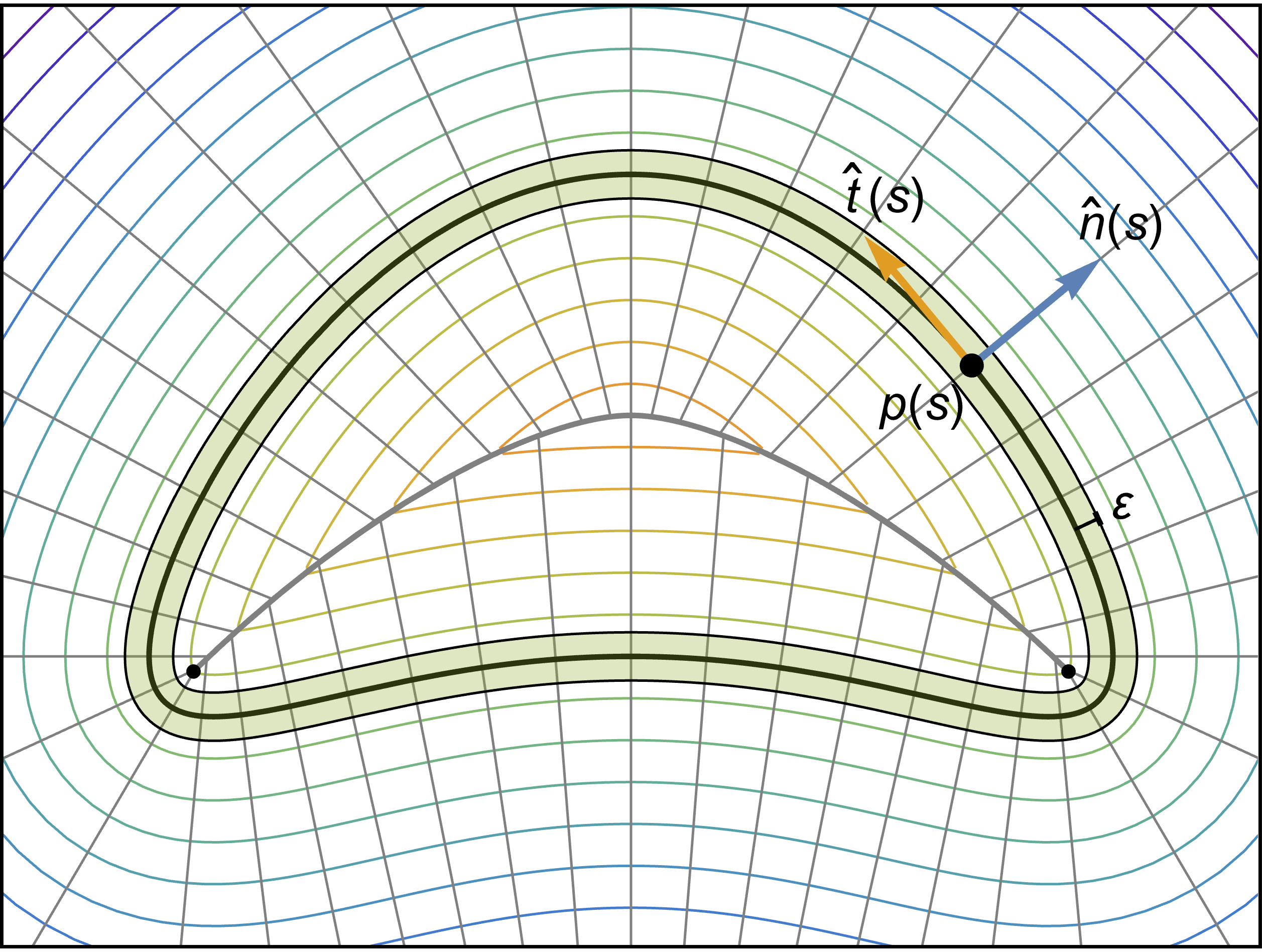}
    \caption{signed-distance coordinates around curved surface form singularities.
    They either result from lying on the curve evolute (black points) or from losing uniqueness of the closest point (grey curve).
    But every smooth surface admits a finite thickness tubular neighbourhood on which the signed-distance $\sigma$ remains smooth (green boundary region).
    Signed-distance coordinates thus work for infinitesimal boundary layers around \emph{any} smooth surface.}
    \label{fig:sdf-singularities}
\end{figure}

We begin by rescaling the signed-distance coordinate by the small positive number $\eps \ll 1$,
\begin{align}
\sigma &= \eps\, \xi.
\end{align}
Provided that the rescaled normal coordinate $\xi$ is sufficiently small, we can rewrite the geometric factors $J$ and $J^{-1}$ in the form
	\begin{align}
	J &= 1 - \sigma K = 1 - \eps \xi K, &
	J^{-1} &= (1-\sigma K)^{-1} = \sum_{k=0}^\infty \eps^k \xi^k K^{k}.
	\end{align}
The tangential derivative is unaffected by this transformation, whereas the normal derivative gains a power of $\eps^{-1}$, meaning that the gradient and time derivative operators are now given by
	\begin{align}
    \nabla 	&= \eps^{-1} \nhat \partial_\xi + \sum\nolimits_{k=0}^\infty \eps^k \xi^k K^k \sgrad,\\
    \partial_t &= -\eps^{-1} v_\sigma \partial_\xi + (\partial_\tau -v_\bot \cdot \sgrad) + \eps \left(\xi \sgrad v_\sigma \cdot\sum\nolimits_{k=0}^\infty\eps^k\xi^k K^k\sgrad\right).
	\end{align}
We can then rewrite all previous differential operators as formal series in $\eps$	
	\begin{align}
    \nabla f	&= \eps^{-1} \nhat \partial_\xi f + \sum\nolimits_{k=0}^\infty \eps^k \xi^k K^k \sgrad f,\\
    \nabla v &= \eps^{-1}\nhat \otimes (\nhat \,\partial_\xi u_\sigma +  \partial_\xi u_\bot  )\nonumber\\
    &+ \left(\sum\nolimits_{k=0}^\infty \eps^k \xi^k K^{k}\right)\left((\sgrad u_\sigma+ K u_\bot)\otimes \nhat + (\sgrad u_\bot\cdot\Pi - u_\sigma \, K )\right),\\
    |J|\divof{u}&= \eps^{-1} {\partial_\xi u_\sigma}  -\partial_\xi(\xi \Kbar u_\sigma) + \sdiv{u_\bot} + \eps \br{\partial_\xi(\xi^2 \Kabs u_\sigma) - \xi \sdiv(\Khat u_\bot)},\\
	\lapof f &= \eps^{-2} \partial_\xi^2 f + \eps^{-1}(-\Kbar \partial_\xi f) + \eps^0(-\xi \overline{K^2}\partial_\xi f + \sgrad \cdot \sgrad f) + \O(\eps)\\
    -\nabla\times\nabla\times{u}&= \eps^{-2} \partial_\xi^2 u_\bot + \eps^{-1} \br{- \Kbar \partial_\xi u_\bot - \partial_\xi \sgrad u_\sigma- \nhat \sgrad \cdot (\partial_\xi u_\bot)} + \O(\eps^0)\\
   {u}\cdot \grad f &= \eps^{-1} u_\sigma \partial_\xi f + \sum\nolimits_{k=0}^\infty \eps^k \xi^k K^k u_\bot\cdot \sgrad f\\
    {u} \cdot \grad{u}&=\eps^{-1}\left( u_\sigma \partial_\xi u_\bot + \nhat u_\sigma\partial_\xi u_\sigma\right)\\
    &\quad + \left(\sum\nolimits_{k=0}^{\infty} \eps^k\xi^k u_\bot K^k \right) \cdot \br{(\sgrad u_\sigma+ K u_\bot)\nhat + (\sgrad u_\bot - K u_\sigma)} \nonumber.
    \end{align}
For higher order expansions or other vector calculus operators, see the Mathematica code available at \href{https://github.com/ericwhester/signed-distance-code}{github.com/ericwhester/signed-distance-code}.

\section{Signed-distance around lines}
\label{sec:tubes}
\newcommand{\bhat}[0]{\hat{b}}
Not all boundary layers have codimension one.
Edges or boundaries of a surface, as well as singular curves, each call for a new approach to parameterisation.
It is possible to deal with these situations as limiting cases of surfaces with vanishing thickness.
We will instead approach these problems by starting with the Frenet-Serret frame of a curve \cite{FrenetCourbesDoubleCourbure1852,SerretQuelquesFormulesRelatives1851}, deriving orthogonal coordinates based on the Bishop frame \cite{BishopThereMoreOne1975}, and generalising to provide connection coefficients in the neighbourhood of evolving lines.

\subsection{The Frenet-Serret frame}
\label{sec:frenet}
We begin with a smooth parameterised curve $p(s,\tau)$ immersed in space, which evolves smoothly over time $\tau$
	\begin{align}
	p(s,\tau).
	\end{align}
For convenience, we may omit the functional dependence of various quantities on $s,\tau$.

We now seek a frame of vectors along the curve.
We begin with the \emph{tangent vector} to the curve
	\begin{align}
	\label{eq:tangent-tube}
	t = \pd{p(s,\tau)}{s},
	\end{align}
and define the \emph{unit tangent vector} $\that$ to be normalised to unit length
	\begin{align}
	\label{eq:t-tube}
	\hat{t} = \frac{t}{|t|}.
	\end{align}
For convenience we also define the rescaled derivative operator $\nabla_s$ that calculates the arclength derivative
	\begin{align}
	\label{eq:ds}
	\nabla_s = \frac{1}{|t|} \pd{}{s}.
	\end{align}
Because the tangent vector is normalised, its derivative must be perpendicular to $\that$.
This leads to the definition of the \emph{normal vector} $\nhat$ and the \emph{curvature} $\kappa$ 
	\begin{align}
	\label{eq:n-tube}
	\nabla_s \hat{t} &= \kappa(s,\tau) \hat{n}.
	\end{align}

\begin{remark}
If the curve has zero derivative, we can freely pick a normal vector and set $\kappa(s,\tau)=0$.
\end{remark}
We complete our frame with the \emph{binormal vector} $\bhat$, defined to be orthogonal to both $\that$ and $\nhat$
	\begin{align}
	\label{eq:b-tube}
	\bhat &= \that \times \nhat.
	\end{align}
The derivative of the normal vector then defines the \emph{torsion} $\omega$
	\begin{align}
	\label{eq:dsn-tube}
	\nabla_s \nhat &= -\kappa \that + \omega \bhat,
	\end{align}
and we use orthonormality to see that the derivative of the binormal is proportional to the normal vector,
	\begin{align}
	\label{eq:dsb-tube}
	\nabla_s \bhat &= -\omega \nhat.
	\end{align}

\subsection{Orthogonal signed-distance coordinates}
\label{sec:sdf-coords-tube}
A simple coordinate definition near the curve would define an angular coordinate relative to the Frenet-Serret frame.
However we show that in order to preserve an orthogonal coordinate system, we must rotate our angular coordinate according to the torsion of the curve.
\begin{definition}
    We define a new coordinate system in the neighbourhood of the curve using the signed-distance $\sigma$ and an angular coordinate $\theta$
	\begin{align}
	x(s,\theta,\sigma,\tau) &= p(s,\tau) + \sigma \br{\cos\br{\theta+\phi(s,\tau)} \nhat(s,\tau) + \sin\br{\theta+\phi(s,\tau)}\bhat(s,\tau)}.
	\end{align}
\end{definition}
\begin{lemma}
	This coordinate system is orthogonal if 
	\begin{align}
	\nabla_s \phi &= -\omega(s).
	\end{align}	
\end{lemma}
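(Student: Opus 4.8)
The plan is to compute the coordinate tangent vectors $\partial_s x$, $\partial_\theta x$, and $\partial_\sigma x$ directly, and then impose that the $(s,\theta)$ and $(s,\sigma)$ pairs are mutually orthogonal; the radial direction $\partial_\sigma x$ will be orthogonal to the other two automatically by construction. Write $\psi = \theta + \phi(s,\tau)$ for brevity and let $r(s,\tau) = \cos\psi\, \nhat + \sin\psi\, \bhat$ denote the unit radial vector, so that $x = p + \sigma r$. The angular derivative is $\partial_\theta x = \sigma(-\sin\psi\, \nhat + \cos\psi\, \bhat)$, and the radial derivative is $\partial_\sigma x = r$. Since $\nhat,\bhat$ are orthonormal and both orthogonal to $\that$, one checks immediately that $\partial_\theta x \cdot \partial_\sigma x = 0$, so the only nontrivial conditions come from the $s$-derivative.

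First I would differentiate $x$ with respect to $s$. Using $\partial_s p = t = |t|\,\that$ and the Frenet--Serret formulas \cref{eq:n-tube,eq:dsn-tube,eq:dsb-tube} rescaled by $|t|$, the derivative of the radial vector is
\begin{align*}
\partial_s r &= -(\partial_s\psi)\sin\psi\,\nhat + \cos\psi\,\partial_s\nhat + (\partial_s\psi)\cos\psi\,\bhat + \sin\psi\,\partial_s\bhat,
\end{align*}
where $\partial_s \nhat = |t|(-\kappa\that + \omega\bhat)$ and $\partial_s\bhat = -|t|\omega\,\nhat$, and $\partial_s\psi = \partial_s\theta + \partial_s\phi = |t|(\nabla_s\theta + \nabla_s\phi)$ (with $\theta$ held fixed under $\partial_s$, so only $\phi$ contributes). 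Collecting terms, the $\that$-component of $\partial_s r$ carries the curvature, while the $\nhat,\bhat$-components combine the angular rotation $\partial_s\phi$ with the torsion $\omega$. The crucial observation is that these in-plane terms bundle into a factor proportional to $(\partial_s\phi + |t|\omega)$ multiplying the tangential-angular vector $(-\sin\psi\,\nhat+\cos\psi\,\bhat)$.

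Next I would form the two orthogonality conditions. For $\partial_s x \cdot \partial_\theta x$: since $\partial_\theta x$ lies entirely in the $\nhat,\bhat$-plane and is orthogonal to $\that$, only the in-plane part of $\partial_s x = t + \sigma\,\partial_s r$ survives, and the $\that$ (curvature) contribution drops out. The surviving dot product is proportional to $\sigma^2(\partial_s\phi + |t|\omega)$, which vanishes for all $\sigma$ precisely when $\partial_s\phi = -|t|\omega$, i.e. $\nabla_s\phi = -\omega$ after dividing by $|t|$. I would then verify that this same condition also kills $\partial_s x \cdot \partial_\sigma x$: that dot product is $t\cdot r + \sigma\, r\cdot\partial_s r$; the first term vanishes because $r\perp\that$, and $r\cdot\partial_s r = \tfrac12\partial_s|r|^2 = 0$ since $r$ is a unit vector, so this condition holds identically and imposes nothing further.

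The main obstacle is purely bookkeeping: correctly propagating the torsion $\omega$ through $\partial_s\nhat$ and $\partial_s\bhat$ and tracking how the arclength rescaling factor $|t|$ enters, so that the counter-rotation $\partial_s\phi$ exactly cancels the torsion-induced rotation of the Frenet frame in the normal plane. The conceptual content — already flagged in the paper's introduction — is that the Frenet frame itself rotates about $\that$ at angular rate $\omega$ (per unit arclength), and choosing $\nabla_s\phi = -\omega$ makes the frame $\{\cos\psi\,\nhat+\sin\psi\,\bhat,\ -\sin\psi\,\nhat+\cos\psi\,\bhat\}$ relatively parallel (a Bishop frame), removing the cross term. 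I expect no genuine difficulty beyond careful signs; the only subtlety to state explicitly is that orthogonality must hold for all $\sigma$ in the tubular neighbourhood, which is what forces the coefficient of $\sigma^2$ to vanish rather than allowing a $\sigma$-dependent cancellation.
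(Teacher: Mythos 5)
Your proposal is correct and takes essentially the same route as the paper: the paper's proof likewise computes the coordinate tangent vectors $\nabla_s x$, $\partial_\theta x$, $\partial_\sigma x$ in the Frenet--Serret basis (writing them as the rows of the Jacobian) and observes that the only nontrivial pairwise orthogonality condition is the $(s,\theta)$ one, whose coefficient $\sigma^2(\nabla_s\phi+\omega)$ vanishes exactly when $\nabla_s\phi=-\omega$. Your explicit dot-product bookkeeping, including the automatic vanishing of the other two pairings, matches that computation.
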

\begin{proof}
The rows of the Jacobian of the coordinate transformation are orthogonal under this condition,
	\begin{align*}
	\begin{bmatrix}
		\nabla_s x \\ \partial_\theta x \\ \partial_\sigma x
	\end{bmatrix} &=
	\begin{bmatrix}
	1 - \sigma (\cos(\theta + \phi)\kappa ) & -\sigma \sin(\theta + \phi) (\omega + \nabla_s \phi) & \sigma \cos(\theta+\phi)(\omega + \nabla_s \phi)\\
	0 & -\sigma \sin(\theta+\phi) & \sigma \cos(\theta+\phi)\\
	0 & \cos(\theta + \phi) & \sin (\theta + \phi)
	\end{bmatrix}.
	\end{align*}
\end{proof}
\begin{remark}
	This condition is equivalent to Bishop's for a relatively parallel adapted frame of a curve \cite{BishopThereMoreOne1975}.
	The frame is not unique, with one degree of freedom for the initial angular coordinate $\phi(0)$.
\end{remark}
\begin{definition}
	We define the coordinate frame unit vectors $(\that_s,\that_\theta,\that_\sigma)$
	\begin{align}
	\label{eq:sdf-frame-tube}
	\that_s &= \frac{\nabla_s x}{h_s} = \that,\\
	\that_\theta &= \frac{\partial_\theta x}{\sigma} = -\sin(\theta+\phi) \nhat + \cos(\theta+\phi) \bhat,\\
	\that_\sigma &= \frac{\partial_\sigma x}{h_\sigma} = \cos(\theta + \phi)\nhat + \sin(\theta+\phi)\bhat,
	\end{align}
where we define the scaling factors
	\begin{align}
	\label{eq:scales-factors-tube}
	h_s &= |\nabla_s x| = 1 - \sigma\kappa \cos(\theta+\phi),&
	\sigma &= |\partial_\theta x| = \sigma,&
	h_\sigma &= |\partial_\sigma x| = 1.
	\end{align}
\end{definition}
With this orthogonal coordinate system, we derive the gradient
\begin{corollary}
The gradient is 
	\begin{align}
	\nabla =  \hat{t}_{\sigma } \partial_\sigma + \frac{1}{h_s}  \hat{t}_s \nabla_s \, + \frac{1}{\sigma} \hat{t}_{\theta }  \partial_\theta.
	\end{align}
\end{corollary}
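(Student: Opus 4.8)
The plan is to exploit orthonormality of the coordinate frame $(\that_s, \that_\theta, \that_\sigma)$ directly, rather than to invert the Jacobian matrix by hand. Since the preceding lemma establishes that the three rows $\nabla_s x$, $\partial_\theta x$, $\partial_\sigma x$ of the coordinate Jacobian are mutually orthogonal, the unit vectors defined in \cref{eq:sdf-frame-tube} form an orthonormal basis of $\R^3$ at every point of the tubular neighbourhood. Any vector---in particular the gradient operator---may therefore be reconstructed from its projections onto these three directions.

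First I would obtain those projections via the chain rule. For each coordinate $q \in \{\theta, \sigma\}$ we have $\partial_q = (\partial_q x)\cdot\nabla$, and for the arclength-normalised operator $\nabla_s = |t|^{-1}\partial_s$ we likewise have $\nabla_s = (\nabla_s x)\cdot\nabla$. Substituting the frame definitions $\nabla_s x = h_s\,\that_s$, $\partial_\theta x = \sigma\,\that_\theta$, and $\partial_\sigma x = \that_\sigma$ (with scale factors from \cref{eq:scales-factors-tube}) turns these into the three scalar identities $\that_s\cdot\nabla = h_s^{-1}\nabla_s$, $\that_\theta\cdot\nabla = \sigma^{-1}\partial_\theta$, and $\that_\sigma\cdot\nabla = \partial_\sigma$.

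Finally I would reassemble the gradient using the orthonormal expansion $\nabla = (\that_s\cdot\nabla)\,\that_s + (\that_\theta\cdot\nabla)\,\that_\theta + (\that_\sigma\cdot\nabla)\,\that_\sigma$, into which the three identities substitute directly to yield the claimed formula.

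The only point requiring care---and the step I expect to be the main obstacle in presentation rather than in substance---is the bookkeeping distinction between the raw coordinate derivative $\partial_s$ and the arclength derivative $\nabla_s = |t|^{-1}\partial_s$. Because the frame is normalised against $\nabla_s x$ rather than $\partial_s x$, the factor $|t|$ cancels cleanly and the $s$-term acquires $\nabla_s$ in place of $\partial_s$; one must simply be consistent about which operator is paired with which scale factor. Everything else is the standard gradient formula for orthogonal curvilinear coordinates specialised to the scale factors $(h_s, \sigma, 1)$.
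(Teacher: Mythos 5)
Your proof is correct and is essentially the paper's own argument made explicit: the paper's one-line proof ("the Jacobian of the coordinate transformation and our normalisation of the orthogonal vector frame") is precisely your chain-rule projections $\that_s\cdot\nabla = h_s^{-1}\nabla_s$, $\that_\theta\cdot\nabla = \sigma^{-1}\partial_\theta$, $\that_\sigma\cdot\nabla = \partial_\sigma$ combined with the orthonormal expansion. Your care over the distinction between $\partial_s$ and $\nabla_s = |t|^{-1}\partial_s$ (so that $h_s$ pairs with $\nabla_s$) is exactly the right bookkeeping and matches the paper's definition $h_s = |\nabla_s x|$.
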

\begin{proof}
    This follows from the Jacobian of the coordinate transformation and our normalisation of the orthogonal vector frame.
\end{proof}
\begin{lemma}
The connection coefficients of the $(\that_s,\that_\theta,\that_\sigma)$ frame are 	
	\begin{align}
	\label{eq:connection-tube}
	\nabla \hat{t}_s &= 
		\A\, \that_s \otimes \that_\sigma
		+\B\, \that_s\otimes\that_\theta,\\
	\nabla \hat{t}_\sigma &= 
		-\A \,\hat{t}_s   \otimes \hat{t}_s
		+\C\, \hat{t}_\theta \otimes \hat{t}_\theta,\\
	\nabla \hat{t}_\theta &= 
		-\B\,	\hat{t}_s \otimes \hat{t}_s
		-\C \, \hat{t}_\theta \otimes \hat{t}_\sigma,
	\end{align}
where we define
	\begin{align}
	\A &= \frac{1}{h_s} \kappa  \cos (\phi +\theta ), &
	\B &= -\frac{1}{h_s} \kappa  \sin (\phi +\theta ), &
	\C &= \frac{1}{\sigma}.
	\end{align}

\end{lemma}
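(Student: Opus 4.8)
The plan is to apply the gradient operator termwise to each frame vector. Using the orthogonal form $\nabla = \that_\sigma \partial_\sigma + \tfrac{1}{h_s}\that_s \nabla_s + \tfrac{1}{\sigma}\that_\theta \partial_\theta$ derived above, for any vector field $w$ the gradient reads
\begin{align*}
\nabla w = \that_\sigma \otimes \partial_\sigma w + \tfrac{1}{h_s}\that_s \otimes \nabla_s w + \tfrac{1}{\sigma}\that_\theta \otimes \partial_\theta w,
\end{align*}
with the differentiation direction sitting in the left slot of each tensor product. The task therefore reduces to computing the three coordinate derivatives $\partial_\sigma$, $\partial_\theta$, and $\nabla_s$ of each of $\that_s$, $\that_\theta$, $\that_\sigma$, and then re-expressing the outcomes in the frame basis.

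Two of these derivative families are immediate. Since $\that_s = \that$ depends only on $(s,\tau)$, while the angular vectors $\that_\theta,\that_\sigma$ are built from $\nhat(s,\tau)$, $\bhat(s,\tau)$ and trigonometric functions of $\theta+\phi(s,\tau)$, none carries any dependence on the radial coordinate, so $\partial_\sigma$ annihilates every frame vector. The angular derivatives merely rotate the $(\nhat,\bhat)$ pair, yielding $\partial_\theta \that_s = 0$, $\partial_\theta \that_\sigma = \that_\theta$, and $\partial_\theta \that_\theta = -\that_\sigma$; the last of these supplies the coefficient $\C = 1/\sigma$.

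The arclength derivatives carry the content of the lemma. Differentiating via the Frenet--Serret formulas \cref{eq:n-tube,eq:dsn-tube,eq:dsb-tube} and invoking the orthogonality condition $\nabla_s\phi = -\omega$ established above, I expect $\nabla_s\that_s = \kappa\nhat$ directly, and---once the torsion terms cancel---$\nabla_s\that_\sigma = -\kappa\cos(\theta+\phi)\,\that$ together with $\nabla_s\that_\theta = \kappa\sin(\theta+\phi)\,\that$. The main step is precisely this cancellation: the $\omega$ contributions arising from $\nabla_s\nhat$ and $\nabla_s\bhat$ are exactly offset by those from $\nabla_s(\theta+\phi) = -\omega$, so the counter-rotation built into the Bishop frame leaves the arclength derivatives pointing purely along $\that_s$. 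Finally, inverting the rotation as $\nhat = \cos(\theta+\phi)\that_\sigma - \sin(\theta+\phi)\that_\theta$ and substituting these three derivative families into the termwise gradient collects the coefficients $\A$, $\B$, $\C$ in the stated form. As a consistency check, the pattern $-\A,-\B$ appearing in $\nabla\that_\sigma,\nabla\that_\theta$ against $+\A,+\B$ in $\nabla\that_s$ is forced by the antisymmetry $(\nabla\that_i)\cdot\that_j = -(\nabla\that_j)\cdot\that_i$ of an orthonormal frame, which I would use to verify the signs rather than recompute them independently.
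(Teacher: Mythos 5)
Your proposal is correct and is essentially the paper's own argument: the paper's proof simply states that the coefficients "follow from the definition of the coordinate frame in terms of the Frenet--Serret frame and the known derivatives of the latter," which is exactly the computation you carry out (with the key torsion cancellation from $\nabla_s\phi=-\omega$ made explicit). Your expanded version, including the verification $\nabla_s\that_\sigma = -\kappa\cos(\theta+\phi)\,\that_s$ and $\nabla_s\that_\theta = \kappa\sin(\theta+\phi)\,\that_s$, fills in the details the paper leaves implicit and matches the stated result.
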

\begin{proof}
	These follow from the definition of the coordinate frame in terms of the Frenet-Serret frame and the known derivatives of the latter.
\end{proof}

\subsection{Vector calculus operators}
\label{sec:vector-calculus-tube}
We derive the remaining vector calculus operators from the gradient and connection coefficients.
\begin{corollary}
The vector gradient is 
    \begin{align}
    \label{eq:grad-tube}
    \nabla u &= 
    \left(\frac{\nabla_s u_s}{h_s}-\A u_{\sigma }-\B u_{\theta }\right)\hat{t}_s\otimes\hat{t}_s
    + \left(\frac{\nabla_s u_{\sigma }}{h_s}+\A u_s\right)\hat{t}_s\otimes\hat{t}_{\sigma }
    + \left(\frac{\nabla_s u_{\theta }}{h_s}+\B u_s\right)\hat{t}_s\otimes\hat{t}_{\theta }\nonumber\\
        &\quad
    + \left(\partial_{\sigma } u_s\right)\hat{t}_{\sigma }\otimes\hat{t}_s
    + \left(\partial_{\sigma } u_{\sigma }\right)\hat{t}_{\sigma }\otimes\hat{t}_{\sigma }
    + \left(\partial_{\sigma } u_{\theta }\right)\hat{t}_{\sigma }\otimes\hat{t}_{\theta }\nonumber\\
        &\quad
    + \left(\frac{\partial_{\theta } u_s}{\sigma}\right)\hat{t}_{\theta }\otimes\hat{t}_s
    + \left(\frac{\partial_{\theta } u_{\sigma }}{\sigma}-\C u_{\theta }\right)\hat{t}_{\theta }\otimes\hat{t}_{\sigma }
    + \left(\frac{\partial_{\theta } u_{\theta }}{\sigma}+\C u_{\sigma }\right)\hat{t}_{\theta }\otimes\hat{t}_{\theta }.
    \end{align}
\end{corollary}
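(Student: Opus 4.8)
The plan is to compute $\grad u$ directly by expanding $u$ in the orthonormal frame and applying the Leibniz rule for the gradient of a scalar-times-vector. Writing $u = u_s \that_s + u_\sigma \that_\sigma + u_\theta \that_\theta$, linearity of the gradient gives
\begin{align*}
\grad u = \sum_{a \in \{s,\sigma,\theta\}} \br{(\grad u_a) \otimes \that_a + u_a \, \grad \that_a}.
\end{align*}
The whole computation then reduces to substituting two ingredients already established above: the gradient of a scalar, and the connection coefficients of the frame.

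For the first family of terms I would apply the gradient operator $\grad = \that_\sigma \partial_\sigma + \frac{1}{h_s}\that_s \nabla_s + \frac{1}{\sigma}\that_\theta \partial_\theta$ to each scalar component $u_a$. Each $(\grad u_a)\otimes \that_a$ then deposits a single first-derivative term into each of the rows $\that_s \otimes \that_a$, $\that_\sigma \otimes \that_a$, $\that_\theta \otimes \that_a$, carrying the scale factors $1/h_s$, $1$, and $1/\sigma$ respectively. For the second family I would substitute the connection-coefficient lemma, which writes $\grad \that_s$, $\grad \that_\sigma$, $\grad \that_\theta$ in terms of $\A$, $\B$, $\C$; these supply the purely algebraic (undifferentiated) contributions $u_a \, \grad \that_a$.

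The remaining step is bookkeeping: collect every contribution according to the nine tensor-product basis elements $\that_i \otimes \that_j$. For instance, the coefficient of $\that_s \otimes \that_s$ picks up $\nabla_s u_s / h_s$ from the scalar gradient of $u_s$, together with $-\A u_\sigma$ and $-\B u_\theta$ from the $\that_s \otimes \that_s$ entries of $\grad \that_\sigma$ and $\grad \that_\theta$. I expect the only real difficulty to be keeping track of which connection coefficient feeds which slot: orthonormality forces the antisymmetry $(\grad \that_a)\cdot\that_b = -(\grad \that_b)\cdot \that_a$, so each of $\A$, $\B$, $\C$ reappears with opposite sign in a transposed slot, and a sign slip there would be easy to make. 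No genuine obstruction arises, however, since each ingredient is already in closed form and the collection is mechanical.
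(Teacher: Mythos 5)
Your proposal is correct and is essentially the paper's own route: the paper introduces this subsection by stating that all remaining operators follow from the gradient and the connection coefficients, i.e.\ exactly your Leibniz-rule expansion $\nabla u = \sum_a (\nabla u_a)\otimes \that_a + u_a \nabla \that_a$ with the scalar gradient feeding the derivative terms and the lemma for $\nabla\that_s$, $\nabla\that_\sigma$, $\nabla\that_\theta$ supplying the algebraic $\A$, $\B$, $\C$ terms. Collecting the nine tensor slots as you describe reproduces the stated coefficients exactly, including the empty $\that_\sigma\otimes\that_a$ row (no frame vector has a $\that_\sigma\otimes{}$ contribution in its gradient), so nothing is missing.
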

\begin{corollary}
	The divergence is 
	\begin{align}
	\label{eq:div-tube}
	\nabla \cdot u &= \frac{\nabla_s u_s}{h_s}
    +\partial_{\sigma } u_{\sigma }
    +\frac{\partial_{\theta } u_{\theta }}{\sigma }
    +\left(\C-\A\right)u_{\sigma }
    -\B u_{\theta } .
	\end{align}
\end{corollary}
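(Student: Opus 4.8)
The plan is to compute the divergence as the trace of the vector gradient, $\divof u = \operatorname{tr}(\nabla u)$. Since the vector gradient in this frame is already recorded in \cref{eq:grad-tube} and the frame $(\that_s,\that_\theta,\that_\sigma)$ is orthonormal, the trace is simply the sum of the three diagonal coefficients---those multiplying $\that_s\otimes\that_s$, $\that_\sigma\otimes\that_\sigma$, and $\that_\theta\otimes\that_\theta$---with no metric factors intervening. First I would read these coefficients off \cref{eq:grad-tube}, namely $\nabla_s u_s/h_s - \A u_\sigma - \B u_\theta$, then $\partial_\sigma u_\sigma$, and finally $\partial_\theta u_\theta/\sigma + \C u_\sigma$. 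Adding the three and collecting the connection terms by component gives the stated result, with the $\C u_\sigma$ from the $\theta$-diagonal combining with the $-\A u_\sigma$ from the $s$-diagonal into the single $(\C-\A)u_\sigma$ term.

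A more illuminating derivation, which I would present as the actual proof, bypasses \cref{eq:grad-tube} and instead uses the product rule together with the divergences of the basis vectors. Writing $u = u_s\that_s + u_\sigma\that_\sigma + u_\theta\that_\theta$ and distributing, each component contributes a scalar-gradient term $(\nabla u_a)\cdot\that_a$, which the gradient corollary evaluates to $\nabla_s u_s/h_s$, $\partial_\sigma u_\sigma$, and $\partial_\theta u_\theta/\sigma$ respectively. The remaining contributions are $u_a\,\divof\that_a$, and taking the trace of each line of \cref{eq:connection-tube} gives $\divof\that_s = 0$, $\divof\that_\sigma = \C-\A$, and $\divof\that_\theta = -\B$. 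Summing reproduces the formula and makes transparent that the geometric terms $(\C-\A)u_\sigma$ and $-\B u_\theta$ are precisely the weighted divergences of the coordinate directions.

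There is no substantive obstacle here; the only point requiring care is bookkeeping. In taking the trace one must discard the off-diagonal connection contributions---in particular the term $-\C\,\that_\theta\otimes\that_\sigma$ in $\nabla\that_\theta$ must not be counted, only the diagonal $-\B\,\that_s\otimes\that_s$ survives---and one must track the signs of $\A$, $\B$, $\C$ carefully so that the $u_\sigma$ terms combine with the correct relative sign. Both routes are short, and agreement between them provides a useful consistency check on the connection coefficients and on \cref{eq:grad-tube}.
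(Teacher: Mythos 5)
Your proposal is correct, and your first route---reading off the three diagonal coefficients of \cref{eq:grad-tube} and summing them---is exactly the paper's proof, which states in one line that the divergence is the trace of the vector gradient. Your second route, which you prefer as the ``actual proof,'' is a genuine repackaging: you expand $\divof u = \sum_a\bigl[(\nabla u_a)\cdot\that_a + u_a\,\divof\that_a\bigr]$ by the product rule and evaluate $\divof\that_s=0$, $\divof\that_\sigma=\C-\A$, $\divof\that_\theta=-\B$ as traces of the connection coefficients in \cref{eq:connection-tube}. Strictly speaking this is the same computation---taking the trace of the product-rule expansion of $\nabla u$ term by term rather than of the assembled \cref{eq:grad-tube}---so it buys no new content, but it does make the geometric meaning transparent: the terms $(\C-\A)u_\sigma$ and $-\B u_\theta$ are precisely the divergences of the coordinate directions, with $\C=1/\sigma$ the familiar cylindrical spreading term and $\A,\B$ encoding the curvature of the centre line. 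It also serves, as you note, as a consistency check between \cref{eq:grad-tube} and \cref{eq:connection-tube}, since the paper derives the former from the latter. Your bookkeeping is accurate throughout, including the discarding of the off-diagonal term $-\C\,\that_\theta\otimes\that_\sigma$ and the sign conventions on $\A$, $\B$, $\C$.
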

\begin{proof}
We take the trace of the vector gradient.
\end{proof}
\begin{corollary}
The scalar Laplacian is 
    \begin{align}
    \label{eq:lapf-tube}
    \Delta f &= 
        \frac{\nabla_s^2 f}{h_s^2}
        +\partial_{\sigma }^2 f
        +\frac{\partial_{\theta }^2 f}{\sigma ^2}
        -\frac{ \nabla_s h_s}{h_s^3}\nabla_s f
        +\left(\C-\A\right) \partial_{\sigma } f
        -\frac{\B}{\sigma}\partial_{\theta } f.
    \end{align}
\end{corollary}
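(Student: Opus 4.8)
The plan is to obtain the scalar Laplacian as the divergence of the gradient, $\Delta f = \nabla \cdot (\nabla f)$, feeding the gradient into the divergence formula \eqref{eq:div-tube} already established for the orthonormal frame $(\that_s,\that_\theta,\that_\sigma)$. First I would read off the frame components of $u = \nabla f$ directly from the gradient corollary $\nabla = \that_\sigma \partial_\sigma + h_s^{-1}\that_s \nabla_s + \sigma^{-1}\that_\theta \partial_\theta$: these are $u_s = h_s^{-1}\nabla_s f$, $u_\sigma = \partial_\sigma f$, and $u_\theta = \sigma^{-1}\partial_\theta f$. Since \eqref{eq:div-tube} expresses $\nabla\cdot u$ purely in terms of such frame components together with the scalars $\A,\B,\C$, no connection data beyond what is already recorded is needed.

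Next I would substitute these components into \eqref{eq:div-tube} term by term. Four of the five terms collapse immediately: $\partial_\sigma u_\sigma = \partial_\sigma^2 f$; the term $\sigma^{-1}\partial_\theta u_\theta = \sigma^{-2}\partial_\theta^2 f$, using that $\sigma$ does not depend on $\theta$; $(\C-\A)u_\sigma = (\C-\A)\partial_\sigma f$; and $-\B u_\theta = -\sigma^{-1}\B\,\partial_\theta f$. The only term requiring genuine work is the first, $h_s^{-1}\nabla_s u_s = h_s^{-1}\nabla_s(h_s^{-1}\nabla_s f)$, where the Leibniz rule gives $\nabla_s(h_s^{-1}\nabla_s f) = h_s^{-1}\nabla_s^2 f - h_s^{-2}(\nabla_s h_s)(\nabla_s f)$; dividing once more by $h_s$ produces both the $\nabla_s^2 f/h_s^2$ term and the $-(\nabla_s h_s)\,\nabla_s f/h_s^3$ term of the claimed formula. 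Collecting the six resulting contributions reproduces the stated identity.

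The work here is bookkeeping rather than geometry, so the main thing to watch is that $h_s = 1 - \sigma\kappa\cos(\theta+\phi)$ genuinely depends on $s$ (through $\kappa$ and $\phi$) as well as on $\sigma$ and $\theta$, so that $\nabla_s h_s$ in the first term must be retained and supplies the lone first-derivative-in-$s$ contribution. No analogous cross terms arise from $\partial_\sigma$ or $\partial_\theta$ acting on scaling factors, because $u_\sigma$ and $u_\theta$ are already the bare derivatives $\partial_\sigma f$ and $\sigma^{-1}\partial_\theta f$, and the divergence formula absorbs the frame's variation through $\A,\B,\C$. Once this single product-rule step is carried out carefully, the result follows by direct substitution with no further input.
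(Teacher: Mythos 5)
Your proposal is correct and is exactly the paper's argument: the paper's proof of this corollary is the one-line "We take the divergence of the scalar gradient," i.e.\ substituting the frame components $u_s = h_s^{-1}\nabla_s f$, $u_\sigma = \partial_\sigma f$, $u_\theta = \sigma^{-1}\partial_\theta f$ into the divergence formula \eqref{eq:div-tube}, which you carry out with the single nontrivial Leibniz step on $h_s^{-1}\nabla_s(h_s^{-1}\nabla_s f)$ handled correctly.
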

\begin{proof}
We take the divergence of the scalar gradient.
\end{proof}
\begin{corollary}
The vector curl is 
	\begin{align}
	\label{eq:curl-tube}
	\nabla \times u &= \left(\partial_{\sigma } u_{\theta }-\frac{\partial_{\theta } u_{\sigma }}{\sigma }+\C{u_{\theta }}\right)\hat{t}_s
    + \left(\frac{\partial_{\theta } u_s}{\sigma }-\frac{\nabla_s u_{\theta }}{h_s}- \B u_s \right)\hat{t}_{\sigma }
    + \left(\frac{\nabla_s u_{\sigma }}{h_s}-\partial_{\sigma } u_s+\A u_s \right)\hat{t}_{\theta }.
	\end{align}
\end{corollary}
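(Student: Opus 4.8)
The plan is to extract the curl directly from the vector gradient tensor already assembled in \cref{eq:grad-tube}, rather than recomputing anything from scratch. The guiding principle is that in any right-handed orthonormal frame the curl is the axial vector (Hodge dual) of the antisymmetric part of the gradient: writing $\grad u = \sum_{a,b} (\grad u)_{ab}\, \that_a \otimes \that_b$ with the first index the differentiation direction and the second the vector component (exactly the convention of \cref{eq:grad-tube}), one has $\grad \times u = \epsilon_{cab}\,(\grad u)_{ab}\, \that_c$. Thus each component of the curl is simply the difference of the two off-diagonal gradient entries lying in the plane orthogonal to that component, and no new differentiation is required.

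First I would pin down the orientation of the frame, since this is what fixes every sign. Using the definitions in \cref{eq:sdf-frame-tube} together with the right-handedness of the Frenet--Serret frame $(\that,\nhat,\bhat)$, a one-line cross-product computation gives $\that_\sigma \times \that_\theta = \that_s$, confirming that the ordering $(s,\sigma,\theta)$ is positively oriented. With orientation settled, I would read off the three antisymmetric combinations from \cref{eq:grad-tube}. The $\that_s$ component is $(\grad u)_{\sigma\theta} - (\grad u)_{\theta\sigma} = \ds u_\theta - \left(\tfrac{\partial_\theta u_\sigma}{\sigma} - \C u_\theta\right)$; the $\that_\sigma$ component is $(\grad u)_{\theta s} - (\grad u)_{s\theta} = \tfrac{\partial_\theta u_s}{\sigma} - \left(\tfrac{\nabla_s u_\theta}{h_s} + \B u_s\right)$; and the $\that_\theta$ component is $(\grad u)_{s\sigma} - (\grad u)_{\sigma s} = \tfrac{\nabla_s u_\sigma}{h_s} + \A u_s - \ds u_s$. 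Collecting these three expressions reproduces \cref{eq:curl-tube} verbatim.

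The only genuine hazard is bookkeeping with orientation and index ordering: a single transposition in the frame ordering flips the overall sign, so the cross-product check in the first step is precisely what guarantees that the signs in front of $\C$, $\A$, and $\B$ emerge correctly. Everything beyond that is a direct substitution of the connection coefficients $\A,\B,\C$ defined in \cref{sec:sdf-coords-tube}. As an alternative one could instead invoke, as in the surface case, the characterisation of the curl as the unique operator (up to sign) annihilated by composition with the gradient and the divergence, $\grad\cdot\grad\times = \grad\times\grad = 0$; but since the full vector gradient is already in hand, the direct extraction of its antisymmetric part is the cleaner route here.
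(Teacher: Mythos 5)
Your proposal is correct and is essentially the paper's own proof: the paper also obtains \cref{eq:curl-tube} by contracting the vector gradient of \cref{eq:grad-tube} with the Levi-Civita tensor, $(\nabla\times u)\cdot\that_i \equiv \eps_{i,j,k}(\nabla u)_{j,k}$, i.e.\ by taking the axial vector of its antisymmetric part. Your explicit orientation check $\that_\sigma\times\that_\theta=\that_s$ is a sensible addition that the paper leaves implicit, but it does not change the route.
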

\begin{proof}
We contract the gradient with the Levi-Civita tensor $(\nabla\times u)\cdot\that_i \equiv \eps_{i,j,k}(\nabla u)_{j,k}$.
\end{proof}
\begin{corollary}
The vector Laplacian is 

\begin{align}
    \Delta u &=
    \left(\Delta u_s
    -\frac{2 \A }{h_s}\nabla_s u_{\sigma }
    -\frac{2 \B }{h_s}\nabla_s u_{\theta }
    +\left(-\A^2-\B^2\right) u_s
    -\frac{ \nabla_s \A}{h_s}u_{\sigma }
    -\frac{\nabla_s \B}{h_s}u_{\theta } \right) \that_s \nonumber\\
    &
    +\left(\Delta u_\sigma 
    +\frac{2 \A }{h_s}\nabla_s u_s
    -\frac{2 \C }{\sigma}\partial_{\theta }u_{\theta }
    +\frac{\nabla_s \A}{h_s}u_s 
    +\left(-\A^2-\C^2\right) u_{\sigma }
    +(-\A \B+\B \C) u_{\theta }\right) \that_\sigma \nonumber\\
    &
    +\left(\Delta u_\theta
    +\frac{2 \B }{h_s}\nabla_s u_s
    +\frac{2 \C }{\sigma}\partial_{\theta } u_{\sigma }
    +\frac{\nabla_s \B}{h_s}u_s 
    +(-\A \B-\B \C) u_{\sigma }
    +\left(-\B^2-\C^2\right) u_{\theta }\right)\that_\theta.
\end{align}

\end{corollary}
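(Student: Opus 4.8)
The plan is to compute the vector Laplacian directly as $\Delta u = \grad \cdot \grad u$, working in the orthonormal frame $(\that_s,\that_\theta,\that_\sigma)$ and feeding in the connection coefficients already recorded in \cref{eq:connection-tube}. Writing $u = u_s \that_s + u_\sigma \that_\sigma + u_\theta \that_\theta$ and applying the scalar Leibniz rule for the Laplacian of a scalar times a vector,
\begin{align*}
\Delta(f\,\that_i) = (\Delta f)\,\that_i + 2\,(\grad f)\cdot(\grad \that_i) + f\,(\Delta \that_i),
\end{align*}
where the contraction in the middle term is on the gradient (left) slot of $\grad \that_i$, I would expand $\Delta u$ as a sum over the three components and then sort the output into three natural groups.

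The first group collects the scalar Laplacians $\Delta u_s,\Delta u_\sigma,\Delta u_\theta$, each supplied verbatim by \cref{eq:lapf-tube}; these are exactly the $\Delta u_i$ displayed in the statement. The second group comes from the cross terms $2\,(\grad u_i)\cdot(\grad \that_i)$. Here I would substitute $\grad u_i = \tfrac{1}{h_s}\nabla_s u_i\,\that_s + \partial_\sigma u_i\,\that_\sigma + \tfrac{1}{\sigma}\partial_\theta u_i\,\that_\theta$ from the gradient corollary and contract against the connection tensors of \cref{eq:connection-tube}. Because each $\grad \that_i$ carries its differentiation direction in the left slot, this contraction picks out precisely the first-derivative corrections, such as $-\tfrac{2\A}{h_s}\nabla_s u_\sigma$ in the $\that_s$ component and $\tfrac{2\C}{\sigma}\partial_\theta u_\sigma$ in the $\that_\theta$ component, with the scale factors $h_s$ and $\sigma$ entering through the normalised derivatives.

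The third, and genuinely laborious, group is the set of basis-vector Laplacians $\Delta \that_i = \grad\cdot\grad\that_i$. Each requires taking the divergence of a connection tensor from \cref{eq:connection-tube}: differentiating the coefficients $\A,\B,\C$ produces the $\tfrac{\nabla_s \A}{h_s}$-type terms, while re-contracting the surviving basis-vector derivatives against the connection coefficients a second time produces the quadratic terms $-\A^2-\B^2$, $-\A^2-\C^2$, $-\A\B+\B\C$, and their companions. I would then project all three contributions onto $(\that_s,\that_\sigma,\that_\theta)$ and collect like terms.

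The main obstacle is this third step: keeping the double bookkeeping of connection coefficients consistent—tracking which slot the outer divergence contracts, differentiating the nonuniform scale factors $h_s$ and $1/\sigma$ correctly so the $\C=1/\sigma$ contributions distinguishing the angular direction appear with the right signs, and confirming the antisymmetric off-diagonal pattern of the quadratic coefficients. A useful cross-check, and an alternative derivation, is to instead invoke $\Delta u = \grad(\grad\cdot u) - \grad\times\grad\times u$ using the divergence \cref{eq:div-tube} and curl \cref{eq:curl-tube} already in hand; agreement of the two routes validates the coefficient algebra without trusting a single computation.
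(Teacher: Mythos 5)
Your proposal is correct and takes essentially the same route as the paper: the paper prints no separate proof for this corollary, but its declared method for this section---deriving the remaining operators from the gradient and the connection coefficients of \cref{eq:connection-tube}---is exactly your computation of $\Delta u = \nabla\cdot\nabla u$, and each piece you describe (the cross terms $2(\nabla u_i)\cdot(\nabla\that_i)$ and the basis Laplacians $\Delta\that_i$) reproduces the stated coefficients term by term. Your suggested cross-check via $\Delta u = \nabla(\nabla\cdot u) - \nabla\times\nabla\times u$ is also apt, since it is the identity the paper uses for the analogous surface-case corollary.
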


\begin{figure}[hbt]
	\centering
    \includegraphics[width=\linewidth]{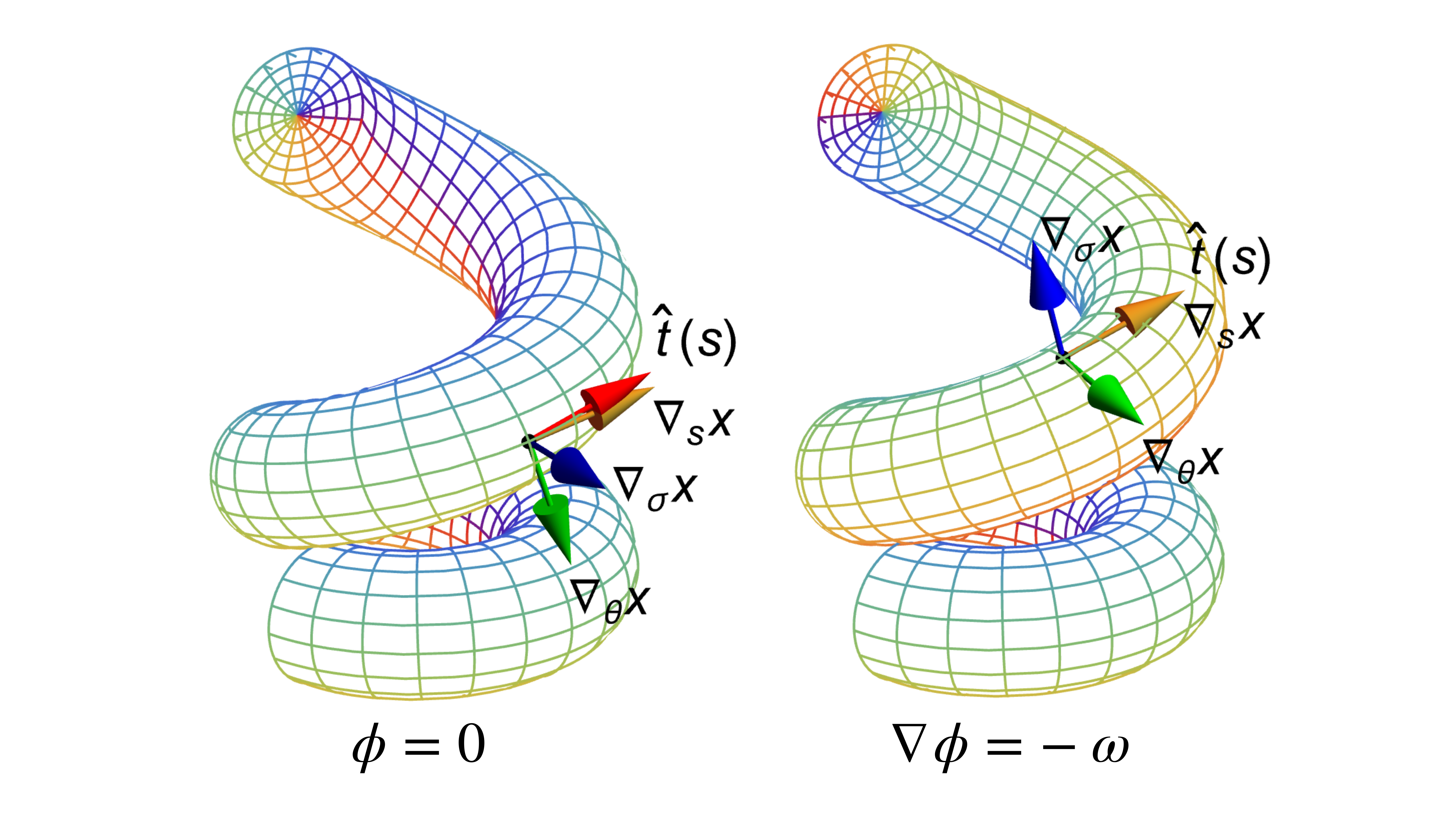}
    \caption{
    A comparison of coordinate systems around the curve $p(s,t) = (\cos 2\pi s, \sin(2\pi s),s^2)$. 
    Left: Coordinates derived from the Frenet-Serret frame, with angular coordinates relative to the normal vector. 
    When the torsion is nonzero, the coordinate tangent vectors are no longer orthogonal away from the curve (compare $\that(s)$ (red) with $\nabla_s x$ (orange)).
    Right: Orthogonal signed-distance coordinates obtained by rotating the angular coordinate according to the torsion $\nabla_s \phi = - \omega$.
    Level sets of $s$ and $\theta$ are shown for both systems at $\sigma = 0.5$ and are each coloured by $\theta$.}
\end{figure}

\subsection{Evolving curves}
\label{sec:evolving-tubes}
In general, the curve may also time evolve.
We want to derive the time derivative operator and its effect on the coordinate frame.
\begin{definition}
The velocity of the curve is defined in terms of the Frenet-Serret frame
	\begin{align}
	\label{eq:dtp-tube}
	\partial_\tau p(\tau,s) = v_t \, \hat{t} + v_n \, \hat{n} + v_b \, \hat{b} \equiv v.
	\end{align}
\end{definition}
\begin{lemma}
The derivatives of the Frenet-Serret frame are 
	\begin{align}
	\label{eq:dt-frenet}
	\partial_\tau \that &= \left(\nabla_s v_t-\kappa  v_n\right)\hat{t}
 + \left(-\omega  v_b+\nabla_s v_n+\kappa  v_t\right)\hat{n}
 + \left(\nabla_s v_b+\omega  v_n\right)\hat{b},\\
 	\partial_\tau \nhat &= 
 	\left(\omega  v_b-\frac{v_n \nabla_s \kappa }{\kappa }-2 \nabla_s v_n+\frac{\nabla_s^2 v_t}{\kappa }-\kappa  v_t\right)\hat{t}\nonumber\\
 	&\quad
 + \frac{1}{\kappa}\left(-{2 \omega  \nabla_s v_b}-{v_b \nabla_s \omega }-{\partial_{\tau } \kappa }+{\nabla_s^2 v_n}+{v_t \nabla_s \kappa }+2 \kappa \nabla_s v_t-{\omega ^2 v_n}-\kappa^2  v_n\right)\hat{n}\nonumber\\
 	&\quad
 + \frac{1}{\kappa}\left({\nabla_s^2 v_b}-{\omega ^2 v_b}+{2 \omega  \nabla_s v_n}+{v_n \nabla_s \omega }+\kappa \omega  v_t\right)\hat{b},\\
 	\partial_\tau \bhat &= 
 	\left(-\nabla_s v_b-\omega  v_n\right)\hat{t}\nonumber\\
 	&\quad
 + \frac{1}{\kappa}\left(-{\nabla_s^2 v_b}+{\omega ^2 v_b}-{2 \omega  \nabla_s v_n}-{v_n \nabla_s \omega }-\kappa \omega  v_t\right)\hat{n}\nonumber\\
 	&\quad
 + \frac{1}{\kappa}\left(-{2 \omega  \nabla_s v_b}-{v_b \nabla_s \omega }-{\partial_{\tau } \kappa }+{\nabla_s^2 v_n}+{v_t \nabla_s \kappa }+3 \kappa \nabla_s v_t-{\omega ^2 v_n}-2 \kappa^2  v_n\right)\hat{b}.
	\end{align}
\end{lemma}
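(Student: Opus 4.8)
The plan is to differentiate the defining relations of the Frenet--Serret frame in $\tau$, reducing everything to arclength derivatives of the velocity $v = v_t\,\that + v_n\,\nhat + v_b\,\bhat$ re-expanded back in that same frame via the Frenet--Serret formulas. The one genuinely new ingredient beyond the static identities is that the arclength element $|t|$ itself evolves, so $\partial_\tau$ and the rescaled operator $\nabla_s = |t|^{-1}\partial_s$ do \emph{not} commute, and I would pin this down first. Since $t = \partial_s p$ and mixed Cartesian partials commute, $\partial_\tau t = \partial_s v = |t|\,\nabla_s v$, whence $\partial_\tau|t| = \that\cdot\partial_\tau t = |t|\,(\that\cdot\nabla_s v)$ and therefore
\begin{align*}
[\partial_\tau,\nabla_s] &= -(\that\cdot\nabla_s v)\,\nabla_s, & \that\cdot\nabla_s v &= \nabla_s v_t - \kappa v_n ,
\end{align*}
the second equality coming from expanding $\nabla_s v$ with the Frenet--Serret relations. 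The scalar $\nabla_s v_t - \kappa v_n$ is the instantaneous rate of arclength stretching, and carrying it correctly through every later step is the central bookkeeping burden.

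With this in hand I would obtain $\partial_\tau\that$ directly from $\that = t/|t|$,
\begin{align*}
\partial_\tau\that &= \frac{\partial_\tau t}{|t|} - \that\,\frac{\partial_\tau|t|}{|t|} = \nabla_s v - (\that\cdot\nabla_s v)\,\that,
\end{align*}
i.e.\ the projection of $\nabla_s v$ onto the plane normal to $\that$; the Frenet expansion of $\nabla_s v$ then supplies the $\nhat$ and $\bhat$ components $\kappa v_t + \nabla_s v_n - \omega v_b$ and $\omega v_n + \nabla_s v_b$. I would use orthonormality as the organizing principle throughout: because $(\that,\nhat,\bhat)$ is orthonormal for every $\tau$, the operator $\partial_\tau$ acts as an infinitesimal rotation, so the three frame derivatives are fixed by an antisymmetric triple of scalars, and in particular the component of each unit vector's time derivative along itself must vanish. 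This is a stringent check on all three lines of the statement and should be verified explicitly, since it is precisely where a stray multiple of the stretching rate $\nabla_s v_t-\kappa v_n$ can slip in.

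For $\partial_\tau\nhat$ I would use $\nhat = \kappa^{-1}\nabla_s\that$, so that
\begin{align*}
\partial_\tau\nhat &= -\frac{\partial_\tau\kappa}{\kappa}\,\nhat + \frac{1}{\kappa}\big(\nabla_s\partial_\tau\that + [\partial_\tau,\nabla_s]\that\big),
\end{align*}
where $\nabla_s\partial_\tau\that$ is computed by applying $\nabla_s$ to the result just obtained and re-expanding, producing the second arclength derivatives $\nabla_s^2 v_n$, $\nabla_s^2 v_b$ seen in the statement. The curvature evolution $\partial_\tau\kappa$ is then fixed by differentiating $\kappa = \nhat\cdot\nabla_s\that$ (equivalently by demanding the self-component vanish), giving $\partial_\tau\kappa = v_t\nabla_s\kappa + \nabla_s^2 v_n - v_b\nabla_s\omega - 2\omega\nabla_s v_b - \omega^2 v_n + \kappa^2 v_n$; substituting it yields the quoted $\nhat$ and $\bhat$ components. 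Finally $\partial_\tau\bhat$ follows either from $\bhat = \that\times\nhat$ by the product rule, or most economically from orthonormality, its $\that$- and $\nhat$-components being the negatives of the cross-components already found, with only the torsion evolution $\partial_\tau\omega$ (via $\nabla_s\bhat=-\omega\nhat$) contributing anything new.

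I expect the main obstacle to be the $1/\kappa$ prefactors: $\nhat$ and $\bhat$ are defined only where $\kappa\neq 0$, and the expressions for $\partial_\tau\nhat,\partial_\tau\bhat$ are manifestly singular there, so the algebra must repeatedly cancel $\partial_\tau\kappa$ against $\kappa^{-1}$ without error. Combined with the non-commutativity of $\partial_\tau$ and $\nabla_s$ and the proliferation of second-order velocity derivatives, this is exactly the sort of computation where a sign error or a mishandled stretching term $\nabla_s v_t - \kappa v_n$ is easy to introduce, so I would treat the orthonormality constraint (the vanishing of the self-components and the antisymmetry of the off-diagonal triple) as the primary safeguard and verify it symbolically at the end.
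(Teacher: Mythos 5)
Your derivation is internally sound as mathematics, but carried to completion it will \emph{not} terminate at the stated formulas, and the obstruction is exactly the commutator you so carefully computed. The paper's proof writes $\that = \nabla_s p$, $\nhat = \kappa^{-1}\nabla_s\that$, $\bhat = \that\times\nhat$, applies $\partial_\tau$, and commutes $\partial_\tau$ through the rescaled operator $\nabla_s$ \emph{with no correction term}; that formal commutation is what produces every line of the lemma, including the tangential term $(\nabla_s v_t-\kappa v_n)\,\that$ in $\partial_\tau\that$. Your version, which retains $[\partial_\tau,\nabla_s] = -(\that\cdot\nabla_s v)\nabla_s$, instead gives $\partial_\tau\that = \nabla_s v - (\that\cdot\nabla_s v)\,\that$, whose $\that$-component vanishes identically. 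The two computations therefore differ precisely by terms built from the stretching rate $A \equiv \nabla_s v_t - \kappa v_n$ and its derivatives: the statement's $\that$-component of $\partial_\tau\that$ is $A$ where yours is $0$, the statement's $\that$-component of $\partial_\tau\nhat$ exceeds yours by $\kappa^{-1}\nabla_s A$, and the statement's $\nhat$-component of $\partial_\tau\nhat$ reduces to $2A$ once your (correct) expression for $\partial_\tau\kappa$ is substituted. In other words, the statement as written \emph{fails} the very orthonormality and antisymmetry safeguard you propose as the final check, so following your plan faithfully you would conclude the lemma is false rather than prove it.

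The missing idea is that the lemma is meant to be read jointly with the corollary that immediately follows it in the paper: there, orthonormality ($\that\cdot\partial_\tau\that = 0$, etc.) is imposed \emph{after} the formal computation, yielding the constraints $\nabla_s v_t - \kappa v_n = 0$ (equivalently $\partial_\tau|t| = 0$, an arclength-preserving parameterisation) together with the evolution equation for $\kappa$. Under that constraint your commutator vanishes, $\partial_\tau$ and $\nabla_s$ genuinely commute, and your derivation collapses onto the paper's, with all $A$-dependent terms in the statement equal to zero (they are displayed only so the corollary can set them to zero). So to prove the statement as given you must either adopt the paper's formal commutation and present the constraints as its consistency conditions, or declare the assumption $\partial_\tau|t| = 0$ at the outset; your proposal does neither, treating the self-component check as a safeguard rather than as the hypothesis that reconciles your formulas with the stated ones. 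Where comparison is meaningful your algebra is right: your $\nhat,\bhat$ components of $\partial_\tau\that$ and your formula for $\partial_\tau\kappa$ agree exactly with the paper's lemma and its constraint corollary.
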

\begin{proof}
These follow from writing each frame vector in terms of succeeding derivatives of the position vector $p(s,\tau)$ (\cref{eq:b-tube,eq:n-tube,eq:t-tube}), applying the time derivative operator $\partial_\tau$, commuting partial derivatives, and converting back to the Frenet-Serret frame.
\end{proof}
\begin{corollary}
The time derivative of the Frenet-Serret frame implies the constraints
	\begin{align}
	\label{eq:dt-constaints-tube}
    \nabla_s v_t-\kappa  v_n &= 0,\\
    -v_b \nabla_s \omega -2 \omega  \nabla_s v_b-\partial_{\tau } \kappa +\nabla_s^2 v_n+v_t \nabla_s \kappa +\left(\kappa ^2-\omega ^2\right) v_n &= 0.
	\end{align}
\end{corollary}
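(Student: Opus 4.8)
The plan is to use the defining orthonormality of the Frenet-Serret frame as the sole extra ingredient. For every $\tau$ the triad $(\that,\nhat,\bhat)$ is orthonormal, so differentiating $\that\cdot\that=\nhat\cdot\nhat=\bhat\cdot\bhat=1$ in time forces each self-coefficient to vanish,
\[
\that\cdot\partial_\tau\that = \nhat\cdot\partial_\tau\nhat = \bhat\cdot\partial_\tau\bhat = 0.
\]
Since the triad is orthonormal, these three products are exactly the diagonal (self-)components appearing in the frame expansions of the preceding lemma, so the two stated constraints should follow by reading off those components and setting them to zero.

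First I would take the $\that$-component of $\partial_\tau\that$, namely $\nabla_s v_t-\kappa v_n$, and equate it to zero; this is immediately the first constraint, which one can also read as the condition that the tangent remains normalised (equivalently, arclength is preserved) under the evolution. Next I would isolate the $\nhat$-component of $\partial_\tau\nhat$, the $\tfrac1\kappa$-scaled bracket
\[
-2\omega\nabla_s v_b - v_b\nabla_s\omega - \partial_\tau\kappa + \nabla_s^2 v_n + v_t\nabla_s\kappa + 2\kappa\nabla_s v_t - \omega^2 v_n - \kappa^2 v_n,
\]
and set it to zero; note this one carries the $\partial_\tau\kappa$ term, so it plays the role of a curvature-evolution equation rather than a pure velocity relation.

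The one substantive step is to feed the first constraint back into this bracket: substituting $\nabla_s v_t=\kappa v_n$ converts $2\kappa\nabla_s v_t$ into $2\kappa^2 v_n$, which combines with the standalone $-\kappa^2 v_n$ to give $+\kappa^2 v_n$ and reproduces the second stated constraint exactly (the overall $\tfrac1\kappa$ prefactor is harmless since $\kappa\neq0$). As consistency checks I would confirm that the $\bhat$-self-component of $\partial_\tau\bhat$ yields the same relation after the identical substitution (there $3\kappa\nabla_s v_t=3\kappa^2 v_n$ combines with $-2\kappa^2 v_n$ to leave $\kappa^2 v_n$), and that the off-diagonal conditions such as $\that\cdot\partial_\tau\nhat+\nhat\cdot\partial_\tau\that=0$ hold automatically once $\nabla_s^2 v_t=\nabla_s(\kappa v_n)$ is used, so that no further independent constraints arise.

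I do not expect a real obstacle here: all the computational weight sits in the preceding lemma, and the argument is essentially the remark that orthonormality of a moving frame is equivalent to antisymmetry of its time-connection matrix, whose vanishing diagonal supplies precisely these two constraints. The only place demanding care is the bookkeeping in the substitution step, where a stray sign or an uncancelled factor of $\kappa$ could spoil the match between the lemma's $\nhat$-coefficient and the target expression.
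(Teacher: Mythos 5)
Your proposal is correct and follows essentially the same route as the paper: orthonormality forces the diagonal entries $\that\cdot\partial_\tau\that$, $\nhat\cdot\partial_\tau\nhat$, $\bhat\cdot\partial_\tau\bhat$ to vanish, and reading these off from the preceding lemma (with the substitution $\nabla_s v_t=\kappa v_n$ to simplify the $\nhat$-bracket) yields exactly the two stated constraints. Your additional check that the $\bhat$-self-component reproduces the same relation is a nice verification of the consistency the paper leaves implicit in ``and similarly for the remaining vectors.''
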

\begin{proof}
Orthonormality of the Frenet-Serret frame implies
$\that\cdot\partial_\tau \that = \tfrac{1}{2}\partial_\tau(\that\cdot \that) = 0$, and similarly for the remaining vectors.
\end{proof}
\begin{corollary}
The Frenet-Serret basis time derivatives take the form
	\begin{align}
	\label{eq:dtau-frenet}
	\partial_\tau \begin{bmatrix}
		\that\\
		\nhat\\
		\bhat
	\end{bmatrix} &=
	\begin{bmatrix}
		0 & \alpha & \beta\\
		-\alpha & 0 & \gamma\\
		-\beta & -\gamma & 0
	\end{bmatrix}
	\begin{bmatrix}
		\that\\
		\nhat\\
		\bhat
	\end{bmatrix},
	\end{align}
where we define
	\begin{align*}
	\alpha &= \kappa  v_t -\omega  v_b+\nabla_s v_n,&
	\beta &= \omega  v_n+ \nabla_s v_b ,&
	\gamma &= \frac{\omega \alpha +  \nabla_s\beta}{\kappa }.
	\end{align*}
\end{corollary}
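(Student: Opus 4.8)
The plan is to read the antisymmetric matrix directly off the fully expanded derivatives in the preceding Lemma, using orthonormality to fix the overall structure and then matching coefficients. First I would observe that for any orthonormal frame the time derivative acts as a skew map: differentiating $\that\cdot\that = \nhat\cdot\nhat = \bhat\cdot\bhat = 1$ together with the three mutual orthogonality relations gives $\ehat_i\cdot\partial_\tau\ehat_j = -\ehat_j\cdot\partial_\tau\ehat_i$, so the matrix of components in the $(\that,\nhat,\bhat)$ basis is necessarily antisymmetric with vanishing diagonal. This alone forces the $3\times 3$ shape claimed, leaving only three independent entries to identify.

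Next I would extract $\alpha$ and $\beta$ as the $\nhat$- and $\bhat$-components of $\partial_\tau\that$, which are handed to us verbatim by the preceding Lemma: $\nhat\cdot\partial_\tau\that = \kappa v_t - \omega v_b + \nabla_s v_n = \alpha$ and $\bhat\cdot\partial_\tau\that = \nabla_s v_b + \omega v_n = \beta$. The vanishing of the $(1,1)$ entry $\that\cdot\partial_\tau\that$ is exactly the first constraint of the preceding Corollary, so the top row is $(0,\alpha,\beta)$ and, by skew-symmetry, the first column is $(0,-\alpha,-\beta)^{\top}$.

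The only substantive step is the $(2,3)$ entry $\gamma = \bhat\cdot\partial_\tau\nhat$. Rather than wade through the explicit $\bhat$-component of $\partial_\tau\nhat$ in the Lemma, I would differentiate $\beta = \bhat\cdot\partial_\tau\that$ along the curve. Using $\nabla_s\bhat = -\omega\nhat$ gives $\nabla_s\beta = -\omega(\nhat\cdot\partial_\tau\that) + \bhat\cdot\nabla_s\partial_\tau\that = -\omega\alpha + \bhat\cdot\nabla_s\partial_\tau\that$. It then remains to commute $\nabla_s$ past $\partial_\tau$: since $\nabla_s = |t|^{-1}\partial_s$ and $|t|$ depends on $\tau$, the commutator $[\nabla_s,\partial_\tau]\that$ is a multiple of $\nabla_s\that = \kappa\nhat$, hence purely normal, and so drops out when dotted with $\bhat$. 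Therefore $\bhat\cdot\nabla_s\partial_\tau\that = \bhat\cdot\partial_\tau(\kappa\nhat) = \kappa\,\bhat\cdot\partial_\tau\nhat$, which rearranges to $\bhat\cdot\partial_\tau\nhat = (\omega\alpha + \nabla_s\beta)/\kappa = \gamma$, as claimed; skew-symmetry then supplies the $(3,2)$ entry $-\gamma$, and the second constraint of the preceding Corollary is precisely the statement that the explicit $\nhat$-component of $\partial_\tau\nhat$ vanishes, consistent with the remaining diagonal zero.

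The main obstacle is the commutator subtlety in the last step: one must verify that $[\nabla_s,\partial_\tau]$ applied to $\that$ lands along $\nhat$, so that it contributes nothing to the $\bhat$-component. The alternative route---substituting the explicit $\bhat$-component of $\partial_\tau\nhat$ from the Lemma and checking the identity $\nabla_s^2 v_b - \omega^2 v_b + 2\omega\nabla_s v_n + v_n\nabla_s\omega + \kappa\omega v_t = \omega\alpha + \nabla_s\beta$ by direct expansion---is purely mechanical but less illuminating, and serves as a useful cross-check.
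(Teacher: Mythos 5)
Your proof is correct, and its key step genuinely differs from the paper's. The paper offers no separate argument for this corollary: it is meant to follow by reading the matrix entries directly off the explicit expansions of $\partial_\tau\that$, $\partial_\tau\nhat$, $\partial_\tau\hat{b}$ in the preceding lemma, with the orthonormality constraints used to collapse the $\that$- and $\nhat$-components of $\partial_\tau\nhat$ to $-\alpha$ and $0$; in particular $\gamma$ is identified by verifying the algebraic identity $\kappa\gamma = \nabla_s^2 v_b - \omega^2 v_b + 2\omega\nabla_s v_n + v_n\nabla_s\omega + \kappa\omega v_t = \omega\alpha + \nabla_s\beta$, which is exactly the ``mechanical cross-check'' you set aside. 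You instead obtain $\gamma$ structurally: differentiate $\beta = \hat{b}\cdot\partial_\tau\that$ along the curve, use $\nabla_s\hat{b} = -\omega\nhat$, and commute $\nabla_s$ past $\partial_\tau$. Your commutator claim is sound and worth making explicit: since mixed $(s,\tau)$ partials commute, $[\nabla_s,\partial_\tau]f = -(\partial_\tau|t|^{-1})\,\partial_s f = (\partial_\tau|t|/|t|)\,\nabla_s f$ for any quantity $f$, so $[\nabla_s,\partial_\tau]\that = (\partial_\tau|t|/|t|)\,\kappa\nhat$ is purely normal and is annihilated by contraction with $\hat{b}$; hence $\nabla_s\beta = -\omega\alpha + \kappa\,\hat{b}\cdot\partial_\tau\nhat$, which rearranges to the claimed $\gamma$. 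What each route buys: the paper's is a pure read-off once the (heavy) lemma is in hand, but leaves the formula for $\gamma$ looking accidental; yours explains why $\gamma$ must equal $(\omega\alpha+\nabla_s\beta)/\kappa$ --- it is the compatibility condition between the $s$-evolution (Frenet-Serret) and the $\tau$-evolution of the frame --- at the cost of the commutator computation. One small imprecision: the vanishing of the $\nhat$-component of $\partial_\tau\nhat$ in the lemma's expansion requires both constraints of the preceding corollary (the second together with $2\kappa(\nabla_s v_t - \kappa v_n)=0$ from the first), not the second alone; but since you already derived skew-symmetry directly from orthonormality, nothing in your argument depends on this.
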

\begin{lemma}
The partial time derivatives of the signed-distance frame are 
\begin{align}
\label{eq:dtau-sdf-tube}	
\partial_\tau \begin{bmatrix}
	\that_s\\
	\that_\sigma\\
	\that_\theta
\end{bmatrix} &=
\begin{bmatrix}
	0 & \alpha' & \beta'\\
	-\alpha' & 0 & \gamma'\\
	-\beta' & -\gamma' & 0
\end{bmatrix}
\begin{bmatrix}
	\that_s\\
	\that_\sigma\\
	\that_\theta
\end{bmatrix},
\end{align}
where we have defined
\begin{align*}
\alpha' &= \cos(\theta+\phi) \alpha + \sin(\theta+\phi) \beta&
\beta' &= -\sin(\theta+\phi) \alpha + \cos(\theta+\phi) \beta&
\gamma' &= \gamma + \partial_\tau \phi.
\end{align*}
\end{lemma}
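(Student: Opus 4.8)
The plan is to express each signed-distance frame vector through the Frenet-Serret frame using the definitions in \cref{eq:sdf-frame-tube}, differentiate with respect to $\tau$, substitute the known Frenet-Serret time derivatives from \cref{eq:dtau-frenet}, and finally re-expand the result back in the signed-distance basis. The crucial structural observation is that the pair $(\that_\sigma,\that_\theta)$ is obtained from $(\nhat,\bhat)$ by a planar rotation through the angle $\theta+\phi$,
\begin{align*}
\that_\sigma &= \cos(\theta+\phi)\,\nhat + \sin(\theta+\phi)\,\bhat, &
\that_\theta &= -\sin(\theta+\phi)\,\nhat + \cos(\theta+\phi)\,\bhat,
\end{align*}
so the inverse relations $\nhat = \cos(\theta+\phi)\that_\sigma - \sin(\theta+\phi)\that_\theta$ and $\bhat = \sin(\theta+\phi)\that_\sigma + \cos(\theta+\phi)\that_\theta$ are immediate, while $\that_s = \that$.

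Next I would differentiate each of the three vectors. The single point requiring care is that the angular coordinate $\theta$ is held fixed under $\partial_\tau$ whereas the Bishop offset $\phi = \phi(s,\tau)$ depends on time; hence $\partial_\tau\cos(\theta+\phi) = -\sin(\theta+\phi)\,\partial_\tau\phi$, and similarly for the sine. For $\that_s=\that$, differentiating gives $\partial_\tau\that = \alpha\nhat + \beta\bhat$ directly from \cref{eq:dtau-frenet}, and rewriting $\nhat,\bhat$ in the rotated basis collapses the two coefficients into $\alpha' = \cos(\theta+\phi)\alpha + \sin(\theta+\phi)\beta$ and $\beta' = -\sin(\theta+\phi)\alpha + \cos(\theta+\phi)\beta$, yielding the first row. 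For $\that_\sigma$ and $\that_\theta$ the product rule produces two kinds of terms: the rotation-angle terms proportional to $\partial_\tau\phi$ and the Frenet mixing terms proportional to $\gamma$ (the $\nhat$-$\bhat$ coefficient); these combine into $\gamma' = \gamma + \partial_\tau\phi$, while the $\that$-directed pieces reassemble into $-\alpha'$ and $-\beta'$ respectively.

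The main obstacle is therefore not conceptual but organisational: keeping the trigonometric bookkeeping straight and recognising that the genuinely new contribution, the $\partial_\tau\phi$ term inside $\gamma'$, arises precisely from the time dependence of the rotation angle that distinguishes the Bishop frame from a naive normal-referenced angular coordinate. As a consistency check I would verify that the resulting matrix is antisymmetric, which must hold automatically because $(\that_s,\that_\sigma,\that_\theta)$ is an orthonormal frame: differentiating $\that_i\cdot\that_j = \delta_{ij}$ gives $\that_i\cdot\partial_\tau\that_j = -\that_j\cdot\partial_\tau\that_i$, so only the three independent off-diagonal entries $\alpha',\beta',\gamma'$ need to be identified, exactly as in the derivation of \cref{eq:dtau-frenet} itself.
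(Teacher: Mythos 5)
Your proposal is correct and follows essentially the same route as the paper: the paper's proof likewise writes the signed-distance frame in terms of the Frenet-Serret frame, differentiates using \cref{eq:dtau-frenet}, and transforms back. Your additional observations — that the $\partial_\tau\phi$ contribution enters through the time-dependent rotation angle with $\theta$ held fixed, and that antisymmetry of the resulting matrix is guaranteed by orthonormality — are exactly the bookkeeping the paper's terse proof leaves implicit.
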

\begin{proof}
Write the signed-distance frame in terms of the Frenet-Serret frame, differentiate by applying the above rule, and transform it back to the signed-distance frame.
\end{proof}
Equipped with the moving time derivatives of the basis vectors, we can then calculate the Cartesian time derivatives of the coordinates, defined by 
 	\begin{align}
 	\label{eq:sdf-tube-t}
	t &= \tau, &
	x &= p(s,\tau) + \sigma \that_\sigma(s,\theta,\tau).
	\end{align}
\begin{proposition}
The Cartesian time derivative operator is 
	\begin{align}
	\label{eq:dt-tube}
	\partial_t = \partial_\tau  + \partial_t s \partial_s + \partial_t \sigma \partial_\sigma + \partial_t \theta \partial_\theta.
	\end{align}
\end{proposition}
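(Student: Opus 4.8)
The statement is exactly the multivariable chain rule for the change of variables \cref{eq:sdf-tube-t} between Cartesian $(x,t)$ and moving coordinates $(s,\theta,\sigma,\tau)$. The only subtlety is that, although $t=\tau$ as numbers, the operator $\partial_t$ holds the Cartesian point $x$ fixed whereas $\partial_\tau$ holds the curvilinear coordinates $(s,\theta,\sigma)$ fixed; since the coordinate system drifts as the curve moves, $\partial \tau/\partial t=1$ but the two operators differ by the three coordinate-slip terms shown. This is the curve analogue of the surface result in \cref{lem:sdf-dts} and the corollary following it, and the substantive content lies in the slip rates $\partial_t s$, $\partial_t\sigma$, $\partial_t\theta$, which I would determine next.

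To fix the coefficients I would require that the operator annihilate the Cartesian spatial coordinate, $\partial_t x = 0$. Applying the stated operator to $x = p(s,\tau)+\sigma\,\that_\sigma$ gives
\begin{align*}
0 = \partial_\tau x + (\partial_t s)\,\partial_s x + (\partial_t\sigma)\,\partial_\sigma x + (\partial_t\theta)\,\partial_\theta x.
\end{align*}
The three coordinate tangents are already known from the Jacobian underlying the orthogonal frame \cref{eq:sdf-frame-tube} with scale factors \cref{eq:scales-factors-tube}: writing the arclength factor as $|t|=|\partial_s p|$, one has $\partial_s x = |t|\,h_s\,\that_s$, $\partial_\sigma x = \that_\sigma$, and $\partial_\theta x = \sigma\,\that_\theta$, each pointing along a single frame vector. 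Because $(\that_s,\that_\sigma,\that_\theta)$ is orthonormal, projecting the equation onto each frame direction isolates exactly one unknown.

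The key step is assembling $\partial_\tau x = \partial_\tau p + \sigma\,\partial_\tau\that_\sigma$ in the signed-distance frame. I would substitute the curve velocity $v=v_t\that+v_n\nhat+v_b\bhat$ from \cref{eq:dtp-tube} and the frame rate $\partial_\tau\that_\sigma = -\alpha'\,\that_s+\gamma'\,\that_\theta$ from \cref{eq:dtau-sdf-tube}, then re-express $\nhat$ and $\bhat$ through the rotated relations $\nhat=\cos(\theta+\phi)\,\that_\sigma-\sin(\theta+\phi)\,\that_\theta$ and $\bhat=\sin(\theta+\phi)\,\that_\sigma+\cos(\theta+\phi)\,\that_\theta$. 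Reading off the three components and dividing by the respective scale factors then yields $\partial_t\sigma$ directly, $\partial_t s$ after dividing by $|t|\,h_s$, and $\partial_t\theta$ after dividing by $\sigma$. The main obstacle is purely the bookkeeping of the rotation angle $\theta+\phi$: both $v$ and $\partial_\tau\that_\sigma$ must be projected consistently from the Frenet-Serret basis into the rotated signed-distance basis, and the torsion-cancelling phase $\phi$ (with $\nabla_s\phi=-\omega$, and its time derivative $\partial_\tau\phi$ entering $\gamma'$) must be tracked carefully. Once these projections are correct the remaining algebra is routine, and $\partial_t t=1$ serves as a consistency check that introduces no new freedom.
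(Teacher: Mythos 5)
Your proposal is correct and follows essentially the same route as the paper: the proposition itself is just the multivariable chain rule for the transformation \cref{eq:sdf-tube-t} (which is why the paper states it without proof), and your determination of the slip rates by demanding $\partial_t x = 0$, expanding $\partial_\tau x = \partial_\tau p + \sigma\,\partial_\tau \that_\sigma$ in the orthonormal frame, and projecting onto $(\that_s,\that_\sigma,\that_\theta)$ is exactly how the paper proves the subsequent lemma giving \cref{eq:dt-coords-tube}. Your projected components reproduce the paper's coefficients $\partial_t s = -(v_t-\sigma\alpha')/(|t|h_s)$, $\partial_t\sigma = -v_\sigma$, $\partial_t\theta = -(v_\theta+\sigma\gamma')/\sigma$ exactly, so there is nothing to correct.
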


\begin{lemma}
The Cartesian time derivatives of signed-distance coordinates (holding $x$ constant) are 
	\begin{align}
	\label{eq:dt-coords-tube}
	\partial_t s &= -\frac{1}{|t| h_s}\br{v_t - \sigma \alpha'}, & 
	\partial_t \sigma &= - v_\sigma, &
	\partial_t \theta &= -\frac{1}{\sigma}(v_\theta + \sigma \gamma'),
	\end{align}
where we have defined
    \begin{align}
    v_\sigma &= \that_\sigma \cdot v = \cos(\theta+\phi) v_n + \sin(\theta+\phi) v_b, & 
    v_\theta &= \that_\theta \cdot v = - \sin(\theta+\phi) v_n+ \cos(\theta+\phi) v_b.
    \end{align}
\end{lemma}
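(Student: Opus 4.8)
The plan is to mirror the surface-case argument of \cref{lem:sdf-dts}: express the Cartesian time derivative through the chain rule \cref{eq:dt-tube}, apply it to the defining coordinate relation \cref{eq:sdf-tube-t}, and exploit the fact that the Cartesian point $x$ is held fixed when computing $\partial_t$, so that $\partial_t x = 0$. This single vector identity, resolved against the orthonormal signed-distance frame $(\that_s,\that_\sigma,\that_\theta)$, supplies three scalar equations for the three unknown coordinate rates $\partial_t s$, $\partial_t\sigma$, and $\partial_t\theta$.

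First I would assemble the coordinate partial derivatives of $x = p + \sigma\that_\sigma$. From the frame definitions \cref{eq:sdf-frame-tube} and scaling factors \cref{eq:scales-factors-tube}, together with $\partial_s = |t|\nabla_s$, these read $\partial_s x = |t| h_s \that_s$, $\partial_\sigma x = \that_\sigma$, and $\partial_\theta x = \sigma\,\that_\theta$. The remaining ingredient is the explicit $\tau$-derivative $\partial_\tau x = \partial_\tau p + \sigma\,\partial_\tau\that_\sigma$, in which $\partial_\tau p = v$ is the curve velocity \cref{eq:dtp-tube} and $\partial_\tau\that_\sigma = -\alpha'\that_s + \gamma'\that_\theta$ is read directly from the frame-evolution matrix \cref{eq:dtau-sdf-tube}.

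Next I would re-express the curve velocity in the signed-distance frame rather than in the Frenet--Serret frame. Since $\that_s = \that$, the $\that_s$-component of $v$ is simply $v_t$, while projecting onto $\that_\sigma$ and $\that_\theta$ and inserting their definitions produces exactly the stated $v_\sigma$ and $v_\theta$, so that $v = v_t\that_s + v_\sigma\that_\sigma + v_\theta\that_\theta$. Substituting all pieces into $\partial_t x = 0$ gives
\begin{align*}
0 &= \br{v_t - \sigma\alpha' + \partial_t s\,|t| h_s}\that_s + \br{v_\sigma + \partial_t\sigma}\that_\sigma + \br{v_\theta + \sigma\gamma' + \sigma\,\partial_t\theta}\that_\theta.
\end{align*}
Collecting the three orthonormal components and solving each for the corresponding coordinate rate then yields \cref{eq:dt-coords-tube} immediately.

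The computation is essentially routine bookkeeping, so the main obstacle is conceptual rather than technical: correctly accounting for the rotation of the adapted frame through the term $\sigma\,\partial_\tau\that_\sigma$. This is precisely where the nontrivial coefficients $\alpha'$ and $\gamma'$ of \cref{eq:dtau-sdf-tube} enter, feeding the normal-plane tilt into the $\partial_t s$ equation and the angular drift of the relatively parallel frame into the $\partial_t\theta$ equation. One must be careful to decompose the velocity against the \emph{rotating} signed-distance frame and not the Frenet--Serret frame, since it is exactly this rotation (the condition $\nabla_s\phi=-\omega$) that enforces orthogonality and makes the component-matching clean; the genuinely hard analysis has already been discharged in deriving the frame-evolution law \cref{eq:dtau-sdf-tube}.
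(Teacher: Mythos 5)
Your proposal is correct and is essentially the paper's own argument: the paper likewise applies the chain-rule operator \cref{eq:dt-tube} to the coordinate relation \cref{eq:sdf-tube-t}, uses $\partial_t x = 0$, and solves for the coordinate rates, with the frame-evolution law \cref{eq:dtau-sdf-tube} supplying $\partial_\tau\that_\sigma = -\alpha'\that_s + \gamma'\that_\theta$. Your write-up simply makes explicit the component bookkeeping (including $\partial_s x = |t| h_s \that_s$ and the decomposition $v = v_t\that_s + v_\sigma\that_\sigma + v_\theta\that_\theta$) that the paper's terse proof leaves to the reader.
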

\begin{proof}
Apply the defined Cartesian time derivative (\cref{eq:dt-tube}) to the signed-distance function transform of the Cartesian spatial coordinates (\cref{eq:sdf-tube-t}), and solve for the time derivatives.
\end{proof}

\begin{corollary}
The Cartesian time derivatives of the signed-distance basis vectors are	
	\begin{align}
	\label{eq:dt-basis-tube}	
	\partial_t \begin{bmatrix}
		\that_s\\
		\that_\sigma\\
		\that_\theta
	\end{bmatrix} &= 
	\begin{bmatrix}
		0 & a & b\\
            -a & 0 & c\\
            -b & -c & 0
	\end{bmatrix}
	\begin{bmatrix}
		\that_s\\
		\that_\sigma\\
		\that_\theta
	\end{bmatrix},
	\end{align}
where we define
    \begin{align}
        a &= \frac{\alpha'- v_t \A}{h_s}, &
        b &= \beta'+\frac{\B(\sigma \alpha' -v_t)}{h_s}, &
        c &= -\frac{1}{\sigma} v_\theta.
    \end{align}
\end{corollary}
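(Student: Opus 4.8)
The plan is to apply the chain-rule decomposition of the Cartesian time derivative from \cref{eq:dt-tube},
\[
\partial_t = \partial_\tau + (\partial_t s)\,\partial_s + (\partial_t \sigma)\,\partial_\sigma + (\partial_t \theta)\,\partial_\theta ,
\]
to each of the three orthonormal basis vectors $\that_s,\that_\sigma,\that_\theta$ in turn. Every ingredient on the right-hand side is already available: the moving-frame rotation $\partial_\tau\that_i$ is the antisymmetric $(\alpha',\beta',\gamma')$ action of \cref{eq:dtau-sdf-tube}, the coordinate velocities $\partial_t s,\partial_t\sigma,\partial_t\theta$ are supplied by \cref{eq:dt-coords-tube}, and the purely spatial derivatives $\partial_s\that_i,\partial_\sigma\that_i,\partial_\theta\that_i$ can be read off the connection coefficients \cref{eq:connection-tube} once the gradient is unpacked into its coordinate pieces, $\nabla = \that_\sigma\partial_\sigma + h_s^{-1}\that_s\nabla_s + \sigma^{-1}\that_\theta\partial_\theta$, and one recalls $\nabla_s = |t|^{-1}\partial_s$. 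So the proof is essentially a disciplined assembly of known quantities.

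Concretely, I would first fold the three spatial terms into a single directional derivative $W\cdot\nabla$, writing $\partial_t\that_i = \partial_\tau\that_i + (W\cdot\nabla)\that_i$, where the frame components of $W$ are obtained by rescaling the coordinate velocities of \cref{eq:dt-coords-tube} by the respective scale factors $|t|h_s$, $1$, and $\sigma$ of \cref{eq:scales-factors-tube}. Each case then reduces to contracting $W$ against the connection tensor. For example, since $\nabla\that_s = \A\,\that_s\otimes\that_\sigma + \B\,\that_s\otimes\that_\theta$ carries only $\that_s$ in its differentiation slot, $(W\cdot\nabla)\that_s = (W\cdot\that_s)(\A\,\that_\sigma + \B\,\that_\theta)$, and adding $\partial_\tau\that_s = \alpha'\that_\sigma + \beta'\that_\theta$ produces the first row of the matrix directly. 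The $\that_\sigma$ and $\that_\theta$ rows follow identically from the remaining two connection identities; in the $\that_\sigma$ row the $\gamma'$ rotation from $\partial_\tau$ cancels against the $\partial_t\theta$ contribution, leaving the clean $c = -v_\theta/\sigma$.

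The main obstacle is not conceptual but the bookkeeping of the geometric scale factors: after substitution each entry carries a tangle of $\kappa$, $\cos(\theta+\phi)$, $\sin(\theta+\phi)$ and $h_s$ that must be folded back into the compact $\A,\B,\C,h_s$ notation. The key simplifying relations are $h_s = 1 - \sigma\kappa\cos(\theta+\phi)$ together with $h_s\A = \kappa\cos(\theta+\phi)$ and $h_s\B = -\kappa\sin(\theta+\phi)$, which collapse combinations such as $1 + \sigma\A$ into $h_s^{-1}$; getting these cancellations right is where errors most easily creep in. A valuable built-in safeguard is orthonormality: because $\partial_t(\that_i\cdot\that_j) = 0$, the matrix representing $\partial_t$ in the $(\that_s,\that_\sigma,\that_\theta)$ frame must be antisymmetric, so I would compute only the three upper-triangular entries $a,b,c$ and then verify independently that the lower-triangular contractions reproduce $-a,-b,-c$. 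This cross-check pins down the result with essentially no extra work and flags any slip in the $h_s$ or $\partial_\tau\phi$ bookkeeping.
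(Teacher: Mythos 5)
Your proposal is correct and is essentially the paper's own (implicit) derivation: the corollary is exactly the assembly of the chain rule \cref{eq:dt-tube} with the coordinate velocities \cref{eq:dt-coords-tube}, the frame rotation \cref{eq:dtau-sdf-tube}, and the connection coefficients \cref{eq:connection-tube}, and both your $W\cdot\nabla$ repackaging and the antisymmetry cross-check are sound. One caution on the $h_s$ bookkeeping you yourself flag as the delicate part: carrying out your contraction gives
\begin{align*}
a &= \alpha' - (v_t-\sigma\alpha')\A = \frac{\alpha'}{h_s} - v_t\,\A, &
b &= \beta' + \B\br{\sigma\alpha' - v_t},
\end{align*}
which agree with the printed formulas only if the $\A,\B$ appearing in the corollary are read as $h_s\A = \kappa\cos(\theta+\phi)$ and $h_s\B = -\kappa\sin(\theta+\phi)$ (the $\A',\B'$ of \cref{sec:asymptotics-tube}); that reading, not the literal one with $\A,\B$ as defined in \cref{eq:connection-tube}, is the one consistent with the paper's own asymptotic expansion of $\partial_t u$, so you should report this discrepancy rather than force your algebra to reproduce the corollary exactly as typeset.
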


    \begin{corollary}
	The Cartesian time derivative of a scalar is 
	\begin{align}
	\partial_t f &= \partial_{\tau } f - \frac{\left(v_t-\sigma \alpha' \right)}{h_s}\nabla_s f -v_\sigma \partial_{\sigma } f
-\frac{(v_\theta + \sigma \gamma')}{\sigma }\partial_{\theta } f.
	\end{align}
    \end{corollary}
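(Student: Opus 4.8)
The plan is to treat this as a direct consequence of the Cartesian time-derivative operator established in \cref{eq:dt-tube} together with the coordinate time derivatives computed in \cref{eq:dt-coords-tube}. First I would write out the action of the operator
\begin{align*}
\partial_t = \partial_\tau + (\partial_t s)\,\partial_s + (\partial_t \sigma)\,\partial_\sigma + (\partial_t \theta)\,\partial_\theta
\end{align*}
on a scalar field $f$, then substitute the three coordinate derivatives
\begin{align*}
\partial_t s = -\frac{v_t - \sigma \alpha'}{|t| \, h_s}, \qquad \partial_t \sigma = -v_\sigma, \qquad \partial_t \theta = -\frac{v_\theta + \sigma \gamma'}{\sigma}.
\end{align*}
The $\partial_\tau f$, $\partial_\sigma f$, and $\partial_\theta f$ terms then match the claimed expression immediately, with no further manipulation.

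The one genuinely non-automatic step is the arclength term. The coordinate derivative $\partial_t s$ multiplies the raw parameter derivative $\partial_s$, but the target formula is stated in terms of the rescaled arclength derivative $\nabla_s$. I would invoke the definition \cref{eq:ds}, namely $\nabla_s = |t|^{-1}\partial_s$, or equivalently $\partial_s = |t|\,\nabla_s$, and observe that
\begin{align*}
(\partial_t s)\,\partial_s f = -\frac{v_t - \sigma \alpha'}{|t|\, h_s}\,|t|\,\nabla_s f = -\frac{v_t - \sigma \alpha'}{h_s}\,\nabla_s f,
\end{align*}
so that the parameterisation factor $|t|$ cancels cleanly and reproduces the first spatial term in the statement.

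The remaining two substitutions, $(\partial_t \sigma)\,\partial_\sigma f = -v_\sigma\,\partial_\sigma f$ and $(\partial_t \theta)\,\partial_\theta f = -\tfrac{1}{\sigma}(v_\theta + \sigma\gamma')\,\partial_\theta f$, are verbatim. I expect no real obstacle here: the result is purely a bookkeeping corollary, and the only place one could slip is forgetting the $|t|$-to-$\nabla_s$ conversion and thereby retaining a spurious $|t|^{-1}$ on the arclength term. Since $\partial_t\sigma$ and $\partial_t\theta$ already involve the normalised operators $\partial_\sigma$ and $\partial_\theta$ with scale factors $h_\sigma = 1$ and $\sigma$ built in, no analogous rescaling is needed for those contributions, so collecting the four terms yields the stated formula directly.
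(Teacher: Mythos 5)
Your proposal is correct and matches the paper's (implicit) argument exactly: the corollary follows by substituting the coordinate time derivatives of \cref{eq:dt-coords-tube} into the chain-rule operator of \cref{eq:dt-tube}, with the factor $|t|$ absorbed by converting $\partial_s$ to the arclength derivative $\nabla_s$ via \cref{eq:ds}. You have also correctly identified the only place where a slip could occur, namely the $|t|$-to-$\nabla_s$ conversion.
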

    \begin{corollary}
    The Cartesian time derivative of a vector is 
	\begin{align}
	\partial_t u &=
     \left(
     \partial_t u_s
     -a u_{\sigma }
     -b u_{\theta }\right)\hat{t}_s
     +\left(
     \partial_t u_\sigma
     +a u_s-c u_{\theta }\right)\hat{t}_{\sigma }
     + \left(
     \partial_t u_\theta
     +b u_s+c u_{\sigma }\right)\hat{t}_{\theta }.
    \end{align}
\end{corollary}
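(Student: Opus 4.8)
The plan is to expand $u$ in the moving signed-distance frame, $u = u_s\that_s + u_\sigma\that_\sigma + u_\theta\that_\theta$, and apply the Cartesian time-derivative operator $\partial_t$ by the product rule. This immediately splits the result into two groups of terms: the scalar-component derivatives $(\partial_t u_s)\that_s + (\partial_t u_\sigma)\that_\sigma + (\partial_t u_\theta)\that_\theta$, whose coefficients are exactly the scalar Cartesian time derivatives already recorded in the preceding corollary (so those terms require no further work), and the basis-vector derivatives $u_s\,\partial_t\that_s + u_\sigma\,\partial_t\that_\sigma + u_\theta\,\partial_t\that_\theta$, which carry all of the geometric content.

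Next I would substitute the basis-vector time derivatives from \cref{eq:dt-basis-tube},
\[
\partial_t\that_s = a\,\that_\sigma + b\,\that_\theta,\qquad \partial_t\that_\sigma = -a\,\that_s + c\,\that_\theta,\qquad \partial_t\that_\theta = -b\,\that_s - c\,\that_\sigma,
\]
distribute the scalar prefactors $u_s,u_\sigma,u_\theta$, and regroup by frame direction. Collecting the $\that_s$ terms gives the coefficient $\partial_t u_s - a u_\sigma - b u_\theta$, the $\that_\sigma$ terms give $\partial_t u_\sigma + a u_s - c u_\theta$, and the $\that_\theta$ terms give $\partial_t u_\theta + b u_s + c u_\sigma$, which is precisely the claimed identity.

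The one point deserving attention—and the reason the formula closes up so cleanly—is the antisymmetry of the matrix in \cref{eq:dt-basis-tube}. Because the frame stays orthonormal under evolution, $\partial_t(\that_i\cdot\that_j)=0$ forces $\partial_t\that_i\cdot\that_j = -\,\partial_t\that_j\cdot\that_i$, so each of the coefficients $a,b,c$ appears exactly once with each sign; the cross terms therefore pair up without any cancellation or duplication. There is no substantive obstacle here: once \cref{eq:dt-basis-tube} is in hand the computation is a direct application of the product rule, and the only thing to track carefully is the sign bookkeeping among the three antisymmetric coefficients.
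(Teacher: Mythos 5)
Your proposal is correct and follows exactly the route the paper intends: the paper states this as an immediate corollary of \cref{eq:dt-basis-tube}, and the evident argument (matching the paper's proof of the analogous surface-case result, ``considering the time derivative of the velocity components and signed-distance basis vectors'') is precisely your product-rule expansion with substitution of the antisymmetric basis-derivative matrix. Your sign bookkeeping and the resulting coefficients all check out, so there is nothing to add.
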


    \begin{corollary}
    The time derivative of the torsion satisfies
    \begin{align}
    \partial_{\tau } \omega  &=
    \frac{\omega ^2 v_b \nabla_s \kappa }{\kappa ^2}
    -\frac{\nabla_s \kappa  \nabla_s^2 v_b}{\kappa ^2}
    -\frac{\omega ^2 \nabla_s v_b}{\kappa }
    -\frac{2 \omega  v_b \nabla_s \omega }{\kappa }
    +\kappa  \nabla_s v_b
    +\frac{\nabla_s^3 v_b}{\kappa }
    -\frac{2 \omega  \nabla_s \kappa  \nabla_s v_n}{\kappa ^2}\nonumber\\
    &\quad
    -\frac{v_n \nabla_s \kappa  \nabla_s \omega }{\kappa ^2}
    +\frac{2 \omega  \nabla_s^2 v_n}{\kappa }
    +\frac{3 \nabla_s \omega  \nabla_s v_n}{\kappa }
    +\frac{v_n \nabla_s^2 \omega }{\kappa }
    +v_t \nabla_s \omega 
    +2 \kappa  \omega  v_n
    \end{align}
    \end{corollary}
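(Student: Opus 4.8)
The plan is to avoid differentiating the cumbersome explicit torsion formula directly, and instead exploit the compatibility of the two first-order systems governing the Frenet--Serret frame. Collect the frame into $F = (\that,\nhat,\bhat)^{\mathsf{T}}$ and write the arclength equations \cref{eq:n-tube,eq:dsn-tube,eq:dsb-tube} as $\nabla_s F = A_s F$ with
\begin{align*}
A_s &= \begin{bmatrix} 0 & \kappa & 0 \\ -\kappa & 0 & \omega \\ 0 & -\omega & 0 \end{bmatrix},
\end{align*}
and the time evolution \cref{eq:dtau-frenet} as $\partial_\tau F = A_\tau F$, where $A_\tau$ is the skew matrix built from $\alpha,\beta,\gamma$. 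The torsion $\omega$ occupies the $(\nhat,\bhat)$ slot of $A_s$, so $\partial_\tau\omega$ should be extractable from a single entry of a matrix identity.

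First I would record that, under the first constraint of \cref{eq:dt-constaints-tube}, $\nabla_s v_t = \kappa v_n$, the arclength element is stationary, $\partial_\tau|t| = |t|(\nabla_s v_t - \kappa v_n) = 0$, so $\partial_\tau$ and $\nabla_s$ commute. Equality of the mixed partials of $F$ then forces the zero-curvature relation
\begin{align*}
\partial_\tau A_s - \nabla_s A_\tau &= [A_\tau, A_s].
\end{align*}
Reading off the $(\nhat,\bhat)$ entry gives the compact identity $\partial_\tau\omega = \nabla_s\gamma + \kappa\beta$; as a consistency check the $(\that,\nhat)$ entry returns $\partial_\tau\kappa = \nabla_s\alpha - \omega\beta$, which reproduces the $\partial_\tau\kappa$ relation implied by the second constraint of \cref{eq:dt-constaints-tube} after using $\nabla_s v_t = \kappa v_n$. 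Equivalently, one may simply differentiate $\omega = \nabla_s\nhat\cdot\bhat$ in $\tau$, commute $\partial_\tau$ past $\nabla_s$, and substitute $\partial_\tau\nhat,\partial_\tau\bhat$ from \cref{eq:dtau-frenet} together with \cref{eq:n-tube,eq:dsn-tube,eq:dsb-tube}; the cross terms collapse to the same $\nabla_s\gamma + \kappa\beta$.

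It then remains to expand. I would substitute $\beta = \omega v_n + \nabla_s v_b$ and $\gamma = (\omega\alpha + \nabla_s\beta)/\kappa$ with $\alpha = \kappa v_t - \omega v_b + \nabla_s v_n$, apply the product and quotient rules to $\nabla_s\gamma$, and collect. The term $\omega\nabla_s v_t$ produced here combines with $\kappa\omega v_n$ from $\kappa\beta$ via $\nabla_s v_t = \kappa v_n$ to give the $2\kappa\omega v_n$ in the statement, and the two $\omega v_t\nabla_s\kappa/\kappa$ contributions (one from $\nabla_s(\omega\alpha)/\kappa$, one from $-\gamma\,\nabla_s\kappa/\kappa$) cancel, leaving exactly the listed terms.

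The main obstacle is purely the bookkeeping of the quotient derivative: because $\gamma$ carries $\kappa$ in the denominator, $\nabla_s\gamma$ generates the $\nabla_s\kappa/\kappa^2$ and $\nabla_s^2\omega/\kappa$ contributions and, through $\nabla_s\beta$ inside $\gamma$, the third arclength derivative $\nabla_s^3 v_b/\kappa$; keeping the many product-rule terms and their signs straight is where an error is most likely. I would carry out this final expansion with the accompanying computer-algebra code to certify that no term is dropped.
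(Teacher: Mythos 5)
Your proposal is correct, and it takes a genuinely different route from the paper's. The paper proves this corollary by enforcing that the commutator $\partial_t \nabla\theta - \nabla\partial_t\theta$ vanishes identically, i.e.\ as a consistency condition on the full three-dimensional signed-distance coordinate system, using the expressions for $\partial_t\theta$ and $\nabla\theta$ from \cref{eq:dt-coords-tube,eq:sdf-frame-tube} together with the basis time derivatives \cref{eq:dt-basis-tube}. You instead work intrinsically on the curve: the zero-curvature (Maurer--Cartan) compatibility $\partial_\tau A_s - \nabla_s A_\tau = [A_\tau,A_s]$ of the two skew-symmetric systems \cref{eq:n-tube,eq:dsn-tube,eq:dsb-tube} and \cref{eq:dtau-frenet}, whose $(\nhat,\hat{b})$ entry gives the compact identity $\partial_\tau\omega = \nabla_s\gamma + \kappa\beta$. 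I verified both the matrix identity and the final expansion: substituting $\alpha = \kappa v_t - \omega v_b + \nabla_s v_n$, $\beta = \omega v_n + \nabla_s v_b$, $\gamma = (\omega\alpha + \nabla_s\beta)/\kappa$ and using the constraint $\nabla_s v_t = \kappa v_n$ reproduces every term of the stated formula, including the $2\kappa\omega v_n$ recombination and the cancellation of the two $\omega v_t\nabla_s\kappa/\kappa$ contributions; likewise the $(\that,\nhat)$ entry $\partial_\tau\kappa = \nabla_s\alpha - \omega\beta$ reproduces the second constraint of \cref{eq:dt-constaints-tube}. One point in your favour: your explicit justification that $\partial_\tau$ and $\nabla_s$ commute, via $\partial_\tau|t| = |t|(\nabla_s v_t - \kappa v_n) = 0$, is left implicit in the paper, whose Frenet-frame time-derivative lemma already presumes this interchange (its stated $\that$-component of $\partial_\tau\that$ can only be consistent with $|\that|=1$ when the constraint holds), so making the hypothesis visible is an improvement. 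As to what each approach buys: yours is self-contained in the frame equations along the curve, needs none of the ambient $(\sigma,\theta)$ machinery, and delivers the $\kappa$ and $\omega$ evolution equations simultaneously as two entries of one identity; the paper's version instead doubles as a certification that the angular coordinate $\theta$ of its ambient coordinate system remains well defined under evolution, which fits the organisation of the rest of that section.
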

    \begin{proof}
    This follows from enforcing that the commutator $\partial_t\nabla \theta-\nabla\partial_t \theta$ is identically zero.
    \end{proof}
\subsection{Application: Curve Boundary Layers}
\label{sec:asymptotics-tube}
This geometric machinery allows us to describe all vector calculus near moving curves.
Similar to the surface case, we rescale our signed-distance coordinate by $\eps$
	\begin{align}
    \sigma &= \eps \xi, &
    h_s^{-1} &= \left(\sum_{k=0}^\infty (\eps \xi \cs \kappa)^k \nabla_s \right).
	\end{align}
\begin{remark}
    For convenience in the remainder of the section we will abbreviate
    \begin{align}
    \cos(\theta+\phi)&= \cs, &
    \sin(\theta+\phi)&= \sn.
    \end{align}
\end{remark}
From this rescaling, the spatial and temporal derivatives are rescaled by
    \begin{align}
    \nabla &= \hat{t}_s\left(\sum_{k=0}^\infty (\eps \xi \kappa \cs)^k \right)\nabla_s + \frac{\hat{t}_{\sigma }}{\varepsilon }\partial_{\xi } + \frac{\hat{t}_{\theta }}{\varepsilon  \xi }\partial_{\theta } ,\\
    \partial_t &= \partial_{\tau } 
        -\left(\sum_{k=0}^\infty(\eps \xi \kappa \cs)^k\right){\left(v_t-\varepsilon  \xi \alpha'\right)}\nabla_s 
	-\frac{v_\sigma   }{\varepsilon }\partial_{\xi } 
	-\frac{\left(v_\theta+\varepsilon  \xi \gamma' \right) }{\varepsilon  \xi }\partial_{\theta } 
	\end{align}
The remaining operators result from applying or contracting these operators, using connection coefficients to determine values for higher rank tensors, and solving order-by-order in $\eps$.
    \begin{align}
    \nabla f &= \left(\sum_{k=0}^\infty (\eps \xi \kappa \cs)^k \nabla_s f\right)\hat{t}_s
        + \frac{\partial_{\xi } f}{\varepsilon }\hat{t}_{\sigma }
        + \frac{\partial_{\theta } f}{\varepsilon  \xi }\hat{t}_{\theta },\\
    \nabla u &= \left(\sum_{k=0}^\infty(\eps \xi \kappa c)^k\right)\hat{t}_s\otimes\left(
        \left(\nabla_s u_s-{\kappa  \left(c u_{\sigma }-s u_{\theta }\right)}\right)\hat{t}_s
        + \left({c \kappa  u_s+\nabla_s u_{\sigma }}\right)\hat{t}_{\sigma }
        - \left(\kappa  s u_s+\nabla_s u_{\theta }\right)\hat{t}_{\theta }\right)\nonumber\\
        &\quad
        + \left(\frac{\partial_{\xi } u_s}{\varepsilon }\right)\hat{t}_{\sigma }\otimes\hat{t}_s
        + \left(\frac{\partial_{\xi } u_{\sigma }}{\varepsilon }\right)\hat{t}_{\sigma }\otimes\hat{t}_{\sigma }
        + \left(\frac{\partial_{\xi } u_{\theta }}{\varepsilon }\right)\hat{t}_{\sigma }\otimes\hat{t}_{\theta }\nonumber\\
        &\quad
        + \left(\frac{\partial_{\theta } u_s}{\varepsilon  \xi }\right)\hat{t}_{\theta }\otimes\hat{t}_s
        + \left(\frac{\partial_{\theta } u_{\sigma }-u_{\theta }}{\varepsilon  \xi }\right)\hat{t}_{\theta }\otimes\hat{t}_{\sigma }
        + \left(\frac{\partial_{\theta } u_{\theta }+u_{\sigma }}{\varepsilon  \xi }\right)\hat{t}_{\theta }\otimes\hat{t}_{\theta },\\
    \nabla \cdot u &= 
        \left(\sum_{k=0}^\infty(\eps \xi \kappa c)^k\right) (\nabla_s u_s-\kappa  \left(c u_{\sigma }-s u_{\theta }\right)) + \frac{\partial_\xi u_\sigma}{\eps}  + \frac{u_\sigma+\partial_\theta u_\theta}{\eps\xi},\\
    \Delta f &= 
        \frac{\frac{\partial_{\theta }^2 f}{\xi ^2}
        +\frac{\partial_{\xi } f}{\xi }+\partial_{\xi }^2 f}{\varepsilon ^2}
        +\frac{\frac{\kappa  \sn \partial_{\theta } f}{\xi }-\cs \kappa  \partial_{\xi } f}{\varepsilon }
        +\left(\nabla_s^2 f-\cs^2 \kappa ^2 \xi  \partial_{\xi } f+\cs \kappa ^2 \sn \partial_{\theta } f\right)+O\left(\varepsilon\right),\\
       \Delta u&=
        \left(\Delta u_s -\kappa  \left(2 \cs \nabla_s u_{\sigma }-2 \sn \nabla_s u_{\theta }+\omega  \left(\cs u_{\theta }+\sn u_{\sigma }\right)\right)+\nabla_s \kappa  \left(\sn u_{\theta }-\cs u_{\sigma }\right)+\kappa ^2 \left(-u_s\right)\right)\hat{t}_s\nonumber\\
        &+ 
        \left(\Delta u_\sigma-\frac{2 \partial_{\theta } u_{\theta }+u_{\sigma }}{\varepsilon ^2 \xi ^2}-\frac{\kappa  \sn u_{\theta }}{\varepsilon  \xi }+\left(-\cs^2 \kappa ^2 u_{\sigma }+\kappa  \left(2 \cs \nabla_s u_s+\sn \omega  u_s\right)+\cs u_s \nabla_s \kappa \right)\right)\hat{t}_{\sigma }\nonumber\\
        &+ 
        \left(\Delta u_\theta + \frac{2 \partial_{\theta } u_{\sigma }-u_{\theta }}{\varepsilon ^2 \xi ^2}+\frac{\kappa  \sn u_{\sigma }}{\varepsilon  \xi }\right.\nonumber\\
        & \quad +\left.\left(\kappa  \left(\cs \omega  u_s-2 \sn \nabla_s u_s\right)-\sn u_s \nabla_s \kappa +\kappa ^2 \left(2\sn\cs u_{\sigma } -\sn^2 u_{\theta }\right)\right)\right)\hat{t}_{\theta } +O\left(\varepsilon\right),\\
        \nabla \times u &=
        \left(\frac{\xi  \partial_{\xi } u_{\theta }-\partial_{\theta } u_{\sigma }+u_{\theta }}{\varepsilon  \xi }\right)\hat{t}_s\nonumber\\
        &\quad
        + \left(\frac{\partial_{\theta } u_s}{\varepsilon  \xi }+\left(\kappa  \sn u_s-\nabla_s u_{\theta }\right)+\varepsilon  \left(\cs \kappa ^2 \xi  \sn u_s-\cs \kappa  \xi  \nabla_s u_{\theta }\right)\right)\hat{t}_{\sigma }\nonumber\\
        &\quad
        + \left(-\frac{\partial_{\xi } u_s}{\varepsilon }+\left(\cs \kappa  u_s+\nabla_s u_{\sigma }\right)+\varepsilon  \left(\cs^2 \kappa ^2 \xi  u_s+\cs \kappa  \xi  \nabla_s u_{\sigma }\right)\right)\hat{t}_{\theta }+O\left(\varepsilon ^2\right),\\
    \partial_t f &= 
        -\frac{v_\sigma \partial_\xi f + \xi^{-1} v_\theta \partial_\theta f}{\eps} + \partial_{\tau } f
        -\gamma'\partial_{\theta } f
        -\left(\sum_{k=0}^\infty(\eps \xi \kappa c)^k\right){\left(v_t-\varepsilon  \xi  \alpha'\right)}\nabla_s f,\\
    \partial_t u &=
        \left(\left(u_{\sigma } \left(\A' v_t-\alpha'\right)+u_{\theta } \left(\B' v_t-\beta'\right)\right)-\varepsilon  \xi  \left(\A' u_{\sigma }+\B' u_{\theta }\right) \left(\alpha'-\cs \kappa  v_t\right)\right)\hat{t}_s\nonumber\\
        &
        + \left(\frac{u_{\theta } v_{\theta }}{\varepsilon  \xi }+u_s \left(\alpha'-\A' v_t\right)+\A' \varepsilon  \xi  u_s \left(\alpha'-\cs \kappa  v_t\right)\right)\hat{t}_{\sigma }\nonumber\\
        &
        + \left(-\frac{u_{\sigma } v_{\theta }}{\varepsilon  \xi }+u_s \left(\beta'-\B' v_t\right)+\B' \varepsilon  \xi  u_s \left(\alpha'-\cs \kappa  v_t\right)\right)\hat{t}_{\theta }+O\left(\varepsilon ^2\right)
    \end{align}
where we defined $\A' = h_s \A=\cos(\theta+\phi)\kappa, \B'=h_s \B=-\sin(\theta+\phi)\kappa$.

\section{Conclusion}
\label{sec:conclusion}
We provide an elementary derivation of an orthogonal coordinate system for neighbourhoods of evolving smooth surfaces and curves based on the signed-distance function.
We go beyond previous works on the signed-distance function to provide proofs and a computational algebra framework\footnote{Available at \href{https://github.com/ericwhester/signed-distance-code}{github.com/ericwhester/signed-distance-code}.} for a wide range of useful vector calculus identities.
Our results thereby enable consistent accounting of geometric effects in the derivation of boundary layer asymptotics for a wide range of physical systems.

\section{Acknowledgements}
Eric Hester wishes to acknowledge support from the American Mathematical Society and the Simons Foundations through the AMS-Simons Travel Grant.
We thank Alexander Morozov for helpful comments.

\bibliographystyle{siam}
\bibliography{signed-distance-function.bib}

\begin{thebibliography}{10}

\bibitem{AmbrosioGeometricEvolutionProblems2000}
{\sc L.~Ambrosio}, {\em Geometric Evolution Problems, Distance Function and
  Viscosity Solutions}, {Springer Berlin Heidelberg}, {Berlin, Heidelberg},
  2000, pp.~5--93.

\bibitem{BatchelorIntroductionFluidDynamics2000}
{\sc G.~Batchelor}, {\em An {{Introduction}} to {{Fluid Dynamics}}}, no.~chr in
  Cambridge {{Mathematical Library}}, {Cambridge University Press}, 2000.

\bibitem{BenderAdvancedMathematicalMethods1999}
{\sc C.~M. Bender and S.~A. Orszag}, {\em Advanced {{Mathematical Methods}} for
  {{Scientists}} and {{Engineers I}}: {{Asymptotic Methods}} and {{Perturbation
  Theory}}}, Bender,{{C}}.{{M}}.;{{Orszag}},{{S}}.{{A}}.:
  {{Adv}}.{{Math}}.{{Methods Scientists}},{{Eng}}., {Springer-Verlag}, {New
  York}, 1999.

\bibitem{BerryUmbilicPointsGaussian1977}
{\sc M.~V. Berry and J.~H. Hannay}, {\em Umbilic points on {{Gaussian}} random
  surfaces}, Journal of Physics A: Mathematical and General, 10 (1977),
  pp.~1809--1821.

\bibitem{BishopThereMoreOne1975}
{\sc R.~L. Bishop}, {\em There is {{More}} than {{One Way}} to {{Frame}} a
  {{Curve}}}, The American Mathematical Monthly, 82 (1975), pp.~246--251.

\bibitem{Cartancertainesexpressionsdifferentielles1899}
{\sc {\'E}.~Cartan}, {\em Sur certaines expressions diff\'erentielles et le
  probl\`eme de {{Pfaff}}}, Annales scientifiques de l'\'Ecole Normale
  Sup\'erieure, 16 (1899), pp.~239--332.

\bibitem{Cartandeformationhypersurfacesdans1917}
{\sc E.~Cartan}, {\em La d\'eformation des hypersurfaces dans l'espace conforme
  r\'eel \`a \$n \textbackslash ge 5\$ dimensions}, Bulletin de la Soci\'et\'e
  Math\'ematique de France, 45 (1917), pp.~57--121.

\bibitem{christoffelUeberTransformationHomogenen1869}
{\sc E.~B. Christoffel}, {\em {Ueber die Transformation der homogenen
  Differentialausdr\"ucke zweiten Grades.}}, Journal f\"ur die reine und
  angewandte Mathematik, 70 (1869), pp.~46--70.

\bibitem{CiarletIntroductionDifferentialGeometry2005}
{\sc P.~G. Ciarlet}, {\em An Introduction to Differential Geometry with
  Applications to Elasticity}, {Springer}, {Dordrecht}, 2005.

\bibitem{CookeBoundarylayersthree1962}
{\sc J.~C. Cooke and M.~G. Hall}, {\em Boundary layers in three dimensions},
  Progress in Aerospace Sciences, 2 (1962), pp.~222--282.

\bibitem{DaiCompetitiveGeometricEvolution2015}
{\sc S.~Dai and K.~Promislow}, {\em Competitive {{Geometric Evolution}} of
  {{Amphiphilic Interfaces}}}, SIAM Journal on Mathematical Analysis, 47
  (2015), pp.~347--380.

\bibitem{DarbouxLeconstheoriegenerale2001}
{\sc G.~Darboux}, {\em {Le\c{c}ons sur la th\'eorie g\'en\'erale des surfaces
  et les applications g\'eom\'etriques du calcul infinit\'esimal}}, {American
  Mathematical Soc.}, Apr. 2001.

\bibitem{DeckelnickComputationgeometricpartial2005}
{\sc K.~Deckelnick, G.~Dziuk, and C.~M. Elliott}, {\em Computation of geometric
  partial differential equations and mean curvature flow}, Acta Numerica, 14
  (2005), pp.~139--232.

\bibitem{CarmoDifferentialgeometrycurves1976}
{\sc M.~P. do~Carmo}, {\em Differential Geometry of Curves and Surfaces},
  {Prentice-Hall}, {Englewood Cliffs, N.J}, 1976.

\bibitem{DykePerturbationmethodsfluid1964}
{\sc M.~V. Dyke}, {\em Perturbation Methods in Fluid Mechanics}, {Academic
  Press}, 1964.

\bibitem{EckhausBoundaryLayersLinear1972}
{\sc W.~Eckhaus}, {\em Boundary {{Layers}} in {{Linear Elliptic Singular
  Perturbation Problems}}}, SIAM Review, 14 (1972), pp.~225--270.

\bibitem{EichelbrennerThreeDimensionalBoundaryLayers1973}
{\sc E.~A. Eichelbrenner}, {\em Three-{{Dimensional Boundary Layers}}}, Annual
  Review of Fluid Mechanics, 5 (1973), pp.~339--360.

\bibitem{EulerRecherchescourburesurfaces1767}
{\sc L.~Euler}, {\em Recherches sur la courbure des surfaces}, M\'emoires de
  l'Acad\'emie des Sciences de Berlin, 16 (1767), pp.~119--143.

\bibitem{FedererCurvaturemeasures1959}
{\sc H.~Federer}, {\em Curvature measures}, Transactions of the American
  Mathematical Society, 93 (1959), pp.~418--418.

\bibitem{FrenetCourbesDoubleCourbure1852}
{\sc F.~Frenet}, {\em {Sur les courbes \`a double courbure.}}, Journal de
  Math\'ematiques Pures et Appliqu\'ees,  (1852), pp.~437--447.

\bibitem{GaussDisquisitionesgeneralescirca1828}
{\sc C.~F. Gauss}, {\em {Disquisitiones generales circa superficies curvas}},
  {Typis Dieterichianis}, 1828.

\bibitem{GieAsymptoticAnalysisStokes2010}
{\sc G.-M. Gie, M.~Hamouda, and R.~Temam}, {\em Asymptotic analysis of the
  {{Stokes}} problem on general bounded domains: The case of a characteristic
  boundary}, Applicable Analysis, 89 (2010), pp.~49--66.

\bibitem{GieRecentProgressesBoundary2015}
{\sc G.-M. Gie, C.-Y. Jung, and R.~Temam}, {\em Recent progresses in boundary
  layer theory}, Discrete and Continuous Dynamical Systems, 36 (Wed Sep 30
  20:00:00 EDT 2015), pp.~2521--2583.

\bibitem{GieBoundaryLayerAnalysis2012}
{\sc G.-M. Gie and J.~P. Kelliher}, {\em Boundary layer analysis of the
  {{Navier}}\textendash{{Stokes}} equations with generalized {{Navier}}
  boundary conditions}, Journal of Differential Equations, 253 (2012),
  pp.~1862--1892.

\bibitem{GigaSurfaceevolutionequations2006}
{\sc Y.~Giga}, {\em Surface Evolution Equations: A Level Set Approach}, no.~v.
  99 in Monographs in Mathematics, {Birkh\"auser Verlag}, {Basel, Switzerland ;
  Boston [Mass.]}, 2006.

\bibitem{GilbargEllipticPartialDifferential2015}
{\sc D.~Gilbarg and N.~S. Trudinger}, {\em Elliptic {{Partial Differential
  Equations}} of {{Second Order}}}, {Springer}, Mar. 2015.

\bibitem{GrayTubes2004}
{\sc A.~Gray}, {\em Tubes}, {Springer}, {Basel}, 2004.

\bibitem{Greshopressureboundaryconditions1987}
{\sc P.~M. Gresho and R.~L. Sani}, {\em On pressure boundary conditions for the
  incompressible {{Navier-Stokes}} equations}, International Journal for
  Numerical Methods in Fluids, 7 (1987), pp.~1111--1145.

\bibitem{HayesThreedimensionalBoundaryLayer1951}
{\sc W.~D. Hayes}, {\em The {{Three-dimensional Boundary Layer}}}, {U.S. Naval
  Ordinance Test Station, Inyokern}, 1951.

\bibitem{HesterImprovedPhasefieldModels2020}
{\sc E.~W. Hester, L.-A. Couston, B.~Favier, K.~J. Burns, and G.~M. Vasil},
  {\em Improved phase-field models of melting and dissolution in
  multi-component flows}, Proceedings of the Royal Society A: Mathematical,
  Physical and Engineering Sciences, 476 (2020), p.~20200508.

\bibitem{HesterImprovingAccuracyVolume2021}
{\sc E.~W. Hester, G.~M. Vasil, and K.~J. Burns}, {\em Improving accuracy of
  volume penalised fluid-solid interactions}, Journal of Computational Physics,
  430 (2021), p.~110043.

\bibitem{HolmesIntroductionperturbationmethods2013}
{\sc M.~H. Holmes}, {\em Introduction to Perturbation Methods}, no.~20 in Texts
  in Applied Mathematics, {Springer}, {New York}, 2nd ed~ed., 2013.

\bibitem{HotellingTubesSpheresnSpaces1939}
{\sc H.~Hotelling}, {\em Tubes and {{Spheres}} in n-{{Spaces}}, and a {{Class}}
  of {{Statistical Problems}}}, American Journal of Mathematics, 61 (1939),
  pp.~440--460.

\bibitem{HowarthXXVBoundaryLayer1951}
{\sc L.~Howarth}, {\em {{XXV}}. {{The}} boundary layer in three dimensional
  flow.\textemdash{{Part I}}. {{Derivation}} of the equations for flow along a
  general curved surface}, The London, Edinburgh, and Dublin Philosophical
  Magazine and Journal of Science, 42 (1951), pp.~239--243.

\bibitem{KevorkianPerturbationMethodsApplied1981}
{\sc J.~Kevorkian and J.~D. Cole}, {\em Perturbation {{Methods}} in {{Applied
  Mathematics}}}, Applied {{Mathematical Sciences}}, {Springer-Verlag}, {New
  York}, 1981.

\bibitem{LeBarsInterfacialConditionsPure2006}
{\sc M.~Le~Bars and M.~G. Worster}, {\em Interfacial conditions between a pure
  fluid and a porous medium: Implications for binary alloy solidification},
  Journal of Fluid Mechanics; Cambridge, 550 (2006), pp.~149--173.

\bibitem{LiSolvingPDEsComplex2009}
{\sc X.~Li, J.~Lowengrub, A.~Ratz, and A.~Voigt}, {\em Solving {{PDEs}} in
  complex geometries: A diffuse domain approach}, Communications in
  mathematical sciences, 7 (2009), pp.~81--107.

\bibitem{MagerThreeDimensionalLaminarBoundary1954}
{\sc A.~Mager}, {\em Three-{{Dimensional Laminar Boundary Layer}} with {{Small
  Cross-Flow}}}, Journal of the Aeronautical Sciences, 21 (1954), pp.~835--845.

\bibitem{MayostApplicationssigneddistance2014}
{\sc D.~Mayost}, {\em Applications of the Signed Distance Function to Surface
  Geometry}, PhD thesis, University of Toronto, 2014.

\bibitem{MooreThreeDimensionalBoundaryLayer1956}
{\sc F.~K. Moore}, {\em Three-dimensional boundary layer theory}, in Advances
  in Applied Mechanics, Elsevier, 1956, pp.~159--228.

\bibitem{OsherLevelSetMethods2001}
{\sc S.~Osher and R.~P. Fedkiw}, {\em Level {{Set Methods}}: {{An Overview}}
  and {{Some Recent Results}}}, Journal of Computational Physics, 169 (2001),
  pp.~463--502.

\bibitem{OsherLevelsetmethods2003}
{\sc S.~Osher and R.~P. Fedkiw}, {\em Level Set Methods and Dynamic Implicit
  Surfaces}, Applied Mathematical Sciences, {Springer}, 2003.

\bibitem{OsherFrontsPropagatingCurvaturedependent1988}
{\sc S.~Osher and J.~A. Sethian}, {\em Fronts propagating with
  curvature-dependent speed: {{Algorithms}} based on {{Hamilton-Jacobi}}
  formulations}, Journal of Computational Physics, 79 (1988), pp.~12--49.

\bibitem{PandaDynamicsViscousFibers2006}
{\sc S.~Panda}, {\em The {{Dynamics}} of {{Viscous Fibers}}}, PhD thesis,
  Technische Universit\"at Kaiserslautern, 2006.

\bibitem{PandaSystematicDerivationAsymptotic2008}
{\sc S.~Panda, N.~Marheineke, and R.~Wegener}, {\em Systematic derivation of an
  asymptotic model for the dynamics of curved viscous fibers}, Mathematical
  Methods in the Applied Sciences, 31 (2008), pp.~1153--1173.

\bibitem{PlappPhaseFieldModels2012}
{\sc M.~Plapp}, {\em Phase-{{Field Models}}}, in Multiphase {{Microfluidics}}:
  {{The Diffuse Interface Model}}, R.~Mauri, ed., {{CISM Courses}} and
  {{Lectures}}, {Springer}, {Vienna}, 2012, pp.~129--175.

\bibitem{PrandtlVerhandlungenDrittenInternationalen1905}
{\sc L.~Prandtl}, {\em {Verhandlungen Des Dritten Internationalen
  Mathematikerkongresses in Heidelberg: Vom 8. Bis 13. August 1904}}, in
  {Verhandlungen Des Dritten Internationalen Mathematikerkongresses in
  Heidelberg: Vom 8. Bis 13. August 1904}, {Teubner}, {Leipzig, Germany}, 1905,
  p.~770.

\bibitem{RicciMethodescalculdifferentiel1900}
{\sc M.~M.~G. Ricci and T.~{Levi-Civita}}, {\em {M\'ethodes de calcul
  diff\'erentiel absolu et leurs applications}}, Mathematische Annalen, 54
  (1900), pp.~125--201.

\bibitem{SedneyAspectsThreeDimensionalBoundary1957}
{\sc R.~Sedney}, {\em Some {{Aspects}} of {{Three-Dimensional Boundary Layer
  Flows}}}, Quarterly of Applied Mathematics, 15 (1957), pp.~113--122.

\bibitem{SerretQuelquesFormulesRelatives1851}
{\sc J.-A. Serret}, {\em {Sur quelques formules relatives \`a la th\'eorie des
  courbes \`a double courbure.}}, Journal de Math\'ematiques Pures et
  Appliqu\'ees,  (1851), pp.~193--207.

\bibitem{Serrinproblemdirichletquasilinear1969}
{\sc J.~Serrin and W.~K. Hayman}, {\em The problem of dirichlet for quasilinear
  elliptic differential equations with many independent variables},
  Philosophical Transactions of the Royal Society of London. Series A,
  Mathematical and Physical Sciences, 264 (1969), pp.~413--496.

\bibitem{ShikhmurzaevSpirallingLiquidJets2017}
{\sc Y.~D. Shikhmurzaev and G.~M. Sisoev}, {\em Spiralling liquid jets:
  Verifiable mathematical framework, trajectories and peristaltic waves},
  Journal of Fluid Mechanics, 819 (2017), pp.~352--400.

\bibitem{SpivakComprehensiveIntroductionDifferential1999}
{\sc M.~Spivak}, {\em A {{Comprehensive Introduction}} to {{Differential
  Geometry}}}, {Publish or Perish, Inc.}, 3rd~ed., 1999.

\bibitem{StewartsonAsymptoticExpansionsTheory1957}
{\sc K.~Stewartson}, {\em On {{Asymptotic Expansions}} in the {{Theory}} of
  {{Boundary Layers}}}, Journal of Mathematics and Physics, 36 (1957),
  pp.~173--191.

\bibitem{VasilTensorCalculusPolar2016}
{\sc G.~M. Vasil, K.~J. Burns, D.~Lecoanet, S.~Olver, B.~P. Brown, and J.~S.
  Oishi}, {\em Tensor calculus in polar coordinates using {{Jacobi}}
  polynomials}, Journal of Computational Physics, 325 (2016), pp.~53--73.

\bibitem{VasilTensorCalculusSpherical2019}
{\sc G.~M. Vasil, D.~Lecoanet, K.~J. Burns, J.~S. Oishi, and B.~P. Brown}, {\em
  Tensor calculus in spherical coordinates using {{Jacobi}} polynomials.
  {{Part-I}}: {{Mathematical}} analysis and derivations}, Journal of
  Computational Physics: X, 3 (2019), p.~100013.

\bibitem{VasilDynamicBifurcationsPattern2011}
{\sc G.~M. Vasil and M.~R.~E. Proctor}, {\em Dynamic bifurcations and pattern
  formation in melting-boundary convection}, Journal of Fluid Mechanics, 686
  (2011), pp.~77--108.

\bibitem{WeylVolumeTubes1939}
{\sc H.~Weyl}, {\em On the {{Volume}} of {{Tubes}}}, American Journal of
  Mathematics, 61 (1939), pp.~461--472.

\bibitem{WilsonVectorAnalysisTextbook1901}
{\sc E.~B. Wilson and J.~W. Gibbs}, {\em Vector {{Analysis}}: {{A Text-book}}
  for the {{Use}} of {{Students}} of {{Mathematics}} \& {{Physics}}: {{Founded
  Upon}} the {{Lectures}} of {{J}}. {{W}}. {{Gibbs}}}, {Scribner}, 1901.

\end{thebibliography}

\end{document}